\documentclass[11pt,a4paper]{article}
\usepackage[a4paper,margin=1in]{geometry}
\usepackage[T1]{fontenc}
\usepackage[utf8]{inputenc}
\usepackage{amsmath,amssymb,amsthm,mathtools,bm}
\usepackage{mathrsfs}
\usepackage{booktabs,tabularx,array}
\usepackage{enumitem}
\usepackage{microtype}
\usepackage{xcolor}
\usepackage{float}
\usepackage{tikz}
\usepackage{pgfplots}
\pgfplotsset{compat=1.18}
\usepackage[colorlinks=true,linkcolor=blue!60!black,citecolor=blue!60!black,urlcolor=blue!60!black]{hyperref}
\usepackage[nameinlink,noabbrev]{cleveref}
\setlist[itemize]{topsep=3pt,itemsep=2pt,leftmargin=2em}
\setlist[enumerate]{topsep=3pt,itemsep=2pt,leftmargin=2em}
\allowdisplaybreaks
\theoremstyle{plain}
\newtheorem{theorem}{Theorem}[section]
\newtheorem{proposition}[theorem]{Proposition}

\newtheorem{corollary}[theorem]{Corollary}
\theoremstyle{definition}
\newtheorem{definition}[theorem]{Definition}

\theoremstyle{remark}
\newtheorem{remark}[theorem]{Remark}

\newcommand{\R}{\mathbb{R}}
\newcommand{\C}{\mathbb{C}}
\newcommand{\E}{\mathbb{E}}

\newcommand{\Var}{\operatorname{Var}}
\newcommand{\Tr}{\operatorname{Tr}}
\newcommand{\rank}{\operatorname{rank}}
\newcommand{\diag}{\operatorname{diag}}
\newcommand{\Comm}{\operatorname{Comm}}
\newcommand{\Herm}{\operatorname{Herm}}
\newcommand{\End}{\operatorname{End}}
\newcommand{\Span}{\operatorname{span}}
\newcommand{\MMD}{\operatorname{MMD}}
\newcommand{\FIM}{\mathbf F}

\newcommand{\calH}{\mathcal H}
\newcommand{\calO}{\mathcal O}
\newcommand{\calS}{\mathcal S}
\newcommand{\calX}{\mathcal X}
\newcommand{\calY}{\mathcal Y}
\newcommand{\calZ}{\mathcal Z}
\newcommand{\calF}{\mathcal F}
\newcommand{\calL}{\mathcal L}
\newcommand{\calA}{\mathcal A}
\newcommand{\calT}{\mathcal T}
\newcommand{\push}{\mathbin{\#}}
\newcommand{\norm}[1]{\left\lVert #1\right\rVert}
\newcommand{\abs}[1]{\left\lvert #1\right\rvert}
\newcommand{\ip}[2]{\left\langle #1,#2\right\rangle}
\newcommand{\ket}[1]{\left|#1\right\rangle}
\newcommand{\bra}[1]{\left\langle#1\right|}
\newcommand{\braket}[2]{\left\langle#1\middle|#2\right\rangle}

\title{Symmetry Discovery in Quantum Learning:\\
  Observable-Level and Task-Level Inference from Finite Measurements}
\author{Zeyu Chen}
\date{}

\begin{document}
\maketitle

\begin{abstract}
Symmetry reduces the capacity of a quantum learning model, but the imposed group must match both the measured information and the label transformation. We establish a finite-measurement theory for inferring this group from candidate transformations. The central structural result identifies observable-invisible transformations with the stabilizer of a projected state whenever the probe span is invariant. It turns recovered generators into a valid subgroup and identifies the continuous invisible space with its Lie algebra. For finite dictionaries, an unbiased shadow statistic distinguishes zero from positive squared expectation discrepancies with an inverse-gap measurement rate, improving the inverse-square-gap rate of uniform discrepancy estimation. A commuting qubit lower bound proves the gap dependence optimal at fixed snapshot scale, and simultaneous intervals support data-dependent tolerances. Task validation then tests either the joint distribution through a characteristic kernel or its encoded mean through a classical--quantum discrepancy. An exact group-average identity relates the latter to joint-state asymmetry and specifies its conversion to binary task breaking mass. Projection bias quantifies the cost of excessive symmetry, while an $\ell_1$ readout bound quantifies the capacity gained by relaxing it. At an invariant pure-state backbone, retained and nontrivial breaking sectors are Fisher-orthogonal. Ising-chain calculations connect finite-shot recovery, label-dependent symmetry, and physical sector drift. These results determine which symmetry the measurements support and provide the statistical and geometric basis for a subsequent release decision.

\end{abstract}

\section{Introduction}

A symmetry constraint determines which distinctions a quantum learning model can represent. Equivariant circuits use a prescribed group to organize their trainable operators and can improve statistical efficiency and trainability under suitable architectural assumptions \cite{CongChoiLukin2019QCNN,Meyer2023,Nguyen2024}. This structure matters in quantum models, where the reachable operator algebra also governs barren-plateau mechanisms \cite{McClean2018Barren,Larocca2022DLA,Ragone2024}. Its value nevertheless depends on the task. A spin-flip-symmetric ensemble can carry an odd magnetization label: an equivariant predictor should reverse that label, whereas an invariant predictor suppresses the distinction to be learned.

Inferring a suitable constraint requires separate evidence about the state, the measured observables, and the input--label law. Local probes may miss a state change, and an invariant ensemble need not support invariant labels. State-preserving parameter redundancy introduces a further, geometric notion of symmetry. The central question is therefore which physical transformations finite measurements can certify, when recovered generators define a group, and how that group should act on the labels.

Our answer starts with the information retained by the measurement interface. If the probe span is invariant under the candidate action, its Hilbert--Schmidt projection of the state contains exactly the information needed for observable-level symmetry. The invisible transformations form the stabilizer of this projected state; their infinitesimal null space is its Lie algebra. Exact recovered generators consequently generate an invisible subgroup, while approximate generator discrepancies control finite words. The invariant-span condition also identifies the obstruction for restricted probes: a bracket-closed tangent null space can still permit second-order changes of the measured expectations.

The statistical consequence is a sharp gap dependence for finite-dictionary recovery. With bounded classical-shadow snapshots, an unbiased quadratic statistic distinguishes zero from positive squared discrepancies using $M=O(R_{\calO}^2\gamma_{\calO}^{-1}\log(m|\calS_0|/\delta))$ measurements. Here $m$ is the probe count, $\calS_0$ the fixed candidate dictionary, $R_{\calO}$ a discrepancy-snapshot bound, and $\gamma_{\calO}$ the positive squared-discrepancy gap. The improvement over uniform additive estimation follows from the smaller fluctuation scale at a zero mean. A commuting qubit construction proves the inverse-gap dependence necessary, even with collective measurements. Simultaneous intervals extend the same analysis to tolerances chosen from the data, and a covariance estimator gives stable recovery of continuous invisible directions.

Task validation determines how the inferred transformation enters the predictor. A characteristic-kernel discrepancy identifies invariance of the joint input--label distribution. For quantum inputs, a classical--quantum mean-state discrepancy admits paired-copy and reusable-shadow estimators, and its group average equals the asymmetry mass of the joint representation. The binary specialization separates class-average asymmetry from the symmetry sector carrying the label. These identities explain why the same spin flip can be retained with a sign-transforming label and rejected with an invariant label. Projection bias and an $\ell_1$ readout bound quantify the corresponding approximation and capacity costs; representation theory supplies Fisher orthogonality between a symmetric pure-state backbone and nontrivial release sectors.

The resulting inference layer supplies a retained physical group, a supported label action, and confidence bounds at the chosen resolution. Statistical release theory uses these objects to decide whether and how far to activate breaking directions \cite{CompanionRelease}. Quantum natural gradient (QNG) on quotient spaces handles the subsequent optimization once state-preserving parameter redundancy has been independently identified \cite{ChenQNGQuotient2026}. The Ising example connects these roles through explicit recovery curves, task discrepancies, Fisher blocks, and sector drift.

\subsection*{Related work}

Quantum convolutional and equivariant architectures impose symmetry as a prior \cite{CongChoiLukin2019QCNN,Meyer2023,Nguyen2024}; dynamical Lie-algebra methods characterize associated trainability mechanisms \cite{Larocca2022DLA,Ragone2024}. Variational quantum--classical algorithms instead search for unitaries that stabilize an unknown state \cite{Lu2024}. Classical methods infer transformation orbits or Lie generators from data and learned representations \cite{Krippendorf2020,Yang2023LieGAN,Moskalev2022LieGG}. Our recovery results concern a specified quantum measurement interface and a fixed finite dictionary or candidate Lie algebra, with explicit sampling margins and a separate test of the label action.

Quantum symmetry tests examine a prescribed state or channel action and its computational complexity \cite{Laborde2021Testing,Laborde2023Complexity}. Hilbert--Schmidt asymmetry estimators use two-copy overlaps \cite{Bandyopadhyay2023}, while coherent projector constructions resolve symmetric and other representation subspaces \cite{LaBordeRethinasamyWilde2024}. Full subgroup inference is also studied through the state hidden subgroup problem \cite{Bouland2025StateHSP}; its abelian algorithm uses representation measurements and achieves an inverse-overlap-gap copy bound \cite{Hinsche2026StateHSP}. These access models test full-state symmetry. Our projected-state characterization specifies the group visible to a restricted probe span and the additional assumptions needed to identify it with a physical stabilizer.

Classical shadows provide simultaneous observable estimates \cite{Huang2020ClassicalShadows}. Calibrated noise-aware variants modify their snapshot and variance bounds \cite{Chen2021RobustShadows,Koh2022RobustShadows}, and symmetry-adapted shadows reduce measurement cost when suitable symmetry information is already available \cite{SauvageLarocca2024}. Shadow-based singular-value methods learn local conservation laws of unknown dynamics \cite{Zhan2024Conservation}; the covariance used here tests infinitesimal candidate actions on a specified state. Equivariant-map learning between supplied representations addresses the subsequent construction of an embedding \cite{Bradshaw2025Maps}.

Relaxed equivariance in classical architectures uses residual pathways or learned mixing to accommodate imperfect symmetry \cite{Finzi2021RPP,Wang2022RelaxedEquivariance}. The statistical release theory of \cite{CompanionRelease} evaluates task gain, measurement uncertainty, and model selection after such directions are proposed. Our role is to infer the retained constraint and characterize its release geometry. The kernel tests, concentration arguments, and readout-complexity estimates draw on established statistical tools; the projected-state subgroup criterion, gap-sensitive recovery, and input--label bridge connect those tools to the distinct symmetry statements required by a quantum learning model.

\section{Setting and separation of symmetry levels}
\label{sec:setting}

Let $G$ be a finite or compact group, represented continuously by unitaries $g\mapsto U_g$ on a finite-dimensional Hilbert space $\calH$. All subgroup averages use normalized Haar measure on closed subgroups. The Hilbert--Schmidt inner product is $\ip{A}{B}_{\mathrm{HS}}=\Tr(A^\dagger B)$ with unnormalized trace; $\|\cdot\|_{\mathrm{HS}}$ and $\|\cdot\|_{\mathrm{op}}$ denote the Hilbert--Schmidt and operator norms. All confidence levels satisfy $0<\delta<1$.

For distributional statements, inputs and labels lie in Polish spaces with their Borel probability laws, and $G$ acts continuously on both spaces. For $z=(x,y)\in\calZ:=\calX\times\calY$, write
\begin{equation}
T_g(x,y):=(g\cdot x,\rho_{\calY}(g)y),
\label{eq:Tg}
\end{equation}
where $\rho_{\calY}$ is an orthogonal or unitary output representation.  For a distribution $P$ on $\calZ$, the pushforward under $T_g$ is written $T_g\push P$.

A \emph{state-level symmetry} of a density operator $\rho$ is an element $g$ such that
\begin{equation}
U_g\rho U_g^\dagger=\rho.
\end{equation}
An \emph{observable-level symmetry relative to $\calO$} is an element that leaves the expectations of a chosen observable family $\calO$ unchanged.  A \emph{task-level symmetry} is an element $g$ such that $T_g\push P=P$ for the joint distribution of supervised examples.  Finally, a \emph{parameter redundancy} is a group action on the parameter manifold that changes coordinates without changing the represented physical state or task function.  The last notion is handled by quotient-space QNG, while the first three are the discovery objects studied here.

Finite-dictionary recovery uses a nonempty finite dictionary $\calS_0\subseteq G$ and probe or feature maps fixed independently of its estimation samples. The candidate Lie algebra in continuous recovery obeys the same independence condition. These objects may be proposed using independent pilot data. A dictionary selected from a larger fixed family can reuse samples when the confidence event covers that entire family. The dictionary-relative target is
\begin{equation}
\calS_0\cap H_\star,
\end{equation}
where $H_\star$ is the relevant stabilizer at the level being tested.  At the state and task levels $H_\star$ is a subgroup, so $\calS_0\cap H_\star$ is closed under the group operation as far as the dictionary reaches.  At the observable level it need not be: $\{g:D_{\calO}(g;\rho)=0\}$ can contain $g$ and $g'$ while excluding $gg'$, because invisibility to a fixed probe family is not preserved by composition.  The invariant-span condition in \cref{thm:probe_stabilizer} guarantees closure and integration. Outside that condition the generated finite action needs further validation; algebraic or bracket closure alone does not establish its invisibility.  Under a subgroup guarantee, the model can impose $\overline{\langle\widehat{\calS}\rangle}$, the closure of the generated subgroup; continuity preserves exact invariance under closure. This subgroup need not exhaust the stabilizer. Approximate acceptance controls only the tested tolerance and, through \cref{thm:probe_stabilizer}, words of bounded length; it does not establish exact invariance of the generated group.

\section{Observable-level symmetry discovery from finite quantum measurements}
\label{sec:observable_discovery}

The input to quantum-native discovery is a quantum state, or a process snapshot whose relevant information is summarized by expectation values, together with a finite family of observables; the output is the subset of candidate generators invisible to those observables.

\subsection{Operator-algebraic discrepancy}

Fix a family of Hermitian observables
\begin{equation}
\calO=\{O_1,\ldots,O_m\}.
\end{equation}
Typical probes are Pauli strings, local projectors, order parameters, or sector projectors. For qubits, $X,Y,Z$ are the Pauli matrices and $X_i,Y_i,Z_i$ their site embeddings; write $M_x=\sum_iX_i$, $M_y=\sum_iY_i$, and $M_z=\sum_iZ_i$.  For a candidate generator $g\in\calS_0$, define
\begin{equation}
\widetilde O_j^{(g)}:=U_g^\dagger O_jU_g,
\qquad
\Delta_j^{(g)}:=O_j-\widetilde O_j^{(g)}.
\end{equation}

\begin{definition}[Observable symmetry discrepancy]
For a density operator $\rho$, define
\begin{equation}
D_{\calO}(g;\rho):=\frac1m\sum_{j=1}^m\left(\Tr(\Delta_j^{(g)}\rho)\right)^2.
\label{eq:Dobs}
\end{equation}
The element $g$ is called $\calO$-invisible on $\rho$ if $D_{\calO}(g;\rho)=0$.
\end{definition}

The statistic is deliberately observable-relative.  If $\calO$ is nonseparating, then $D_{\calO}(g;\rho)=0$ certifies invisibility to the measured linear span; full invariance of $\rho$ requires a separating probe family.  A probe family that separates the candidate orbit $\{U_g\rho U_g^\dagger:g\in G\}$ suffices for a full-state conclusion on that orbit. Informational completeness is a sufficient condition for arbitrary states, with its corresponding measurement cost.

\begin{theorem}[When invisible generators define a subgroup]
\label{thm:probe_stabilizer}
Let $\mathcal V=\Span_{\R}\calO$ be invariant under conjugation by the candidate compact group $G$, and let $\Pi_{\mathcal V}$ be its Hilbert--Schmidt orthogonal projector. Set $\sigma=\Pi_{\mathcal V}\rho$, which need not be a density operator. Then
\begin{equation}
\{g\in G:D_{\calO}(g;\rho)=0\}
=\{g\in G:U_g\sigma U_g^\dagger=\sigma\}=:H_{\mathcal V,\rho}
\label{eq:probe_subgroup}
\end{equation}
is a closed subgroup. For a compact Lie candidate group with represented Hermitian algebra $\mathfrak g$, the infinitesimal invisible directions are exactly $\{L\in\mathfrak g:[L,\sigma]=0\}$, the Lie algebra of this subgroup. If the probes are orthogonal with common squared Hilbert--Schmidt norm $c$, then
\begin{equation}
D_{\calO}(g;\rho)=\frac{c}{m}\|\sigma-U_g\sigma U_g^\dagger\|_{\mathrm{HS}}^2,
\qquad
\sqrt{D_{\calO}(g_1\cdots g_\ell;\rho)}
\le\sum_{j=1}^{\ell}\sqrt{D_{\calO}(g_j;\rho)}.
\label{eq:probe_word_bound}
\end{equation}
Thus exact invisible generators generate an invisible subgroup, and generators with discrepancy at most $\tau$ have discrepancy at most $\ell^2\tau$ on words of length $\ell$.
\end{theorem}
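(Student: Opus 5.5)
The plan is to reduce everything to the projected operator $\sigma=\Pi_{\mathcal V}\rho$. Each $O_j\in\mathcal V$ is Hermitian, so $\Tr(O_j\rho)=\ip{O_j}{\rho}_{\mathrm{HS}}=\ip{O_j}{\sigma}_{\mathrm{HS}}$. Write $\mathrm{Ad}_g(A)=U_gAU_g^\dagger$, a Hilbert--Schmidt unitary on operators. Invariance of $\mathcal V$ under conjugation, together with unitarity of $\mathrm{Ad}_g$, shows that $\mathcal V^\perp$ is also invariant, so $\Pi_{\mathcal V}$ commutes with every $\mathrm{Ad}_g$. Moreover, $\mathcal V$ is a real span of Hermitian operators, and conjugation preserves Hermiticity, so the projection onto its complexification (or onto the real span within Hermitian operators) is equivariant. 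We then get
\[
\Tr(\widetilde O_j^{(g)}\rho)=\ip{U_g^\dagger O_jU_g}{\sigma}_{\mathrm{HS}}=\ip{O_j}{U_g\sigma U_g^\dagger}_{\mathrm{HS}},
\]
hence $\Tr(\Delta_j^{(g)}\rho)=\ip{O_j}{\sigma-U_g\sigma U_g^\dagger}_{\mathrm{HS}}$. Since $\mathcal V$ is invariant, $U_g\sigma U_g^\dagger\in\mathcal V$, so the difference lies in $\mathcal V$. Consequently $D_{\calO}(g;\rho)=0$ exactly when this element of $\mathcal V=\Span_\R\calO$ is orthogonal to every $O_j$, i.e.\ when it vanishes. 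This gives the set equality \eqref{eq:probe_subgroup}. The stabilizer of a vector under a continuous representation is a closed subgroup, because $g\mapsto U_g\sigma U_g^\dagger$ is continuous and the stabilizer is the preimage of a point.

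For the Lie algebra statement, take the represented algebra to consist of Hermitian $L$ with one-parameter groups $e^{-itL}$. The subgroup $H_{\mathcal V,\rho}$ is closed, so Cartan's theorem makes it a Lie subgroup. Its Lie algebra is $\{L:e^{-itL}\sigma e^{itL}=\sigma\ \forall t\}$. Differentiating at $t=0$ gives $[L,\sigma]=0$, and conversely $[L,\sigma]=0$ implies that $e^{-itL}$ commutes with $\sigma$. The phrase ``infinitesimal invisible directions'' should be interpreted as the null space of the second-order expansion $D_{\calO}(e^{-itL};\rho)=\frac{t^2}{m}\sum_j\ip{O_j}{i[L,\sigma]}^2_{\mathrm{HS}}+O(t^3)$. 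Here $i[L,\sigma]\in\mathcal V$ (differentiate the invariance), so the quadratic form vanishes iff $[L,\sigma]=0$, again by the spanning argument.

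For the norm identity, we assume the $O_j$ are HS-orthogonal with $\|O_j\|_{\mathrm{HS}}^2=c$. Then $\{O_j/\sqrt c\}$ is an orthonormal basis of $\mathcal V$, and $\sigma-U_g\sigma U_g^\dagger$ is Hermitian and lies in $\mathcal V$. Parseval therefore gives $\sum_j\ip{O_j}{X}^2=c\|X\|_{\mathrm{HS}}^2$, which yields the displayed formula. For words, telescope
\[
\sigma-\mathrm{Ad}_{g_1\cdots g_\ell}\sigma=\sum_{k=1}^{\ell}\mathrm{Ad}_{g_1\cdots g_{k-1}}\bigl(\sigma-\mathrm{Ad}_{g_k}\sigma\bigr).
\]
Each $\mathrm{Ad}$ is an HS isometry, so the triangle inequality gives $\|\sigma-\mathrm{Ad}_{g_1\cdots g_\ell}\sigma\|_{\mathrm{HS}}\le\sum_k\|\sigma-\mathrm{Ad}_{g_k}\sigma\|_{\mathrm{HS}}$. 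Multiplying by $\sqrt{c/m}$ gives \eqref{eq:probe_word_bound}, and squaring gives the bound $\ell^2\tau$. Exact invisible generators lie in the subgroup $H_{\mathcal V,\rho}$, so the group they generate does too, and so does its closure.

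The main obstacle I expect is the equivariance of $\Pi_{\mathcal V}$, which needs care with the real-versus-complex span. I will handle it by working in the real inner-product space of Hermitian operators, where $\mathrm{Ad}_g$ is orthogonal and $\mathcal V$ is a real invariant subspace. Since $\rho$ is Hermitian, $\Pi_{\mathcal V}\rho$ there agrees with the complex projection onto $\Span_\C\calO$, as taking adjoints commutes with that projection.
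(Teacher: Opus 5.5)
Your proposal is correct and follows essentially the paper's route. Equivariance of $\Pi_{\mathcal V}$ reduces invisibility to $\sigma-U_g\sigma U_g^\dagger=0$, the stabilizer is closed, differentiation and exponentiation identify the Lie algebra, Parseval gives the norm identity, and telescoping plus the triangle inequality gives the word bound; the only difference is that you invoke the closed-subgroup theorem where the paper cites Jacobi's identity, which is, if anything, the cleaner justification of ``the Lie algebra of this subgroup.''
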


\begin{proof}
Conjugation is orthogonal on Hermitian operators. Invariance of $\mathcal V$ under the group and its inverses therefore makes $\mathcal V^\perp$ invariant as well, so $\Pi_{\mathcal V}$ commutes with conjugation. Equality of all probe expectations is equivalent to $\Pi_{\mathcal V}(\rho-U_g\rho U_g^\dagger)=0$, which is precisely $\sigma-U_g\sigma U_g^\dagger=0$. A stabilizer of a continuous group action is a closed subgroup. Differentiation gives $[L,\sigma]=0$; Jacobi's identity closes this space under the Hermitian bracket $i[L,L']$, and exponentiation fixes $\sigma$. Parseval's identity gives the first formula in \eqref{eq:probe_word_bound}. The identity $\sigma-\pi(gh)\sigma=(\sigma-\pi(g)\sigma)+\pi(g)(\sigma-\pi(h)\sigma)$, with $\pi(g)X=U_gXU_g^\dagger$, gives the second by the triangle inequality and iteration.
\end{proof}

The span condition is an operator calculation that can be checked before measuring the state. All Pauli strings of weight at most $r$ span an invariant space under on-site rotations and site permutations. For unrestricted entangling candidate unitaries, their orbit span may be larger and its measurement cost must be included. Bracket closure of a restricted-probe null space alone is insufficient: for $\rho=(I+Z)/2$, probe $Z$ and candidate algebra $\Span_{\R}\{X\}$, the first derivative vanishes and the null space is bracket closed, yet the probe expectation along $e^{-itX}$ is $\cos(2t)$. The theorem supplies the condition that rules out this second-order escape.

\subsection{Classical-shadow estimator and recovery guarantee}

Suppose each $\Delta_j^{(g)}$ can be estimated from a classical-shadow snapshot $\widehat\rho_\ell$ by
\begin{equation}
X_{\ell,j}^{(g)}:=\Tr(\Delta_j^{(g)}\widehat\rho_\ell),
\qquad
\E X_{\ell,j}^{(g)}=\Tr(\Delta_j^{(g)}\rho).
\end{equation}
For Pauli or Clifford shadows, the measurement cost is governed by the observable family \cite{Huang2020ClassicalShadows}. Assume $M\ge2$ independent snapshots with a known almost-sure bound $|X_{\ell,j}^{(g)}|\le R_{\calO}$ for every tested pair $(g,j)$. This bounded model includes finite-dimensional Pauli and Clifford snapshots and keeps the mean, variance, and tail scales explicit.

To avoid the positive bias of squaring a noisy mean, use the second-order $U$-statistic \cite{Hoeffding1948}
\begin{equation}
\widehat q_j^{(g)}:=\frac{1}{M(M-1)}\sum_{\ell\ne r}X_{\ell,j}^{(g)}X_{r,j}^{(g)},
\qquad
\widehat D_{\calO}(g):=\frac1m\sum_{j=1}^m\widehat q_j^{(g)}.
\label{eq:shadow_u}
\end{equation}
Then $\E\widehat q_j^{(g)}=(\Tr(\Delta_j^{(g)}\rho))^2$, so $\widehat D_{\calO}(g)$ is unbiased for $D_{\calO}(g;\rho)$.

\begin{theorem}[Observable-level dictionary recovery]
\label{thm:shadow_recovery}
Let $\calS_\star^{\calO}=\{g\in\calS_0:D_{\calO}(g;\rho)=0\}$, and suppose $D_{\calO}(g;\rho)\ge\gamma_{\calO}>0$ for all remaining elements. Under the bounded-snapshot model, there is a universal constant $C$ such that
\begin{equation}
M\ge\max\left\{2,\ C\frac{R_{\calO}^2}{\gamma_{\calO}}\log\frac{2m|\calS_0|}{\delta}\right\}
\label{eq:shadow_sample}
\end{equation}
ensures that $\widehat{\calS}^{\calO}=\{g:\widehat D_{\calO}(g)\le\gamma_{\calO}/2\}$ equals $\calS_\star^{\calO}$ with probability at least $1-\delta$. The dependence $\gamma_{\calO}^{-1}\log(1/\delta)$ is necessary in general, even for one qubit and one candidate, for $0<\gamma_{\calO}\le1$ and $0<\delta<1/4$.
\end{theorem}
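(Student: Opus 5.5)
The upper bound reduces to a per-pair concentration statement for the U-statistic $\widehat q_j^{(g)}$, followed by a union bound over the $m|\calS_0|$ tested pairs. Fix $(g,j)$, write $\mu=\Tr(\Delta_j^{(g)}\rho)$ and $Y_\ell=X_{\ell,j}^{(g)}-\mu$. Then
\[
\widehat q_j^{(g)}-\mu^2=\frac{2\mu}{M}\sum_\ell Y_\ell+\frac{1}{M(M-1)}\sum_{\ell\ne r}Y_\ell Y_r .
\]
This is the Hoeffding decomposition into a linear term and a degenerate second-order term. Since $|X|\le R_{\calO}$, we have $|\mu|\le R_{\calO}$ and $|Y_\ell|\le 2R_{\calO}$.

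Hoeffding's inequality controls the linear term. With probability at least $1-\delta'$ its absolute value is at most $2|\mu|\cdot 2R_{\calO}\sqrt{2\log(2/\delta')/M}$. The degenerate term is handled with a Bernstein-type bound for degenerate U-statistics; I would cite the de la Peña--Giné or Arcones--Giné inequality with bounded kernel $4R_{\calO}^2$. For bounded canonical kernels this gives a deviation of order $R_{\calO}^2\log(C/\delta')/M$, and the $1/(M-1)$ normalization costs only constants because $M\ge2$. A more elementary alternative is to write $\sum_{\ell\ne r}Y_\ell Y_r=(\sum_\ell Y_\ell)^2-\sum_\ell Y_\ell^2$ and control both pieces directly. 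The second piece has mean $M\Var(Y)$, and we subtract it exactly, so the fluctuations must be bounded with Bernstein on $\sum Y_\ell^2$. The main obstacle is this step: obtaining a $1/M$ rate, not $1/\sqrt M$, for the degenerate part with only a logarithmic confidence factor.

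Now average over $j$ and set $\delta'=\delta/(m|\calS_0|)$, which produces the $\log(2m|\calS_0|/\delta)$ factor. Write $\epsilon=R_{\calO}^2\log(2m|\calS_0|/\delta)/M$. On the good event, and using Cauchy--Schwarz in the form $\frac1m\sum_j|\mu_j|\le\sqrt{D}$,
\[
|\widehat D_{\calO}(g)-D_{\calO}(g;\rho)|\le c_1\sqrt{D_{\calO}(g;\rho)\,\epsilon}+c_2\,\epsilon .
\]
If $g\in\calS_\star^{\calO}$, then $D=0$ and all $\mu_j=0$, so the error is at most $c_2\epsilon<\gamma_{\calO}/2$ once $C$ is large. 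If $D\ge\gamma_{\calO}$, then $\widehat D\ge D-c_1\sqrt{D\epsilon}-c_2\epsilon>\gamma_{\calO}/2$, because $D-c_1\sqrt{D\epsilon}$ is increasing in $D$ for $D\ge\gamma_{\calO}\gg\epsilon$. This yields exact recovery. The argument is exactly where the inverse-gap rate comes from: at zero mean the linear term vanishes, and at positive mean it is dominated relative to $D$.

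For the lower bound I would take one qubit and probe $O=Z$ with $m=1$. The candidate is $g$ with $U_g=X$, so $\Delta=2Z$ and $D=4\langle Z\rangle^2$. Compare $\rho_0=I/2$ ($D=0$) with $\rho_1=(I+\eta Z)/2$, where $4\eta^2=\gamma_{\calO}$; the constraint $\gamma_{\calO}\le1$ keeps $\eta\le 1/2$. Both states are diagonal, hence commuting, so even collective measurements on $\rho^{\otimes M}$ reduce to distinguishing the classical distributions $\mathrm{Bern}(1/2)^{\otimes M}$ and $\mathrm{Bern}((1+\eta)/2)^{\otimes M}$. Any procedure that recovers $\calS_\star^{\calO}$ with error at most $\delta$ under both states is a hypothesis test with both errors at most $\delta$. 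The Bretagnolle--Huber inequality gives $2\delta\ge\tfrac12e^{-M\,\mathrm{KL}}$, and $\mathrm{KL}\le c\eta^2=c\gamma_{\calO}/4$, so $M\ge c'\gamma_{\calO}^{-1}\log(1/(4\delta))$. The hypothesis $\delta<1/4$ keeps this logarithm positive and of order $\log(1/\delta)$. Pauli-$Z$ snapshots have $R_{\calO}$ bounded by a constant here, which confirms that the dependence is optimal at fixed snapshot scale.
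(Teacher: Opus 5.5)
Your upper bound is correct, and it is the paper's mechanism with different bookkeeping. The paper writes $\widehat q_j^{(g)}=(\bar X_j^{(g)})^2-s_{j,g}^2/M$ and controls all $m|\calS_0|$ means by Hoeffding plus a union bound. The Euclidean triangle inequality then gives $(\sqrt{D}-t_M)_+^2-R_{\calO}^2/(M-1)\le\widehat D_{\calO}(g)\le(\sqrt{D}+t_M)^2$. Expanded, this is your $c_1\sqrt{D\epsilon}+c_2\epsilon$ bound.

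The step you single out as the main obstacle is not one. In $\sum_{\ell\ne r}Y_\ell Y_r=(\sum_\ell Y_\ell)^2-\sum_\ell Y_\ell^2$, the first piece is at most $2R_{\calO}^2M\log(2/\delta')$ on the same Hoeffding event that controls your linear term. The second piece lies deterministically in $[0,4R_{\calO}^2M]$. Dividing by $M(M-1)$ already gives the $O(R_{\calO}^2\log(1/\delta')/M)$ rate, so you need neither Bernstein centering of $\sum_\ell Y_\ell^2$ nor the Arcones--Gin\'e inequality. The latter does work, but it needs $M\gtrsim\log(1/\delta')$ for the $1/M$ rate, which your sample condition happens to supply. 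The paper's version of this step is simply $0\le s_{j,g}^2/M\le R_{\calO}^2/(M-1)$. Its $\bar X$-based form also yields the simultaneous confidence intervals for every $D_{\calO}(g)$ that the paper later uses for data-dependent tolerances.

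For the lower bound, you use the paper's construction, and the reduction of collective measurements on commuting states to a classical two-point test is right. However, Bretagnolle--Huber does not cover the stated range of $\delta$. It gives $M\,\mathrm{KL}\ge\log(1/(4\delta))$, which tends to $0$ as $\delta\uparrow 1/4$. Your sentence that $\delta<1/4$ keeps this logarithm ``of order $\log(1/\delta)$'' is therefore false near the right endpoint, and your argument gives no uniform constant on $(0,1/4)$.

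The paper avoids this by applying data processing to the binary relative entropy. A test with both errors at most $\delta$ forces $M\,\mathrm{KL}\ge(1-2\delta)\log((1-\delta)/\delta)$. For every $\delta<1/4$ this is at least $\frac{\log 3}{2\log 4}\log(1/\delta)$, since $1-2\delta>1/2$ and $\log((1-\delta)/\delta)>\log(1/\delta)-\log(4/3)$ with $\log(1/\delta)>\log 4$.

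Alternatively, keep your bound for $\delta\le 1/8$ and treat $[1/8,1/4)$ separately. There, two error probabilities below $1/4$ force total variation above $1/2$. Pinsker then gives $M\,\mathrm{KL}>1/2$, i.e.\ $M\gtrsim\gamma_{\calO}^{-1}$, which suffices because $\log(1/\delta)\le\log 8$ on that interval.
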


\begin{proof}
Write $\mu_j^{(g)}=\E X_{\ell,j}^{(g)}$, $\bar X_j^{(g)}=M^{-1}\sum_\ell X_{\ell,j}^{(g)}$ and $s_{j,g}^2=(M-1)^{-1}\sum_\ell(X_{\ell,j}^{(g)}-\bar X_j^{(g)})^2$. The exact identity
\begin{equation}
\widehat q_j^{(g)}=(\bar X_j^{(g)})^2-\frac{s_{j,g}^2}{M}
\label{eq:u_mean_variance}
\end{equation}
exposes the smaller error at a zero mean. Hoeffding's inequality \cite{Hoeffding1963} and a union bound give $|\bar X_j^{(g)}-\mu_j^{(g)}|\le t_M$ simultaneously, where $t_M=R_{\calO}\sqrt{2\log(2m|\calS_0|/\delta)/M}$. Also $0\le s_{j,g}^2/M\le R_{\calO}^2/(M-1)$. Applying the Euclidean triangle inequality to the vector of means gives
\begin{equation}
(\sqrt{D_{\calO}(g)}-t_M)_+^2-\frac{R_{\calO}^2}{M-1}
\le\widehat D_{\calO}(g)
\le(\sqrt{D_{\calO}(g)}+t_M)^2.
\label{eq:shadow_signal_bound}
\end{equation}
Choose the universal constant so that $t_M\le\sqrt{\gamma_{\calO}}/8$ and $R_{\calO}^2/(M-1)\le\gamma_{\calO}/16$. At a null element the upper bound is below $\gamma_{\calO}/2$; at any alternative the lower bound exceeds it.

For the information-theoretic lower bound \cite{BoucheronLugosiMassart2013}, use the commuting states $\rho_0=I/2$ and $\rho_a=(I+aZ)/2$, probe $Z$ and candidate $U_g=X$, so the alternative discrepancy is $4a^2=\gamma_{\calO}$. Any measurement on $M$ copies is a stochastic postprocessing of the classical diagonal samples. The one-copy relative entropy is at most $a^2$, by the relative-entropy bound by the logarithm of one plus chi-square divergence. A decision with each error at most $\delta$ has binary relative entropy at least $(1-2\delta)\log((1-\delta)/\delta)$, by the log-sum inequality. Hence $Ma^2\ge(1-2\delta)\log((1-\delta)/\delta)$, proving the claimed lower bound even with collective measurement access.
\end{proof}

The gain in \eqref{eq:shadow_sample} comes from resolving a zero squared discrepancy. Estimating every discrepancy to a uniform additive accuracy would hide this structure and pay an inverse-square-gap rate. The result is optimal in the gap at fixed snapshot scale; it does not assert optimal dependence on the probe locality or dictionary size.

Set $\widehat V_g=m^{-1}\sum_j s_{j,g}^2$ and $Q_g=\widehat D_{\calO}(g)+\widehat V_g/M=m^{-1}\sum_j(\bar X_j^{(g)})^2$. On the same event as \eqref{eq:shadow_signal_bound},
\begin{equation}
D_{\calO}(g)\in\left[(\sqrt{Q_g}-t_M)_+^2,(\sqrt{Q_g}+t_M)^2\right]
\quad\text{for every }g\in\calS_0.
\label{eq:observable_intervals}
\end{equation}
These intervals remain simultaneous for any tolerance selected from the data: declare an element below tolerance only when its upper endpoint is below it, and above tolerance only when its lower endpoint exceeds it. For a fixed tolerance $\tau$, exact recovery of $\{g:D_{\calO}(g)\le\tau\}$ separated from its complement by $\gamma_{\calO}$ follows by thresholding the $U$-statistic at $\tau+\gamma_{\calO}/2$ with $M\ge C R_{\calO}^2(\tau+\gamma_{\calO})\gamma_{\calO}^{-2}\log(2m|\calS_0|/\delta)$ and $M\ge2$, by \eqref{eq:shadow_signal_bound}. Choosing the scientific relevance of the tolerance requires the task loss; its statistical validity follows from the simultaneous event.

\begin{proposition}[Shadow constants for local dictionaries]
\label[proposition]{prop:shadow_constants}
Let $\calO$ consist of Pauli observables of weight at most $k$ with $\|O\|_{\mathrm{op}}\le1$, measured with random single-qubit Pauli shadows. Assume that for every $O\in\calO$ and $g\in\calS_0$ the conjugated observable $U_g^\dagger O U_g$ is a Pauli observable of weight at most $k'$, uniformly over $\calO\times\calS_0$. Then the relevant squared Pauli-shadow norms of $O$ and $U_g^\dagger O U_g$ are bounded by $3^k$ and $3^{k'}$, respectively. For $0<\varepsilon\le1$, median-of-means estimates of all expectation values entering the plug-in discrepancy achieve additive accuracy $\varepsilon$ per expectation and failure probability at most $\delta$ using
\begin{equation}
M=O\!\left(\frac{3^{\max(k,k')}}{\varepsilon^2}
\log\frac{2|\calO|\,|\calS_0|}{\delta}\right)
\end{equation}
shadow snapshots. For the transverse-field Ising dictionary used in \cref{subsec:tfim-baseline}, the probes $\{X_i\}$ and $\{Z_iZ_{i+1}\}$ have $k\le2$, and spin flips or translations preserve the Pauli weight, so the sample cost is logarithmic in the dictionary size at fixed accuracy.
\end{proposition}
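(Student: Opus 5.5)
The argument combines the standard random single-qubit Pauli shadow analysis of \cite{Huang2020ClassicalShadows} with a union bound over the expectations that enter the plug-in discrepancy. The plug-in discrepancy for $g$ uses $\Tr(O_j\rho)$ and $\Tr(U_g^\dagger O_jU_g\rho)$, since $\Tr(\Delta_j^{(g)}\rho)$ is their difference. So the estimation targets form the finite family $\calO\cup\{U_g^\dagger OU_g:O\in\calO,\ g\in\calS_0\}$, which has at most $|\calO|(1+|\calS_0|)\le2|\calO||\calS_0|$ elements. By the hypothesis on $\calS_0$, every element is a Pauli observable of weight at most $\max(k,k')$, up to sign or unit phase that does not change the shadow norm.

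\textbf{Step 1: shadow norms.} For random single-qubit Pauli measurements, the snapshot estimator of a Pauli string $P$ with support of size $w$ is $\prod_{i\in\mathrm{supp}P}3\,\mathbf 1[\text{basis}_i=P_i]\,b_i$, with outcomes $b_i\in\{\pm1\}$. Its square is $9^w$ on the event that all $w$ bases match and $0$ otherwise. That event has probability $3^{-w}$, so the second moment equals $3^w$ for every state. This gives $\|O\|_{\mathrm{shadow}}^2\le3^k$ and $\|U_g^\dagger OU_g\|_{\mathrm{shadow}}^2\le3^{k'}$. The operator-norm condition $\|O\|_{\mathrm{op}}\le1$ only fixes the normalization of the Pauli coefficient, which has modulus at most one and scales the bound by at most one.

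\textbf{Step 2: median of means.} Split $M$ snapshots into $K=\lceil 8\log(2N/\delta)\rceil$ batches of size $B=\lceil 34\cdot 3^{\max(k,k')}/\varepsilon^2\rceil$, where $N$ is the size of the target family. Chebyshev's inequality makes each batch mean $\varepsilon$-accurate with probability at least $3/4$. A Hoeffding bound on the number of bad batches makes the median fail with probability at most $2e^{-K/8}$, hence at most $\delta/N$ for this choice of $K$. A union bound over the $N\le2|\calO||\calS_0|$ observables gives simultaneous accuracy $\varepsilon$ with probability at least $1-\delta$. The total is $M=KB=O(3^{\max(k,k')}\varepsilon^{-2}\log(2|\calO||\calS_0|/\delta))$, where $\log(4|\calO||\calS_0|/\delta)$ is absorbed into the $O(\cdot)$ since $\delta<1$.

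\textbf{Step 3: Ising specialization.} Each $X_i$ has weight $1$ and each $Z_iZ_{i+1}$ has weight $2$, so $k\le2$. A global spin flip $\prod_iX_i$ maps each Pauli string to $\pm$ itself. A translation relabels sites. In both cases the weight is preserved, so $k'=k\le2$. The bound then reads $M=O(9\varepsilon^{-2}\log(2|\calO||\calS_0|/\delta))$, which is logarithmic in $|\calS_0|$ at fixed $\varepsilon$.

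The only delicate point is bookkeeping in Step 1. The conjugated observable may carry a sign or phase; if the phase is $\pm i$, the operator is not a Hermitian Pauli string in the usual sense. However, unitary conjugation of a Hermitian Pauli string is Hermitian, so only a sign can occur, and a sign does not affect the shadow norm. I expect no substantive obstacle beyond confirming that the weight hypothesis covers all of $\calO\times\calS_0$, which it does by assumption.
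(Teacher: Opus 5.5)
Your proposal is correct and takes the same route as the paper. Both use the $3^w$ squared shadow norm for weight-$w$ Pauli strings, apply median-of-means simultaneous prediction to the family $\{O,\,U_g^\dagger O U_g : O\in\calO,\ g\in\calS_0\}$, and union-bound over at most $2|\calO|\,|\calS_0|$ observables; you derive the shadow norm and the batch constants explicitly where the paper cites the classical-shadow literature. The paper also clips each estimate to $[-1,1]$, turning per-expectation accuracy $\varepsilon$ into an $8\varepsilon$ error on the plug-in discrepancy itself; the statement does not require this.
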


\begin{proof}
For random single-qubit Pauli shadows, the squared shadow norm of a Pauli string of weight $w$ is at most $3^w$ \cite{Huang2020ClassicalShadows}. The same bound applies to the conjugated observables by the assumed weight bound $k'$. The median-of-means simultaneous-prediction theorem applied to the finite family
\[
\{O,\;U_g^\dagger O U_g:O\in\calO,\;g\in\calS_0\}
\]
and union-bounded over at most $2|\calO|\,|\calS_0|$ observables gives the stated scaling. Clip each estimate to $[-1,1]$ before forming differences. Each difference then has error at most $2\varepsilon$ and magnitude at most two, so the resulting plug-in discrepancy has error at most $8\varepsilon$.
\end{proof}

For Pauli snapshots, $R_{\calO}\le3^k+3^{k'}$ is an almost-sure bound on a discrepancy snapshot, so \cref{thm:shadow_recovery} gives a directly applicable local recovery guarantee. The variance-based median-of-means bound in \cref{prop:shadow_constants} concerns linear expectation estimates and can have better locality constants; it is a different estimator guarantee. Under gate-independent, time-stationary measurement noise, calibrated robust shadows \cite{Chen2021RobustShadows,Koh2022RobustShadows} are unbiased when the invertible measurement channel is known exactly; the cost is model-dependent rather than a single universal factor, since for a local Pauli observable the ordinary shadow norm is replaced by its noise-calibrated counterpart, typically amplified by inverse Pauli fidelities on the support.  For an estimated calibration channel, the confidence event must additionally include calibration uncertainty and its induced bias; a variance rescaling alone is insufficient.

\subsection{Continuous discovery through a Lie-algebraic null space}
\label{subsec:lie_algebra_null}

The finite-dictionary theorem above has only logarithmic dependence on $\abs{\calS_0}$.  Compact Lie groups require a local formulation because exhaustive testing of all group elements is unavailable.  For continuous symmetries, the corresponding local problem is to recover the infinitesimal directions that are invisible to the chosen observable family.

For continuous discovery, $G$ is a compact matrix Lie group. We use Hermitian generators of $e^{-itL}$, with bracket $i[L,L\prime]$; the kernel of any original nonfaithful representation is known separately. Fix a real basis of its represented algebra and hold the coefficient normalization fixed throughout the spectral comparison:
\begin{equation}
\mathfrak g=\Span_{\R}\{L_1,\ldots,L_K\},
\qquad
L_{\bm\theta}:=\sum_{a=1}^K\theta_aL_a.
\end{equation}
The one-parameter perturbation generated by $L_{\bm\theta}$ acts as
\begin{equation}
\rho(t)=e^{-itL_{\bm\theta}}\rho e^{itL_{\bm\theta}}.
\end{equation}
For each probe $O_j$, its infinitesimal expectation change is
\begin{equation}
\left.\frac{d}{dt}\Tr(O_j\rho(t))\right|_{t=0}
=\Tr\bigl(i[L_{\bm\theta},O_j]\rho\bigr).
\label{eq:infinitesimal_expectation}
\end{equation}

\begin{definition}[Lie-algebraic observable covariance]
\label[definition]{def:lie_cov}
Define the symmetric positive semidefinite matrix $\mathbf M_{\mathfrak g}(\rho;\calO)\in\R^{K\times K}$ by
\begin{equation}
(\mathbf M_{\mathfrak g})_{ab}
:=\frac1m\sum_{j=1}^m
\Tr\bigl(i[L_a,O_j]\rho\bigr)
\Tr\bigl(i[L_b,O_j]\rho\bigr).
\label{eq:lie_cov}
\end{equation}
The infinitesimal observable-invisible space is
\begin{equation}
\mathcal N_{\calO}(\rho):=
\{\bm\theta\in\R^K:\Tr(i[L_{\bm\theta},O_j]\rho)=0
\text{ for all }j\le m\}.
\end{equation}
\end{definition}

By construction, $\mathcal N_{\calO}(\rho)=\ker\mathbf M_{\mathfrak g}(\rho;\calO)$. Under the invariant-span condition of \cref{thm:probe_stabilizer}, this is the Lie algebra of the invisible subgroup. For a general restricted family it certifies only first-order invisibility; bracket closure does not remove the possibility of a higher-order change of probe expectations.

Let
\begin{equation}
A_{a,j}:=i[L_a,O_j],
\end{equation}
which is Hermitian when $L_a$ and $O_j$ are Hermitian.  Given classical-shadow snapshots $\widehat\rho_\ell$, define
\begin{equation}
\widehat M_{ab}:=
\frac1m\sum_{j=1}^m
\frac{1}{M(M-1)}\sum_{\ell\ne r}
\Tr(A_{a,j}\widehat\rho_\ell)\Tr(A_{b,j}\widehat\rho_r).
\label{eq:lie_u_stat}
\end{equation}
Then $\widehat{\mathbf M}$ is real symmetric and unbiased entrywise. It can have negative eigenvalues at finite sample size, so the recovery rule uses the entire low-eigenvalue cluster. If the population matrix is zero, a confidence radius $\varepsilon$ certifies all eigenvalues within $[-\varepsilon,\varepsilon]$; distinguishing this case from a weak nonzero response requires a specified positive-gap alternative.

\begin{theorem}[Stable recovery of infinitesimal invisible directions]
\label{thm:lie_recovery}
Assume all shadow variables associated with $A_{a,j}$ have absolute value at most $R_{\mathfrak g}$ almost surely.  Let $\gamma_{\mathbf M}>0$ be the smallest nonzero eigenvalue of $\mathbf M_{\mathfrak g}(\rho;\calO)$.  There is a universal constant $C$ such that, for any $0<\varepsilon\le\gamma_{\mathbf M}/4$, the condition
\begin{equation}
M\ge\max\!\left\{2,\ C\,\frac{R_{\mathfrak g}^4K^2}{\varepsilon^2}
\log\frac{2K^2}{\delta}\right\}
\label{eq:lie_sample}
\end{equation}
implies
\begin{equation}
\|\widehat{\mathbf M}-\mathbf M_{\mathfrak g}\|_{\mathrm{op}}\le \varepsilon
\end{equation}
with probability at least $1-\delta$.  Consequently, the empirical subspace spanned by eigenvectors of $\widehat{\mathbf M}$ with eigenvalues at most $\gamma_{\mathbf M}/2$ has the same dimension as $\mathcal N_{\calO}(\rho)$ and satisfies a Davis--Kahan stability bound
\begin{equation}
\|\sin\Theta(\widehat{\mathcal N}_{\calO},\mathcal N_{\calO})\|_{\mathrm{op}}
\le \frac{2\varepsilon}{\gamma_{\mathbf M}}.
\end{equation}
\end{theorem}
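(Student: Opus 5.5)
The plan is to control each entry of $\widehat{\mathbf M}$ by the same mean--variance decomposition used in the proof of \cref{thm:shadow_recovery}, but in a bilinear (polarized) form. Then we pass from entrywise to operator norm and finish with Weyl's inequality and Davis--Kahan.

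First, the entrywise decomposition. Write $Y_{\ell,a,j}=\Tr(A_{a,j}\widehat\rho_\ell)$ with mean $\mu_{a,j}$ and empirical mean $\bar Y_{a,j}$. The cross $U$-statistic satisfies the exact identity
\[
\frac{1}{M(M-1)}\sum_{\ell\ne r}Y_{\ell,a,j}Y_{r,b,j}=\bar Y_{a,j}\bar Y_{b,j}-\frac{s_{ab,j}}{M},
\]
where $s_{ab,j}$ is the empirical covariance with denominator $M-1$. By Cauchy--Schwarz, $|s_{ab,j}|\le \frac{M}{M-1}R_{\mathfrak g}^2$.

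Next, the concentration. Hoeffding and a union bound over the $Km$ pairs $(a,j)$ give $|\bar Y_{a,j}-\mu_{a,j}|\le t_M$ simultaneously, with $t_M=R_{\mathfrak g}\sqrt{2\log(2Km/\delta)/M}$. Then
\[
|\bar Y_{a,j}\bar Y_{b,j}-\mu_{a,j}\mu_{b,j}|\le 2R_{\mathfrak g}t_M+t_M^2\le 3R_{\mathfrak g}t_M .
\]
Averaging over $j$ gives $|\widehat M_{ab}-(\mathbf M_{\mathfrak g})_{ab}|\le 3R_{\mathfrak g}t_M+2R_{\mathfrak g}^2/(M-1)$.

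The main obstacle is the log factor. The stated bound has $\log(2K^2/\delta)$ with no $m$, whereas a union over $(a,j)$ naturally brings in $m$. I would try to avoid $m$ by concentrating the averaged quantity $\widehat M_{ab}$ directly. It is a $U$-statistic of order two in the snapshots, with kernel $h(\widehat\rho_\ell,\widehat\rho_r)=m^{-1}\sum_j Y_{\ell,a,j}Y_{r,b,j}$ bounded by $R_{\mathfrak g}^2$. Hoeffding's inequality for $U$-statistics, via the average over $\lfloor M/2\rfloor$ disjoint pairs, gives deviation $R_{\mathfrak g}^2\sqrt{2\log(2K^2/\delta)/\lfloor M/2\rfloor}$. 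A union over only the $K^2$ index pairs then yields exactly the stated logarithm and the $R_{\mathfrak g}^4$ dependence, with no bias term. I would adopt this route, so that every entry deviates by at most $c\,R_{\mathfrak g}^2\sqrt{\log(2K^2/\delta)/M}$.

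The operator-norm step is immediate: $\|\widehat{\mathbf M}-\mathbf M_{\mathfrak g}\|_{\mathrm{op}}\le \|\cdot\|_{\mathrm F}\le K\max_{ab}|\cdot|$. Hence \eqref{eq:lie_sample} with suitable $C$ gives error at most $\varepsilon$.

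It remains to identify the empirical subspace. Since $\mathbf M_{\mathfrak g}$ is PSD, its spectrum is $0$ with multiplicity $\dim\mathcal N_{\calO}$, recalling $\ker\mathbf M_{\mathfrak g}=\mathcal N_{\calO}$, together with eigenvalues at least $\gamma_{\mathbf M}$. By Weyl's inequality and $\varepsilon\le\gamma_{\mathbf M}/4$, $\widehat{\mathbf M}$ has exactly $\dim\mathcal N_{\calO}$ eigenvalues in $[-\gamma_{\mathbf M}/4,\gamma_{\mathbf M}/4]$. All remaining eigenvalues are at least $3\gamma_{\mathbf M}/4$, so the threshold $\gamma_{\mathbf M}/2$ selects the correct dimension.

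For the angle bound, apply the Davis--Kahan $\sin\Theta$ theorem to the population eigenvalue $0$ and the empirical eigenvalues at least $3\gamma_{\mathbf M}/4$. The separation is at least $3\gamma_{\mathbf M}/4-0\ge\gamma_{\mathbf M}/2$, so
\[
\|\sin\Theta\|_{\mathrm{op}}\le \frac{\varepsilon}{\gamma_{\mathbf M}/2}=\frac{2\varepsilon}{\gamma_{\mathbf M}}.
\]
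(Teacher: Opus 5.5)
Your adopted route is correct and is essentially the paper's proof. The paper likewise bounds the probe-averaged (symmetrized) pair kernel by $R_{\mathfrak g}^2$, writes the $U$-statistic as a permutation average of means over $\lfloor M/2\rfloor$ independent pairs to obtain the tail $2\exp[-\lfloor M/2\rfloor t^2/(2R_{\mathfrak g}^4)]$, union-bounds over entries, uses $\|E\|_{\mathrm{op}}\le K\max_{ab}|E_{ab}|$, and finishes with Weyl and Davis--Kahan; the mean--variance route you set aside is not needed, and your Davis--Kahan separation is actually $3\gamma_{\mathbf M}/4$, so the stated $2\varepsilon/\gamma_{\mathbf M}$ holds with room to spare.
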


\begin{proof}
For any $\bm\theta\in\R^K$,
\begin{equation}
\bm\theta^\top\mathbf M_{\mathfrak g}\bm\theta
=\frac1m\sum_{j=1}^m
\left(\Tr(i[L_{\bm\theta},O_j]\rho)\right)^2,
\end{equation}
so the quadratic form vanishes exactly on $\mathcal N_{\calO}(\rho)$.  The symmetrized two-snapshot kernel of each entry in \eqref{eq:lie_u_stat}, already averaged over the probes, lies in $[-R_{\mathfrak g}^2,R_{\mathfrak g}^2]$. Averaging over all permutations expresses the $U$-statistic as an average of means over $\lfloor M/2\rfloor$ independent pairs. Jensen's inequality for the exponential and Hoeffding's bound for each pair mean give a tail $2\exp[-\lfloor M/2\rfloor t^2/(2R_{\mathfrak g}^4)]$. Union bounding over the $K(K+1)/2$ entries and using $\|E\|_{\mathrm{op}}\le K\max_{ab}|E_{ab}|$ proves \eqref{eq:lie_sample}.  Weyl's inequality separates the empirical small-eigenvalue cluster from the nonzero spectrum, and the Davis--Kahan theorem gives the stated subspace perturbation bound \cite{DavisKahan1970}.
\end{proof}

Thus continuous discovery is reduced to a $K$-dimensional spectral problem in the chosen generator basis.  The guarantee is polynomial in the Lie-algebra dimension and local in the measured observable family, rather than a discretization of the full group manifold.

The restricted-probe statement becomes an exact stabilizer statement when the probes are informationally complete for the state variation generated by $\mathfrak g$.

\begin{remark}[Reality classes of probes]
\label[remark]{rem:reality_classes}
Call a Hermitian operator \emph{real} if its matrix in the computational basis is real, and \emph{imaginary} if that matrix is purely imaginary; a Pauli string is real exactly when it contains an even number of $Y$ factors and imaginary otherwise, so the Pauli basis splits into these two classes.  If $\rho$ is real, as for a real choice of ground state of a real Hamiltonian, then $\Tr(i[L,O]\rho)=0$ whenever $L$ and $O$ lie in the same class, since $[L,O]$ is then real antisymmetric and a real antisymmetric matrix has zero trace against a real symmetric one.  An all-real probe family therefore makes every real generator invisible at first order, and $\ker\mathbf M_{\mathfrak g}$ contains such generators whether or not they are symmetries.  To resolve generators from both classes on a real state, the probe family must provide the corresponding opposite-class responses.  For the periodic $L=8$ XXZ chain $H_{\mathrm{XXZ}}=\sum_i(X_iX_{i+1}+Y_iY_{i+1}+\Delta Z_iZ_{i+1})$ at $\Delta=1.5$ with candidates $\{M_x,M_y,M_z\}$, the probes $\{Z_i,Z_iZ_{i+1},X_iX_{i+1}\}$ give $\mathbf M_{\mathfrak g}=0$ to machine precision (kernel dimension three); adding $X_iZ_{i+1}$ detects $M_y$ (kernel dimension two); adding the imaginary probes $Y_iZ_{i+1}$ and $X_iY_{i+1}$ detects $M_x$ as well and leaves exactly the $U(1)$ generator $M_z$ in the kernel.  The informationally complete covariance of \cref{prop:ic_stabilizer} below is unaffected, because a complete Hermitian basis contains both classes.
\end{remark}

\begin{proposition}[Population recovery of the stabilizer Lie algebra]
\label[proposition]{prop:ic_stabilizer}
Let $\{B_\mu\}_{\mu=1}^{d^2}$ be a Hilbert--Schmidt orthonormal Hermitian basis of $\Herm(\calH)$, where $d=\dim\calH$, and define
\begin{equation}
(\mathbf M_{\mathrm{IC}})_{ab}
:=\sum_{\mu=1}^{d^2}
\Tr\bigl(i[L_a,B_\mu]\rho\bigr)
\Tr\bigl(i[L_b,B_\mu]\rho\bigr).
\label{eq:ic_lie_cov}
\end{equation}
Then
\begin{equation}
\bm\theta^\top\mathbf M_{\mathrm{IC}}\bm\theta
=\left\|i[\rho,L_{\bm\theta}]\right\|_{\mathrm{HS}}^2.
\label{eq:ic_cov_hs}
\end{equation}
Consequently,
\begin{equation}
\ker\mathbf M_{\mathrm{IC}}
=\{\bm\theta\in\R^K:[L_{\bm\theta},\rho]=0\}.
\label{eq:ic_kernel}
\end{equation}
Under the chosen generator-basis identification, this kernel is the Lie algebra of the stabilizer subgroup
\begin{equation}
G_\rho:=\{g\in G:U_g\rho U_g^\dagger=\rho\}
\end{equation}
inside the candidate group.
\end{proposition}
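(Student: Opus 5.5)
The plan has three steps: a Parseval computation gives \eqref{eq:ic_cov_hs}, definiteness of the Hilbert--Schmidt norm gives \eqref{eq:ic_kernel}, and \cref{thm:probe_stabilizer} with $\mathcal V=\Herm(\calH)$ identifies the kernel with the stabilizer's Lie algebra.

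\textbf{The identity \eqref{eq:ic_cov_hs}.} Bilinearity of \eqref{eq:ic_lie_cov} in $(L_a,L_b)$ gives
\[
\bm\theta^\top\mathbf M_{\mathrm{IC}}\bm\theta=\sum_\mu\bigl(\Tr(i[L_{\bm\theta},B_\mu]\rho)\bigr)^2 .
\]
Cyclicity of the trace yields $\Tr(i[L,B]\rho)=\Tr(B\,i[\rho,L])$. The operator $C:=i[\rho,L_{\bm\theta}]$ is Hermitian whenever $\rho$ and $L_{\bm\theta}$ are, and $\Tr(B_\mu C)=\ip{B_\mu}{C}_{\mathrm{HS}}$ is real. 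Since $\{B_\mu\}$ is a real orthonormal basis of $\Herm(\calH)$, Parseval's identity gives $\sum_\mu\ip{B_\mu}{C}_{\mathrm{HS}}^2=\|C\|_{\mathrm{HS}}^2$, which is \eqref{eq:ic_cov_hs}.

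\textbf{The kernel \eqref{eq:ic_kernel}.} The matrix $\mathbf M_{\mathrm{IC}}$ is positive semidefinite, so its kernel equals the zero set of its quadratic form. The Hilbert--Schmidt norm is definite, so $\|i[\rho,L_{\bm\theta}]\|_{\mathrm{HS}}=0$ exactly when $[L_{\bm\theta},\rho]=0$.

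\textbf{Identification with the Lie algebra of $G_\rho$.} Apply \cref{thm:probe_stabilizer} with $\calO=\{B_\mu\}$. The span $\mathcal V=\Herm(\calH)$ is trivially invariant under conjugation, and $\Pi_{\mathcal V}$ is the identity, so $\sigma=\rho$ and $H_{\mathcal V,\rho}=G_\rho$. The theorem then says the Lie algebra of $G_\rho$ is $\{L\in\mathfrak g:[L,\rho]=0\}$. Under the fixed identification $\bm\theta\mapsto L_{\bm\theta}$, this is the kernel above.

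For completeness, I would also check the inclusions directly. If $[L,\rho]=0$, then $e^{-itL}$ commutes with $\rho$ for all $t$, so the one-parameter subgroup lies in $G_\rho$. Conversely, if $e^{-itL}\in G_\rho$ for all $t$, differentiating $e^{-itL}\rho e^{itL}=\rho$ at $t=0$ gives $[L,\rho]=0$. Since $G_\rho$ is a closed subgroup, Cartan's theorem makes it a Lie subgroup whose algebra is exactly the set of such $L$.

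The main obstacle is bookkeeping: the correspondence is between Hermitian generators in the represented algebra $\mathfrak g$ and the group $G$. The paper says the kernel of a nonfaithful representation is handled separately. The differential of $g\mapsto U_g$ is then injective, so the stabilizer of $\rho$ in $G$ and the stabilizer in the represented image have the same Lie algebra under this identification.
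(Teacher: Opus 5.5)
Your proposal is correct and follows the paper's proof: the paper also rewrites each coefficient as $\Tr(B_\mu\, i[\rho,L_a])$, applies Parseval in the orthonormal Hermitian basis to obtain \eqref{eq:ic_cov_hs}, and reads the kernel off the definiteness of the Hilbert--Schmidt norm. The only difference is the final identification step: the paper notes bracket closure via Jacobi's identity and then asserts that the kernel is the infinitesimal stabilizer, whereas you specialize \cref{thm:probe_stabilizer} to $\mathcal V=\Herm(\calH)$ (so that $\sigma=\rho$) and check both inclusions through one-parameter subgroups and Cartan's closed-subgroup theorem. Your version is the more complete justification, and bracket closure then follows automatically.
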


\begin{proof}
For Hermitian $B_\mu$,
\begin{equation}
\Tr\bigl(i[L_a,B_\mu]\rho\bigr)
=\Tr\bigl(B_\mu\, i[\rho,L_a]\bigr).
\end{equation}
The coefficients in \eqref{eq:ic_lie_cov} are therefore the Hilbert--Schmidt coordinates of $i[\rho,L_a]$ in an orthonormal Hermitian operator basis.  Parseval's identity gives \eqref{eq:ic_cov_hs}.  The kernel condition is exactly $[L_{\bm\theta},\rho]=0$.  If two Hermitian generators $L$ and $L'$ commute with $\rho$, then Jacobi's identity gives $[[L,L'],\rho]=0$, so the kernel is closed under the Hermitian-generator bracket $i[L,L']$.  It is precisely the infinitesimal stabilizer of the conjugation action $U\rho U^\dagger$ restricted to the candidate group $G$.
\end{proof}

\begin{corollary}[Finite-sample stabilizer recovery under complete probes]
\label[corollary]{cor:ic_stabilizer_recovery}
Let $\widehat{\mathbf M}_{\mathrm{IC}}$ be an unbiased second-order shadow estimator of \eqref{eq:ic_lie_cov}.  Assume the symmetrized two-snapshot kernel of each entry, including the sum over the complete probe frame, is bounded in absolute value by $R_{\mathrm{IC}}$.  Let $\gamma_{\mathrm{IC}}>0$ be the smallest nonzero eigenvalue of $\mathbf M_{\mathrm{IC}}$.  There is a universal constant $C$ such that, for $0<\varepsilon\le\gamma_{\mathrm{IC}}/4$, the condition
\begin{equation}
M\ge\max\!\left\{2,\ C\,\frac{R_{\mathrm{IC}}^2K^2}{\varepsilon^2}
\log\frac{2K^2}{\delta}\right\}
\label{eq:ic_lie_sample}
\end{equation}
implies $\|\widehat{\mathbf M}_{\mathrm{IC}}-\mathbf M_{\mathrm{IC}}\|_{\mathrm{op}}\le\varepsilon$ with probability at least $1-\delta$.  The empirical eigenspace below $\gamma_{\mathrm{IC}}/2$ then has the stabilizer dimension and obeys
\begin{equation}
\|\sin\Theta(\widehat{\mathfrak g}_\rho,\mathfrak g_\rho)\|_{\mathrm{op}}
\le \frac{2\varepsilon}{\gamma_{\mathrm{IC}}}.
\end{equation}
\end{corollary}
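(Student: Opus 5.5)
The plan is to follow the proof of \cref{thm:lie_recovery} step for step, with the single probe average replaced by the complete-frame sum, and then use \cref{prop:ic_stabilizer} to identify the population kernel. First I will write each entry of $\widehat{\mathbf M}_{\mathrm{IC}}$ as a second-order $U$-statistic
\[
(\widehat{\mathbf M}_{\mathrm{IC}})_{ab}=\frac{1}{M(M-1)}\sum_{\ell\ne r}h_{ab}(\widehat\rho_\ell,\widehat\rho_r).
\]
Here $h_{ab}$ is the symmetrized two-snapshot kernel, already summed over the frame $\{B_\mu\}$. By assumption $|h_{ab}|\le R_{\mathrm{IC}}$. The entry is unbiased because the two snapshots are independent: $\E h_{ab}=\sum_\mu\Tr(i[L_a,B_\mu]\rho)\Tr(i[L_b,B_\mu]\rho)$.

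Next comes Hoeffding's decomposition into pair averages. Averaging over permutations writes the $U$-statistic as an average, over permutations, of means of $\lfloor M/2\rfloor$ independent bounded pair terms. Jensen's inequality on the moment generating function, followed by Hoeffding's inequality for a mean of variables in $[-R_{\mathrm{IC}},R_{\mathrm{IC}}]$, gives
\[
\Pr\bigl(|(\widehat{\mathbf M}_{\mathrm{IC}}-\mathbf M_{\mathrm{IC}})_{ab}|>t\bigr)\le2\exp\bigl[-\lfloor M/2\rfloor t^2/(2R_{\mathrm{IC}}^2)\bigr].
\]
The bound now has $R_{\mathrm{IC}}^2$ rather than $R_{\mathfrak g}^4$, because $R_{\mathrm{IC}}$ already bounds the product kernel and not the single-snapshot variable. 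I will union bound over the $K(K+1)/2\le K^2$ entries and set $t=\varepsilon/K$. Combining this with $\|E\|_{\mathrm{op}}\le K\max_{ab}|E_{ab}|$ for a symmetric $K\times K$ matrix gives \eqref{eq:ic_lie_sample} with a universal $C$; the condition $M\ge2$ ensures $\lfloor M/2\rfloor\ge M/3$.

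On this event, I apply Weyl's inequality. The eigenvalues of $\mathbf M_{\mathrm{IC}}$ are $0$ (with multiplicity $\dim\ker$) and values at least $\gamma_{\mathrm{IC}}$. The corresponding empirical eigenvalues therefore lie in $[-\varepsilon,\varepsilon]\subset(-\infty,\gamma_{\mathrm{IC}}/4]$ and in $[\gamma_{\mathrm{IC}}-\varepsilon,\infty)\subset[3\gamma_{\mathrm{IC}}/4,\infty)$ respectively. Thresholding at $\gamma_{\mathrm{IC}}/2$ then selects exactly $\dim\ker\mathbf M_{\mathrm{IC}}$ eigenvectors. The Davis--Kahan $\sin\Theta$ theorem \cite{DavisKahan1970} bounds the angle by $\|E\|_{\mathrm{op}}/\text{gap}$. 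The relevant gap, between the population eigenvalue $0$ and the empirical eigenvalues above $\gamma_{\mathrm{IC}}/2$, is at least $\gamma_{\mathrm{IC}}/2$. This yields $2\varepsilon/\gamma_{\mathrm{IC}}$.

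Finally, \cref{prop:ic_stabilizer} identifies $\ker\mathbf M_{\mathrm{IC}}$ with $\{\bm\theta:[L_{\bm\theta},\rho]=0\}$, the Lie algebra $\mathfrak g_\rho$ of $G_\rho$ in the fixed generator basis. So the empirical eigenspace $\widehat{\mathfrak g}_\rho$ has the stabilizer dimension and obeys the stated bound. The main obstacle is the concentration step. Frame-summed kernels are not sums of independent bounded terms across $\mu$, so the bound has to go through the assumed kernel bound $R_{\mathrm{IC}}$ on the whole frame sum rather than per-probe bounds. One must also check that the pair-decomposition argument applies with a symmetrized kernel and uses the $\lfloor M/2\rfloor$ effective sample size. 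The Weyl gap bookkeeping is routine given $\varepsilon\le\gamma_{\mathrm{IC}}/4$.
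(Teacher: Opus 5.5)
Your proposal is correct and follows the paper's route exactly. You rerun the bounded-kernel $U$-statistic concentration of \cref{thm:lie_recovery} on the frame-summed kernel, which is why $R_{\mathrm{IC}}^2$ replaces $R_{\mathfrak g}^4$, then apply Weyl and Davis--Kahan, and finally use \cref{prop:ic_stabilizer} to identify the kernel with $\mathfrak g_\rho$, spelling out constants and gap bookkeeping that the paper's one-line proof leaves implicit.
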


\begin{proof} Apply the bounded-kernel argument in \cref{thm:lie_recovery} to the complete-frame sum, then use \cref{prop:ic_stabilizer} to identify its kernel. \end{proof}

Complete probes identify the full state stabilizer within the candidate Lie group. Local probes instead identify the projected tangent response. Full-state recovery from them is possible when the local marginal map is injective on the candidate tangent space; the next identity makes that condition explicit.

\begin{proposition}[Local informational completeness for local generators]
\label[proposition]{prop:local_ic}
Let \(\calH=(\mathbb C^2)^{\otimes n}\), and let \(\mathcal V_r\subset\Herm(\calH)\) be the real Hilbert--Schmidt subspace spanned by Pauli strings of weight at most \(r\).  Let \(\Pi_r\) be the Hilbert--Schmidt projection onto \(\mathcal V_r\), and let \(\{P_\mu\}_{\mu=1}^{d_r}\) be its Hilbert--Schmidt-normalized Pauli-string basis.  Define
\begin{equation}
(\mathbf M_r)_{ab}:=
\sum_{\mu=1}^{d_r}
\Tr\bigl(i[L_a,P_\mu]\rho\bigr)
\Tr\bigl(i[L_b,P_\mu]\rho\bigr).
\label{eq:local_lie_cov}
\end{equation}
Then
\begin{equation}
\bm\theta^\top\mathbf M_r\bm\theta
=
\left\|\Pi_r\bigl(i[\rho,L_{\bm\theta}]\bigr)\right\|_{\mathrm{HS}}^2.
\label{eq:local_ic_projection}
\end{equation}
Consequently, \(\ker\mathbf M_r\) is exactly the tangent space of generator directions whose infinitesimal action is invisible to all \(r\)-local observables.  If each basis generator \(L_a\) is supported on at most \(k\) sites, each scalar \(\Tr(i[L_a,P_\mu]\rho)\) depends only on the reduced density matrix of \(\rho\) on \(\operatorname{supp}(L_a)\cup\operatorname{supp}(P_\mu)\), whose size is at most \(k+r\).  For generators that are sums of \(k\)-local terms, the same conclusion applies term by term and the sample cost acquires the corresponding sparsity factor.  For fixed \(r\), the number of Pauli probes is
\begin{equation}
d_r=\sum_{s=0}^r 3^s\binom{n}{s}=O(n^r),
\end{equation}
so the probe count is polynomial in system size at fixed locality. Total recovery cost also depends on the inverse spectral gap, generator coefficients, and requested accuracy.
\end{proposition}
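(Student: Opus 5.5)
The argument follows the pattern of \cref{prop:ic_stabilizer}, with the complete frame replaced by the orthonormal Pauli basis of $\mathcal V_r$. First, for Hermitian $P_\mu$, cyclicity of the trace gives
\[
\Tr\bigl(i[L_a,P_\mu]\rho\bigr)=\Tr\bigl(P_\mu\, i[\rho,L_a]\bigr),
\]
and $i[\rho,L_a]$ is Hermitian. Thus the vector $\bigl(\Tr(i[L_{\bm\theta},P_\mu]\rho)\bigr)_\mu$ is linear in $\bm\theta$ and consists of the Hilbert--Schmidt coordinates of $i[\rho,L_{\bm\theta}]$ along the orthonormal basis $\{P_\mu\}$ of $\mathcal V_r$. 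By Parseval's identity on the subspace $\mathcal V_r$, the squared norm of this coordinate vector equals $\|\Pi_r(i[\rho,L_{\bm\theta}])\|_{\mathrm{HS}}^2$. Expanding $\bm\theta^\top\mathbf M_r\bm\theta$ as a sum of squares then gives \eqref{eq:local_ic_projection}.

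The kernel statement follows directly. Since $\mathbf M_r$ is positive semidefinite, $\bm\theta\in\ker\mathbf M_r$ exactly when the quadratic form vanishes. This happens exactly when $\Pi_r(i[\rho,L_{\bm\theta}])=0$, that is, when $\frac{d}{dt}\Tr(P\rho(t))|_{t=0}=0$ for every Pauli string $P$ of weight at most $r$. Here I use \eqref{eq:infinitesimal_expectation} and linearity to pass from the basis to every $r$-local observable. I will state explicitly that this is first-order invisibility only, in line with the caveat after \cref{def:lie_cov}.

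For the locality claim, $[L_a,P_\mu]$ is supported on $S=\operatorname{supp}(L_a)\cup\operatorname{supp}(P_\mu)$. Hence $\Tr(i[L_a,P_\mu]\rho)=\Tr_S\bigl(i[L_a,P_\mu]\rho_S\bigr)$, with $|S|\le k+r$. For a generator that is a sum of $k$-local terms, I expand $[L_a,P_\mu]=\sum_t[h_t,P_\mu]$ and apply the same argument to each term. Only the terms whose support overlaps $\operatorname{supp}(P_\mu)$ contribute, because the other commutators vanish. This is the source of the sparsity factor.

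Finally, the count $d_r$ comes from choosing $s$ sites in $\binom{n}{s}$ ways and a nonidentity Pauli factor on each in $3^s$ ways. For fixed $r$, $\binom{n}{s}\le n^s$ gives $O(n^r)$. The last sentence of the statement, on total recovery cost, is qualitative and refers back to \cref{thm:lie_recovery}.

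I do not expect a serious obstacle. The step needing the most care is the Parseval step. It requires the normalization $\Tr(P_\mu P_\nu)=\delta_{\mu\nu}$, so each Pauli string carries the factor $2^{-n/2}$, and it requires the real-span convention, so that the projection of a Hermitian operator is computed with real coefficients.
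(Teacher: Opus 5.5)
Your proposal is correct and follows the paper's proof step for step. Both use the trace identity $\Tr(i[L_a,P_\mu]\rho)=\Tr(P_\mu\, i[\rho,L_a])$, Parseval over the orthonormal Pauli basis of $\mathcal V_r$, the immediate kernel statement, and the support-union argument extended by linearity to sums of local terms; your additions (positive semidefiniteness for the kernel, vanishing commutators for non-overlapping terms, and the explicit count of $d_r$) fill in steps the paper leaves implicit.
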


\begin{proof}
For each \(P_\mu\),
\begin{equation}
\Tr\bigl(i[L_a,P_\mu]\rho\bigr)
=\Tr\bigl(P_\mu i[\rho,L_a]\bigr).
\end{equation}
Summing squared Hilbert--Schmidt coordinates over an orthonormal basis of \(\mathcal V_r\) gives \eqref{eq:local_ic_projection}.  The kernel statement follows immediately.  If \(L_a\) and \(P_\mu\) have supports \(A\) and \(B\), then \([L_a,P_\mu]\) is supported on \(A\cup B\).  Its expectation therefore depends only on the reduced density matrix on that union.  A termwise \(k\)-local generator follows by linearity.
\end{proof}

By \cref{prop:local_ic}, the local and full stabilizer kernels agree precisely when $\Pi_r$ is injective on $\{i[\rho,L]:L\in\mathfrak g\}$. Global identification of a state within a nonlinear model class does not by itself provide this differential condition. When the probe span is invariant under the candidate group, \cref{thm:probe_stabilizer} already integrates the local kernel to the projected-state stabilizer, whether or not it agrees with the full-state stabilizer.

\subsection{Cyclic finite-memory benchmark}
\label{subsec:qfa_benchmark}

\Cref{thm:shadow_recovery} is particularly transparent for a finite-memory quantum process with cyclic transition symmetry.  Let $C_m=\langle r:r^m=e\rangle$ act on a memory basis $\{|a\rangle:a\in\mathbb Z_m\}$ by
\begin{equation}
U_{r^\ell}\ket a=\ket{a+\ell\;\mathrm{mod}\;m}.
\end{equation}
This includes the memory-register part of simple quantum finite automata and modular periodic-language processes \cite{MooreCrutchfield2000,AmbainisFreivalds1998}.  For the local projector probes $O_a=\ket a\!\bra a$, the discrepancy becomes
\begin{equation}
D_{\calO}(r^\ell;p)=\frac1m\sum_{a\in\mathbb Z_m}\bigl(p_a-p_{a-\ell}\bigr)^2,
\label{eq:cyclic_disc}
\end{equation}
where $p_a=\langle a|\rho|a\rangle$ is the observed memory distribution.  Thus cyclic-symmetry discovery is equivalent to recovering the shifts that preserve the observed periodic occupation pattern.

\begin{proposition}[Recovery of a cyclic period from low-dimensional probes]
\label[proposition]{prop:cyclic_recovery}
Assume $p$ is invariant exactly under a proper subgroup $K<C_m$ and define
\begin{equation}
\gamma_C:=\min_{r^\ell\notin K}D_{\calO}(r^\ell;p)>0.
\end{equation}
Let $\widehat p$ be the empirical distribution from $M$ independent measurements in the memory basis and let $\widehat D_{\calO}$ be obtained from \eqref{eq:cyclic_disc} with $p$ replaced by $\widehat p$.  There is a universal constant $C$ such that
\begin{equation}
M\ge C\,\frac{1}{m\gamma_C}\log\frac{2}{\delta}
\end{equation}
ensures that thresholding $\widehat D_{\calO}(r^\ell)$ at $\gamma_C/2$ recovers $K$ with probability at least $1-\delta$.
\end{proposition}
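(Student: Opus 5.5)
The plan is to avoid \cref{thm:shadow_recovery} entirely, since a union bound over the $m$ shifts would introduce a spurious $\log m$ factor. Instead, I would control all shifts at once through a single event on the empirical distribution $\widehat p$. Write $S_\ell$ for the cyclic shift $(S_\ell p)_a=p_{a-\ell}$. By \eqref{eq:cyclic_disc},
\begin{equation*}
D_{\calO}(r^\ell;p)=\tfrac1m\|(I-S_\ell)p\|_2^2,
\qquad
\widehat D_{\calO}(r^\ell)=\tfrac1m\|(I-S_\ell)\widehat p\|_2^2 .
\end{equation*}
The triangle inequality, together with $\|I-S_\ell\|_{\mathrm{op}}\le2$ (each $S_\ell$ is a permutation and hence orthogonal), gives the deterministic bound
\begin{equation*}
\Bigl|\sqrt{\widehat D_{\calO}(r^\ell)}-\sqrt{D_{\calO}(r^\ell;p)}\Bigr|
\le\frac{2}{\sqrt m}\,\|\widehat p-p\|_2
\quad\text{for every }\ell .
\end{equation*}
The whole question therefore reduces to showing that $\|\widehat p-p\|_2\le\sqrt{m\gamma_C}/8$ with probability at least $1-\delta$.

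Granting that event, the separation argument runs as in the proof of \cref{thm:shadow_recovery}. For a null shift $r^\ell\in K$ we have $D=0$, so $\widehat D\le 4\|\widehat p-p\|_2^2/m\le\gamma_C/16<\gamma_C/2$. For $r^\ell\notin K$ we have $\sqrt{\widehat D}\ge\sqrt{\gamma_C}-\sqrt{\gamma_C}/4$, hence $\widehat D\ge 9\gamma_C/16>\gamma_C/2$. Thresholding at $\gamma_C/2$ therefore returns exactly $K$. No plug-in bias correction is needed, because the bound above is deterministic in $\|\widehat p-p\|_2$.

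For the concentration step, note that $\E\|\widehat p-p\|_2^2=\frac1M\sum_a p_a(1-p_a)\le 1/M$, so $\E\|\widehat p-p\|_2\le M^{-1/2}$ by Jensen's inequality. Replacing one of the $M$ samples moves two coordinates of $\widehat p$ by $1/M$ each, which changes $\|\widehat p-p\|_2$ by at most $\sqrt2/M$. McDiarmid's bounded-difference inequality then gives, with probability at least $1-\delta$,
\begin{equation*}
\|\widehat p-p\|_2\le\frac{1}{\sqrt M}+\sqrt{\frac{\log(1/\delta)}{M}} .
\end{equation*}
Requiring the right-hand side to be at most $\sqrt{m\gamma_C}/8$ is implied by $M\ge C(1+\log(1/\delta))/(m\gamma_C)$. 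Since $\log(2/\delta)\ge\log 2$, this is in turn implied by $M\ge C'\log(2/\delta)/(m\gamma_C)$ after enlarging the universal constant.

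The main obstacle I expect is obtaining the $1/m$ factor without a $\log m$ penalty. A coordinatewise Hoeffding bound plus a union bound over $a\in\mathbb Z_m$ and over all shifts would lose both. The key choices that avoid this are the dimension-free $\ell_2$ mean bound $\E\|\widehat p-p\|_2^2\le1/M$ and the shift-uniform operator-norm bound $\|I-S_\ell\|_{\mathrm{op}}\le2$. If the McDiarmid constant turns out to be awkward, I would fall back on a vector Bernstein or Hilbert-space Hoeffding bound for the average of the i.i.d. vectors $e_{a_\ell}-p$, each of norm at most $\sqrt2$, which yields the same form.
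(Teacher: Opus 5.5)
Your proposal is correct and matches the paper's proof essentially step for step. It uses the same Jensen bound $\E\|\widehat p-p\|_2\le M^{-1/2}$, the same $\sqrt2/M$ bounded-differences constant, the same shift-isometry estimate with factor $2/\sqrt m$ applied to all shifts at once, and the same $\gamma_C/16$ versus $9\gamma_C/16$ separation across the threshold $\gamma_C/2$.
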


\begin{proof}
For categorical samples, $\E\|\widehat p-p\|_2\le M^{-1/2}$. Changing one observation changes this norm by at most $\sqrt2/M$, so the bounded-differences inequality \cite{BoucheronLugosiMassart2013} gives $\|\widehat p-p\|_2\le(1+\sqrt{\log(1/\delta)})/\sqrt M$ with probability at least $1-\delta$. Every cyclic shift is an isometry. Consequently, simultaneously for all shifts,
\begin{equation}
\left|\sqrt{D_{\calO}(r^\ell;\widehat p)}-\sqrt{D_{\calO}(r^\ell;p)}\right|
\le\frac{2}{\sqrt m}\|\widehat p-p\|_2.
\end{equation}
The sample condition makes the right-hand side at most $\sqrt{\gamma_C}/4$. Squaring then separates the null values, at most $\gamma_C/16$, from the alternatives, at least $9\gamma_C/16$, across the threshold $\gamma_C/2$.
\end{proof}

For a concrete reproducible illustration, take \(m=12\), \(K=\langle r^4\rangle=\{e,r^4,r^8\}\), and the noiseless distribution \(p^{(0)}\) that is uniform on the invariant residue class \(\{0,4,8\}\) and zero elsewhere.  \Cref{fig:cyclic_recovery} plots the recovery threshold for \(p^{(\epsilon)}=(1-\epsilon)p^{(0)}+\epsilon\,\mathbf 1/12\), and recovery uses the threshold \(\gamma_C/2\) from Proposition~\ref{prop:cyclic_recovery}.

\begin{figure}[t]
\centering
\begin{tikzpicture}
\begin{axis}[
    width=0.78\textwidth,
    height=0.46\textwidth,
    xlabel={number of memory-basis samples $M$},
    ylabel={generator recovery success},
    xmin=15,xmax=2200,
    ymin=0,ymax=1.04,
    xmode=log,
    xtick={20,50,100,200,500,1000,2000},
    xticklabels={20,50,100,200,500,1000,2000},
    ytick={0,0.25,0.5,0.75,1},
    grid=both,
    legend style={draw=none,fill=none,at={(0.03,0.03)},anchor=south west},
    legend cell align=left]
\addplot[mark=*] coordinates {(20,0.4480) (50,0.8900) (100,0.9835) (200,1.0000) (500,1.0000) (1000,1.0000) (2000,1.0000)};
\addlegendentry{moderate gap, $\epsilon=0.5$}
\addplot[mark=square*] coordinates {(20,0.0010) (50,0.0530) (100,0.2700) (200,0.7265) (500,0.9720) (1000,0.9950) (2000,1.0000)};
\addlegendentry{small gap, $\epsilon=0.8$}
\end{axis}
\end{tikzpicture}
\caption{Illustrative cyclic-symmetry recovery curve for $m=12$ and the hidden subgroup \(K=\langle r^4\rangle\).  Each point is the empirical success probability over 2000 multinomial simulations from \(p^{(\epsilon)}\), using the threshold rule in Proposition~\ref{prop:cyclic_recovery}.  }
\label{fig:cyclic_recovery}
\end{figure}

\section{Task-level discovery for supervised quantum learning}
\label{sec:task_discovery}

Observable discovery concerns the information carried by a state and its probes. Supervised symmetry concerns how the input and label transform together. Even when the input ensemble is invariant, an order-parameter label may require a nontrivial output action. Maximum mean discrepancy (MMD) tests this joint law under classical sample access; a classical--quantum mean-state discrepancy tests the information retained by a quantum encoding.

\subsection{Maximum mean discrepancy on the joint distribution}

Let $P$ be the distribution of $Z=(X,Y)\in\calZ$.  A task-level symmetry is an element $g\in G$ such that
\begin{equation}
T_g\push P=P,
\qquad
T_g(x,y)=(g\cdot x,\rho_{\calY}(g)y).
\end{equation}
Let $k:\calZ\times\calZ\to\R$ be a bounded measurable positive-definite kernel with reproducing-kernel Hilbert space $\calH_k$.  The maximum mean discrepancy is
\begin{equation}
\MMD(P,Q):=\sup_{\|f\|_{\calH_k}\le 1}\abs{\E_{Z\sim P}f(Z)-\E_{W\sim Q}f(W)}.
\end{equation}
When $k$ is characteristic, $\MMD(P,Q)=0$ if and only if $P=Q$ \cite{Gretton2012KernelTwoSample}.

Given samples $z_1,\ldots,z_n\sim P$, set $z_i^{(g)}:=T_gz_i$ and define the empirical mean-embedding discrepancy
\begin{equation}
\widehat{\MMD}_g:=\left\|\frac1n\sum_{i=1}^n k(\cdot,z_i)-\frac1n\sum_{i=1}^n k(\cdot,z_i^{(g)})\right\|_{\calH_k}.
\label{eq:mmd_emp}
\end{equation}
Equivalently, its square is computed by the usual kernel trick.

For large dictionaries it is often useful to pre-screen the candidates before performing the MMD validation.  Suppose a feature map $\Phi:\calZ\to\R^p$ is available and that the transformed features are represented by known orthogonal matrices $R_\Phi(g)$, so that $\Phi(T_g z)=R_\Phi(g)\Phi(z)$ in the ideal feature model.  Let
\begin{equation}
\Sigma_P:=\E[(\Phi(Z)-\E\Phi(Z))(\Phi(Z)-\E\Phi(Z))^\top].
\end{equation}

\begin{proposition}[Covariance-commutator pre-screen]
\label[proposition]{prop:commutator_prescreen}
If $T_g\push P=P$, $\Phi(T_g z)=R_\Phi(g)\Phi(z)$, and $R_\Phi(g)$ is orthogonal, then
\begin{equation}
R_\Phi(g)\Sigma_P=\Sigma_P R_\Phi(g).
\end{equation}
Consequently, the empirical residual
\begin{equation}
r_{\Phi}(g):=\left\|R_\Phi(g)\widehat\Sigma-\widehat\Sigma R_\Phi(g)\right\|_{\mathrm F}
\label{eq:commutator_residual}
\end{equation}
is a necessary-condition score for task-level symmetry.
\end{proposition}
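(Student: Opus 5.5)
The plan is to show directly that the pushforward invariance $T_g\push P=P$, together with the feature equivariance $\Phi(T_g z)=R_\Phi(g)\Phi(z)$, makes the feature covariance $\Sigma_P$ invariant under conjugation by $R_\Phi(g)$. Orthogonality then turns this conjugation invariance into commutation. I assume throughout that $\Phi(Z)$ has finite second moments, so that $\Sigma_P$ is defined; this holds for instance when $\Phi$ is bounded, which is the implicit setting of the pre-screen.

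The first step is the mean. Write $R=R_\Phi(g)$ and $m_P=\E\Phi(Z)$. Since $T_gZ$ has law $T_g\push P=P$, we get $m_P=\E\Phi(T_gZ)=\E[R\Phi(Z)]=Rm_P$. Here we used linearity of expectation and the fact that $R$ is a fixed matrix.

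The second step is the covariance, handled the same way. Because $T_gZ\sim P$, the covariance of $\Phi(T_gZ)$ equals $\Sigma_P$. On the other hand, $\Phi(T_gZ)-m_P=R\Phi(Z)-Rm_P=R(\Phi(Z)-m_P)$, so that covariance equals
\begin{equation}
\E\bigl[R(\Phi(Z)-m_P)(\Phi(Z)-m_P)^\top R^\top\bigr]=R\Sigma_P R^\top .
\end{equation}
Hence $\Sigma_P=R\Sigma_PR^\top$. Since $R$ is orthogonal, $R^\top=R^{-1}$. Right-multiplying by $R$ gives $\Sigma_PR=R\Sigma_P$, which is the claimed commutation.

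For the consequence, the population residual $\|R\Sigma_P-\Sigma_PR\|_{\mathrm F}$ vanishes whenever $g$ is a task-level symmetry, so a nonzero population residual rules $g$ out. The empirical residual $r_\Phi(g)$ in \eqref{eq:commutator_residual} is therefore a plug-in estimate of a quantity that must be zero under the null. The converse fails: commuting second moments do not force $T_g\push P=P$. This one-sided implication is precisely what \emph{necessary-condition score} means, so nothing more has to be proved.

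I do not expect a genuine obstacle. The only point needing care is the mean step: one must use that $T_g\push P=P$ fixes the whole law, not just the covariance, so that $m_P$ is itself $R$-invariant. Without that, the centering would not commute with $R$. Even there, the argument can be made robust, because the uncentered second moment $\E[\Phi\Phi^\top]$ commutes with $R$ by the same computation, and so does $m_Pm_P^\top$. Their difference $\Sigma_P$ therefore commutes with $R$ as well.
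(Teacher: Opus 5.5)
Your proof is correct and follows the paper's argument exactly: invariance of the law gives $R_\Phi(g)m_P=m_P$, hence $R_\Phi(g)\Sigma_PR_\Phi(g)^\top=\Sigma_P$, and orthogonality turns this into commutation. One small caution about your closing aside: the claim that $m_Pm_P^\top$ commutes with $R_\Phi(g)$ itself requires $R_\Phi(g)m_P=m_P$ (equivalently $R_\Phi(g)^\top m_P=m_P$), so the uncentered-moment route restates the mean step rather than avoiding it.
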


\begin{proof}
Let $m_P:=\E\Phi(Z)$.  Invariance and feature equivariance give $R_\Phi(g)m_P=m_P$.  Therefore
\begin{equation}
R_\Phi(g)\Sigma_P R_\Phi(g)^\top
=\E[(\Phi(T_gZ)-m_P)(\Phi(T_gZ)-m_P)^\top]
=\Sigma_P.
\end{equation}
Since $R_\Phi(g)$ is orthogonal, this is equivalent to the commutation relation.
\end{proof}

The residual depends on the chosen features and can fail under approximate feature equivariance, so it screens a large candidate set rather than deciding it.

\begin{theorem}[Task-level dictionary recovery]
\label{thm:mmd_recovery}
Assume $0\le k(z,z)\le \kappa_k$ for all $z$ and let
\begin{equation}
\calS_\star^{P}:=\{g\in\calS_0:T_g\push P=P\}.
\end{equation}
Suppose the non-symmetric dictionary elements have a positive MMD margin
\begin{equation}
\MMD(P,T_g\push P)\ge \gamma_P>0
\qquad\text{for all }g\in\calS_0\setminus\calS_\star^P.
\end{equation}
There exists a universal constant $C$ such that, if
\begin{equation}
n\ge C\,\frac{\kappa_k}{\gamma_P^2}\log\frac{2\abs{\calS_0}}{\delta},
\end{equation}
then the threshold rule
\begin{equation}
\widehat{\calS}^{P}:=\{g\in\calS_0:\widehat{\MMD}_g\le \gamma_P/2\}
\end{equation}
recovers $\calS_\star^P$ with probability at least $1-\delta$.
\end{theorem}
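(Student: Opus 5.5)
The plan is to reduce recovery to a uniform deviation bound: with probability at least $1-\delta$,
\begin{equation*}
\bigl|\widehat{\MMD}_g-\MMD(P,T_g\push P)\bigr|<\gamma_P/2 \qquad\text{simultaneously for all } g\in\calS_0 .
\end{equation*}
On this event, a symmetric element has $\widehat{\MMD}_g<\gamma_P/2$ and is accepted. A non-symmetric element has $\widehat{\MMD}_g>\gamma_P-\gamma_P/2=\gamma_P/2$ and is rejected. To get some slack I would aim for deviation at most $\gamma_P/4$.

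For a fixed $g$, I would write $\mu_P=\E k(\cdot,Z)$ and $\mu_{P,g}=\E k(\cdot,T_gZ)$, so that $\MMD(P,T_g\push P)=\|\mu_P-\mu_{P,g}\|_{\calH_k}$. This holds for a bounded kernel, since Bochner integrability gives the sup-over-unit-ball formula. The empirical statistic is $\|\widehat\mu-\widehat\mu_g\|_{\calH_k}$, where the empirical embeddings use the same samples. By the reverse triangle inequality the deviation is at most $\|\frac1n\sum_i\xi_i\|_{\calH_k}$ with $\xi_i:=k(\cdot,z_i)-k(\cdot,T_gz_i)-(\mu_P-\mu_{P,g})$. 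The $\xi_i$ are i.i.d., mean zero and Hilbert-valued, and $\|k(\cdot,z)\|_{\calH_k}=\sqrt{k(z,z)}\le\sqrt{\kappa_k}$. Hence $\|\xi_i\|\le4\sqrt{\kappa_k}$, and the paired difference has norm at most $2\sqrt{\kappa_k}$.

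The concentration step copies the bounded-differences argument already used in \cref{prop:cyclic_recovery}. First, $\E\|\frac1n\sum\xi_i\|\le(\E\|\frac1n\sum\xi_i\|^2)^{1/2}\le2\sqrt{\kappa_k/n}$, using orthogonality of independent mean-zero Hilbert-space summands. Second, replacing one sample $z_i$ changes $\|\frac1n\sum\xi_i\|$ by at most $4\sqrt{\kappa_k}/n$. McDiarmid's inequality then gives a deviation of at most $2\sqrt{\kappa_k/n}+4\sqrt{\kappa_k\log(|\calS_0|/\delta)/(2n)}$, with probability at least $1-\delta/|\calS_0|$ for that $g$. A union bound over $\calS_0$ gives the simultaneous event. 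Requiring this deviation to be at most $\gamma_P/4$ amounts to $n\ge C\kappa_k\gamma_P^{-2}\log(2|\calS_0|/\delta)$, once a universal $C$ absorbs the constant expectation term. This works because $\log(2|\calS_0|/\delta)\ge\log2$.

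The main point needing care is the mean term. The expectation $2\sqrt{\kappa_k/n}$ is not multiplied by a logarithm, so it must be dominated through $\log(2/\delta)\ge\log 2$; this is the reason for the factor $2$ inside the logarithm. A second point is that the transformed sample $T_gz_i$ is a deterministic function of $z_i$, so the pairs $(z_i,T_gz_i)$ are i.i.d. and McDiarmid applies with a single coordinate per sample. Measurability of $T_g$ follows from continuity of the action, so $T_g\push P$ is a Borel law and the mean embedding $\mu_{P,g}$ is well defined. Characteristicness of the kernel is not needed for the recovery statement itself. It only ensures that the margin assumption is compatible with $\MMD=0$ holding exactly on $\calS_\star^P$.
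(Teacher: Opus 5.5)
Your proposal is correct and is essentially the paper's proof: it uses paired embedding differences $k(\cdot,z_i)-k(\cdot,T_gz_i)$ of norm at most $2\sqrt{\kappa_k}$, an expected-norm bound $2\sqrt{\kappa_k/n}$, and bounded differences with constant $4\sqrt{\kappa_k}/n$ plus a union bound, which yields the same deviation $2\sqrt{\kappa_k/n}+\sqrt{8\kappa_k\log(|\mathcal S_0|/\delta)/n}$; the reverse triangle inequality and a $\gamma_P/4$ budget then separate nulls from alternatives across $\gamma_P/2$. Your explicit remark that the non-logarithmic mean term is absorbed via $\log(2|\mathcal S_0|/\delta)\ge\log 2$ spells out a step the paper leaves implicit in the choice of $C$.
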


\begin{proof}
Write $\xi_i^g=k(\cdot,z_i)-k(\cdot,T_gz_i)$. These paired feature differences are independent across $i$ and satisfy $\|\xi_i^g\|\le2\sqrt{\kappa_k}$. Their mean error has expected norm at most $2\sqrt{\kappa_k/n}$, and replacing one example changes that norm by at most $4\sqrt{\kappa_k}/n$. Bounded differences \cite{BoucheronLugosiMassart2013} and a union bound therefore give, simultaneously for all candidates,
\begin{equation}
\left\|\frac1n\sum_i\xi_i^g-\E\xi_i^g\right\|
\le 2\sqrt{\frac{\kappa_k}{n}}+\sqrt{\frac{8\kappa_k\log(|\calS_0|/\delta)}{n}}.
\end{equation}
The reverse triangle inequality bounds $|\widehat{\MMD}_g-\MMD(P,T_g\push P)|$ by this expression. Choose $C$ so that it is at most $\gamma_P/4$. Null elements then have empirical discrepancy at most $\gamma_P/4$, and alternatives have discrepancy at least $3\gamma_P/4$, which separates them strictly across the threshold.

\end{proof}

\subsection{Quantum-native task discrepancy through classical--quantum states}
\label{subsec:cq_states}

Kernel MMD is a flexible distributional validation tool, but it depends on the chosen reproducing-kernel Hilbert space.  When the supervised inputs are quantum states, one can instead test task-level symmetry directly in the Hilbert space carrying the data.

Assume a supervised example consists of a quantum input state $\rho_X$ and a discrete label $Y$.  Embed the label in an auxiliary register with orthonormal basis $\{\ket y:y\in\calY\}$ and define the classical--quantum (CQ) task state
\begin{equation}
\rho_{\calZ}:=\E_{(X,Y)\sim P}\bigl[\rho_X\otimes\ket Y\!\bra Y\bigr].
\label{eq:cq_state}
\end{equation}
A candidate symmetry acts on this joint register as
\begin{equation}
\mathbb U_g:=U_g\otimes V_g,
\end{equation}
where the classical label action is implemented by $V_g\ket y=\ket{\rho_{\calY}(g)y}$. Thus a numerical sign action on $y\in\{+1,-1\}$ swaps the two label basis states. More general unitary $V_g$ still define a joint-register symmetry test, but represent the classical task transformation only when the encoding intertwines the two actions.

\begin{definition}[CQ task discrepancy]
\label[definition]{def:cq_discrepancy}
The quantum task discrepancy is the squared Hilbert--Schmidt distance
\begin{equation}
D_{\mathrm{CQ}}(g):=
\frac12\left\|\rho_{\calZ}-\mathbb U_g\rho_{\calZ}\mathbb U_g^\dagger\right\|_{\mathrm{HS}}^2
=
\Tr(\rho_{\calZ}^2)
-\Tr\!\left(\rho_{\calZ}\mathbb U_g\rho_{\calZ}\mathbb U_g^\dagger\right).
\label{eq:cq_discrepancy}
\end{equation}
\end{definition}

The equality in \eqref{eq:cq_discrepancy} uses unitary invariance of purity.  The condition $D_{\mathrm{CQ}}(g)=0$ is equivalent to invariance of the embedded CQ state.  This may be weaker than full invariance of the original distribution if the mean-embedding map $P\mapsto\rho_{\calZ}$ is not injective on the relevant distribution class, so the statistic should be interpreted as a task-level test at the chosen quantum encoding.

\begin{proposition}[Two-copy estimator of the CQ task discrepancy]
\label[proposition]{prop:cq_two_copy}
Suppose independent preparations of the CQ state \eqref{eq:cq_state} are available and $\mathbb U_g$ can be applied to one copy, and let $\mathbb S$ denote the swap operator between two copies of the joint input--label register.  Then
\begin{equation}
\Tr(\rho_{\calZ}^2)=\Tr\!\left(\mathbb S\,\rho_{\calZ}\otimes\rho_{\calZ}\right),
\qquad
\Tr\!\left(\rho_{\calZ}\mathbb U_g\rho_{\calZ}\mathbb U_g^\dagger\right)
=\Tr\!\left(\mathbb S\,\rho_{\calZ}\otimes\mathbb U_g\rho_{\calZ}\mathbb U_g^\dagger\right),
\label{eq:cq_swap_overlaps}
\end{equation}
so $D_{\mathrm{CQ}}(g)$ has an unbiased estimator from two independent SWAP measurements, with variance at most $2/M$ when $M$ shots are used per overlap.  Moreover, let
\begin{equation}
\calS_\star^{\mathrm{CQ}}:=\{g\in\calS_0:D_{\mathrm{CQ}}(g)=0\}
\end{equation}
and suppose $D_{\mathrm{CQ}}(g)\ge\gamma_{\mathrm{CQ}}$ for all $g\in\calS_0\setminus\calS_\star^{\mathrm{CQ}}$.  There is a universal constant $C$ such that, if
\begin{equation}
M\ge C\,\gamma_{\mathrm{CQ}}^{-2}\log\frac{2\abs{\calS_0}}{\delta}
\end{equation}
shots are used per overlap estimate and per tested dictionary element, then the threshold rule
\begin{equation}
\widehat{\calS}^{\mathrm{CQ}}
:=\{g\in\calS_0:\widehat D_{\mathrm{CQ}}(g)\le\gamma_{\mathrm{CQ}}/2\}
\end{equation}
recovers $\calS_\star^{\mathrm{CQ}}$ with probability at least $1-\delta$.
\end{proposition}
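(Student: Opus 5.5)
The proof has three parts: the swap identities, an unbiased estimator with its variance, and a concentration-plus-union-bound argument that yields exact recovery.

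\textbf{Swap identities.} For any operators $A,B$ on the joint register we have $\Tr(\mathbb S\,A\otimes B)=\Tr(AB)$. This is checked on product basis elements and extended by linearity. Taking $A=B=\rho_{\calZ}$ gives the purity identity. Taking $A=\rho_{\calZ}$ and $B=\mathbb U_g\rho_{\calZ}\mathbb U_g^\dagger$ gives the overlap identity in \eqref{eq:cq_swap_overlaps}. The second product state is prepared by applying $\mathbb U_g$ to one independent copy.

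\textbf{Estimator and variance.} $\mathbb S$ is Hermitian and squares to the identity, so its spectrum is $\{\pm1\}$. A SWAP test (ancilla-controlled swap, Hadamard, measure) therefore returns a $\pm1$ outcome whose mean is the overlap. Averaging $M$ such outcomes gives $\widehat P$ for the purity and $\widehat O_g$ for the overlap. Each is unbiased and has variance at most $1/M$. The estimator $\widehat D_{\mathrm{CQ}}(g)=\widehat P-\widehat O_g$ is unbiased by \eqref{eq:cq_discrepancy}. Because the two estimates use independent shots, their variances add, giving variance at most $2/M$. Fresh purity shots are drawn for each $g$, as the statement allows. Alternatively, a single shared purity estimate only helps the union bound.

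\textbf{Concentration and recovery.} Each estimate is an average of $M$ independent variables in $[-1,1]$. Hoeffding's inequality gives
\[
\Pr\bigl(|\widehat P-\Tr\rho_{\calZ}^2|>t\bigr)\le 2e^{-Mt^2/2},
\]
and the same bound for each overlap estimate. Set $t=\gamma_{\mathrm{CQ}}/8$ and union bound over the $2|\calS_0|$ estimates. With probability at least $1-\delta$, every $\widehat D_{\mathrm{CQ}}(g)$ is then within $\gamma_{\mathrm{CQ}}/4$ of $D_{\mathrm{CQ}}(g)$, provided
\[
M\ge 128\,\gamma_{\mathrm{CQ}}^{-2}\log(4|\calS_0|/\delta).
\]
This is absorbed into $C\gamma_{\mathrm{CQ}}^{-2}\log(2|\calS_0|/\delta)$ after adjusting $C$, using $\log(4x)\le 3\log(2x)$ for $x\ge 1/\delta>1$. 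On this event, null elements satisfy $\widehat D_{\mathrm{CQ}}(g)\le\gamma_{\mathrm{CQ}}/4<\gamma_{\mathrm{CQ}}/2$. Alternatives satisfy $\widehat D_{\mathrm{CQ}}(g)\ge 3\gamma_{\mathrm{CQ}}/4>\gamma_{\mathrm{CQ}}/2$. Hence the threshold rule returns exactly $\calS_\star^{\mathrm{CQ}}$.

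\textbf{Main obstacle.} The only delicate point is bookkeeping: making sure the independence and boundedness of the SWAP outcomes are stated for the actual physical test, including any ancilla, and that the constant absorption is valid for all $\delta<1$. Unlike \cref{thm:shadow_recovery}, no $U$-statistic refinement is needed here. The discrepancy is linear in the measured overlaps, so the rate is the additive inverse-square-gap rate.
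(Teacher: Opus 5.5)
Your proof is correct and follows essentially the same route as the paper. You use the swap identity $\Tr(\mathbb S\,A\otimes B)=\Tr(AB)$, $\pm1$-valued SWAP outcomes that give independent unbiased overlap estimates with variance at most $1/M$ each, and Hoeffding's inequality with a union bound over $\calS_0$ to place every $\widehat D_{\mathrm{CQ}}(g)$ within less than $\gamma_{\mathrm{CQ}}/2$ of its mean. Your explicit choice $t=\gamma_{\mathrm{CQ}}/8$ and the constant absorption just make concrete the universal constant that the paper leaves implicit.
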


\begin{proof}
For any density operators $\rho$ and $\sigma$ on the same Hilbert space, $\Tr(\mathbb S\rho\otimes\sigma)=\Tr(\rho\sigma)$; applying this with $\sigma=\rho_{\calZ}$ and with $\sigma=\mathbb U_g\rho_{\calZ}\mathbb U_g^\dagger$ gives \eqref{eq:cq_swap_overlaps}, and the SWAP test estimates each overlap directly \cite{Buhrman2001SwapTest}.  A SWAP measurement has outcomes in $[-1,1]$, so each empirical overlap has variance at most $1/M$; the two estimates are independent, which gives the variance bound.  For the recovery statement, Hoeffding's inequality applied to the two empirical overlaps and a union bound over $\calS_0$ give $\sup_{g\in\calS_0}\abs{\widehat D_{\mathrm{CQ}}(g)-D_{\mathrm{CQ}}(g)}<\gamma_{\mathrm{CQ}}/2$ under the stated budget, and the threshold rule then separates the zero-discrepancy elements from those above the margin.
\end{proof}

Averaging these overlap differences over the group estimates asymmetry of the joint input--label mean state. The class-difference gate of \cite{CompanionRelease} instead uses signed class-pair sampling. Their exact relation is given by \cref{prop:cq_breaking_mass}. The group-independent purity estimate can be shared across candidates; $M$ pairs for that estimate and $M$ pairs for each transformed overlap use $2M(1+|\calS_0|)$ independently prepared joint-state copies.

\begin{theorem}[CQ task discrepancy]
\label{thm:cq_task_discrepancy}
\begin{enumerate}
\item[(a)] \emph{Quantum-kernel MMD identity.}
Let $\Phi(x,y):=\rho_x\otimes\ket y\!\bra y$ be viewed as a feature map into the Hilbert--Schmidt space, and define
\begin{equation}
k_{\mathrm Q}\bigl((x,y),(x',y')\bigr)
:=\Tr(\rho_x\rho_{x'})\,\abs{\ip{y}{y'}}^2.
\label{eq:quantum_task_kernel}
\end{equation}
If $Q_g:=T_g\push P$ and the action satisfies $\Phi(T_gz)=\mathbb U_g\Phi(z)\mathbb U_g^\dagger$, then
\begin{equation}
D_{\mathrm{CQ}}(g)=\frac12\MMD_{k_{\mathrm Q}}^2(P,Q_g).
\label{eq:cq_mmd_identity}
\end{equation}
If the mean embedding $P\mapsto\E_P\Phi(Z)$ is injective on the class of task distributions under consideration, then $D_{\mathrm{CQ}}(g)=0$ is equivalent to $T_g\push P=P$.  In particular, for finite input and label alphabets encoded by mutually orthogonal basis states, and for group actions represented by the corresponding permutation unitaries, $D_{\mathrm{CQ}}(g)=0$ if and only if the original joint law is invariant under $g$.
\item[(b)] \emph{Shadow estimator without global SWAP tests.}
Let $(x_i,y_i)_{i=1}^{N}$ with $N\ge2$ be independent labeled quantum examples, and let $\widehat\rho_i$ be an independent unbiased classical shadow estimator of $\rho_{x_i}$.  For $i\ne j$ define
\begin{align}
\widehat k_{ij}^{0}
&:=\Tr(\widehat\rho_i\widehat\rho_j)\,\abs{\ip{y_i}{y_j}}^2,\\
\widehat k_{ij}^{g}
&:=\Tr(\widehat\rho_iU_g\widehat\rho_jU_g^\dagger)\,
\abs{\bra{y_i}V_g\ket{y_j}}^2 .
\end{align}
Then the second-order $U$-statistic
\begin{equation}
\widehat D_{\mathrm{CQ}}^{\mathrm{sh}}(g)
:=\frac{1}{N(N-1)}
\sum_{i\ne j}\left(\widehat k_{ij}^{0}-\widehat k_{ij}^{g}\right)
\label{eq:shadow_cq_u_stat}
\end{equation}
is an unbiased estimator of $D_{\mathrm{CQ}}(g)$.  The same bank of classical-shadow snapshots can be reused for every candidate $g\in\calS_0$, so dictionary validation along the CQ task discrepancy avoids global SWAP measurements and requires no additional quantum measurements after the shadows have been collected.

Let $h_g(i,j)$ be the symmetrization of $\widehat k_{ij}^{0}-\widehat k_{ij}^{g}$, and suppose its centered value $H_g=h_g(1,2)-D_{\mathrm{CQ}}(g)$ satisfies the uniform moment bound $\E\exp(tH_g)\le\exp(t^2\nu_{\mathrm{CQ}}^2/2)$ for $|t|\le1/\nu_{\mathrm{CQ}}$ and every $g\in\calS_0$. This specifies the sub-exponential scale of the pair kernel, including both example and measurement randomness.  Let
\begin{equation}
\calS_\star^{\mathrm{CQ}}:=\{g\in\calS_0:D_{\mathrm{CQ}}(g)=0\},
\qquad
D_{\mathrm{CQ}}(g)\ge\gamma_{\mathrm{CQ}}
\quad\text{for every }g\in\calS_0\setminus\calS_\star^{\mathrm{CQ}} .
\end{equation}
There is a universal constant $C$ such that, if
\begin{equation}
N\ge C\,\left(\nu_{\mathrm{CQ}}^{2}\gamma_{\mathrm{CQ}}^{-2}+\nu_{\mathrm{CQ}}\gamma_{\mathrm{CQ}}^{-1}\right)
\log\frac{2\abs{\calS_0}}{\delta},
\label{eq:shadow_cq_sample}
\end{equation}
then the threshold rule
\begin{equation}
\widehat{\calS}_{\mathrm{sh}}^{\mathrm{CQ}}
:=
\left\{g\in\calS_0:
\widehat D_{\mathrm{CQ}}^{\mathrm{sh}}(g)\le\gamma_{\mathrm{CQ}}/2
\right\}
\end{equation}
recovers $\calS_\star^{\mathrm{CQ}}$ with probability at least $1-\delta$.
\end{enumerate}
\end{theorem}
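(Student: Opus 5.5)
For part (a), I will expand the MMD in the feature space of $\Phi$. Since $k_{\mathrm Q}(z,z')=\ip{\Phi(z)}{\Phi(z')}_{\mathrm{HS}}$, where $\Tr(\rho_x\otimes\ket y\!\bra y\,\rho_{x'}\otimes\ket{y'}\!\bra{y'})=\Tr(\rho_x\rho_{x'})\abs{\ip{y}{y'}}^2$, the RKHS of $k_{\mathrm Q}$ is isometric to the closed span of $\{\Phi(z)\}$ in $\Herm$ with the Hilbert--Schmidt norm. Mean embeddings therefore correspond to $\E_P\Phi(Z)=\rho_{\calZ}$ and $\E_{Q_g}\Phi=\E_P\Phi(T_gZ)=\mathbb U_g\rho_{\calZ}\mathbb U_g^\dagger$ by the intertwining hypothesis. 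Hence $\MMD_{k_{\mathrm Q}}^2(P,Q_g)=\|\rho_{\calZ}-\mathbb U_g\rho_{\calZ}\mathbb U_g^\dagger\|_{\mathrm{HS}}^2$, which gives \eqref{eq:cq_mmd_identity} by \cref{def:cq_discrepancy}.

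The injectivity clause then follows immediately, because $D_{\mathrm{CQ}}(g)=0$ iff the two embeddings coincide. For finite alphabets with orthogonal encodings, $\Phi(x,y)$ are orthogonal rank-one projectors, so $\rho_{\calZ}=\sum_{x,y}P(x,y)\ket{x}\!\bra x\otimes\ket y\!\bra y$ is diagonal with the joint law on the diagonal. This makes the map injective, and permutation unitaries act by permuting the diagonal, so that diagonal equality is exactly $T_g\push P=P$.

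For unbiasedness in (b), I will use independence across $i\ne j$ together with the independence of the shadows. Conditioning on the examples, $\E[\Tr(\widehat\rho_i\widehat\rho_j)\mid x_i,x_j]=\Tr(\rho_{x_i}\rho_{x_j})$ by bilinearity and the unbiasedness of each shadow, and likewise with $U_g\widehat\rho_jU_g^\dagger$. The label factors are deterministic given $y_i,y_j$, so each summand has expectation $\Tr(\rho_{\calZ}^2)-\Tr(\rho_{\calZ}\mathbb U_g\rho_{\calZ}\mathbb U_g^\dagger)$, using $\E[\Phi(Z_i)\otimes\Phi(Z_j)]=\rho_{\calZ}\otimes\rho_{\calZ}$ and $\mathbb U_g(\rho\otimes\ket{y}\!\bra y)\mathbb U_g^\dagger=U_g\rho U_g^\dagger\otimes V_g\ket y\!\bra y V_g^\dagger$. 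Reuse of the shadow bank needs no argument beyond the fact that $g$ enters only through classical postprocessing. For i.i.d. pairs $(x_i,y_i,\widehat\rho_i)$ the statistic is an order-two $U$-statistic with kernel $h_g$.

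For concentration, I will use Hoeffding's permutation decomposition, as in the proof of \cref{thm:lie_recovery}. This writes $\widehat D^{\mathrm{sh}}_{\mathrm{CQ}}(g)-D_{\mathrm{CQ}}(g)$ as an average over permutations of means of $\lfloor N/2\rfloor$ i.i.d. copies of $H_g$. Jensen's inequality for the convex exponential transfers the moment-generating bound, and Bernstein's inequality for sub-exponential variables then gives a tail of
\[
2\exp\!\left[-c\lfloor N/2\rfloor\min\bigl(t^2/\nu_{\mathrm{CQ}}^2,\;t/\nu_{\mathrm{CQ}}\bigr)\right].
\]
Setting $t=\gamma_{\mathrm{CQ}}/4$ and union-bounding over $\calS_0$ yields \eqref{eq:shadow_cq_sample}. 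Thresholding at $\gamma_{\mathrm{CQ}}/2$ then separates nulls, whose estimates are at most $\gamma_{\mathrm{CQ}}/4$, from alternatives, whose estimates are at least $3\gamma_{\mathrm{CQ}}/4$. The main care point is the sub-exponential Chernoff step. Jensen preserves the bound only for $|t|\le1/\nu$ after rescaling by the pair count, so I will optimize $\lambda$ within $[0,\lfloor N/2\rfloor/\nu]$, which produces the two regimes appearing in the sample bound.
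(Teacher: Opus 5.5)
Your proposal is correct and follows essentially the same route as the paper. For (a), you identify $k_{\mathrm Q}$ as the Hilbert--Schmidt inner product of the features, so the MMD becomes the distance between $\rho_{\calZ}$ and $\mathbb U_g\rho_{\calZ}\mathbb U_g^\dagger$; for (b), you combine conditional unbiasedness of independent shadows for $i\ne j$ with the permutation-pairing argument of \cref{thm:lie_recovery}, the assumed sub-exponential moment bound, and a union bound, where your explicit Chernoff optimization over $\lambda\in[0,\lfloor N/2\rfloor/\nu_{\mathrm{CQ}}]$ spells out the two regimes that the paper's one-line appeal leaves implicit.
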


\begin{proof}
\emph{Part~(a).}
The mean embedding of $P$ under the Hilbert--Schmidt feature map is exactly $\rho_{\calZ}$.  The mean embedding of $Q_g$ is $\mathbb U_g\rho_{\calZ}\mathbb U_g^\dagger$.  Therefore
\begin{equation}
\MMD_{k_{\mathrm Q}}^2(P,Q_g)
=\left\|\rho_{\calZ}-\mathbb U_g\rho_{\calZ}\mathbb U_g^\dagger\right\|_{\mathrm{HS}}^2
=2D_{\mathrm{CQ}}(g).
\end{equation}
The equivalence for separating encodings follows from injectivity of the mean embedding on the specified distribution class.  Orthogonal classical encodings make $\rho_{\calZ}$ diagonal with entries equal to the probabilities of $(x,y)$, so equality of CQ states is equality of the joint probability table.

\emph{Part~(b).}
For $i\ne j$, the two shadows are independent and unbiased.  Hence
\begin{equation}
\E\widehat k_{ij}^{0}
=
\E\,\Tr(\rho_{x_i}\rho_{x_j})\abs{\ip{y_i}{y_j}}^2
=\Tr(\rho_{\calZ}^{2}),
\end{equation}
and
\begin{equation}
\E\widehat k_{ij}^{g}
=
\E\,\Tr(\rho_{x_i}U_g\rho_{x_j}U_g^\dagger)
\abs{\bra{y_i}V_g\ket{y_j}}^2
=
\Tr(\rho_{\calZ}\mathbb U_g\rho_{\calZ}\mathbb U_g^\dagger).
\end{equation}
Their difference has expectation $D_{\mathrm{CQ}}(g)$ by \eqref{eq:cq_discrepancy}.  The condition $i\ne j$ removes the same-shadow self-overlap bias and makes \eqref{eq:shadow_cq_u_stat} a genuine unbiased order-two $U$-statistic.  Under the stated sub-exponential scale bound, the permutation-pairing argument used in \cref{thm:lie_recovery}, now with the assumed sub-exponential moment-generating-function bound for the symmetrized pair kernel, gives an error strictly below $\gamma_{\mathrm{CQ}}/2$ for a fixed $g$ under the sample condition \eqref{eq:shadow_cq_sample}.  A union bound over $\calS_0$ and the margin assumption then give exact recovery by the threshold rule.
\end{proof}

The two implementations have different resource costs. Paired-copy measurements estimate global overlaps with bounded outcomes. Reusable shadows replace those measurements by classical processing of a fixed snapshot bank, whose overlap scale $\nu_{\mathrm{CQ}}$ can grow exponentially with the number of qubits. A feature map formed from bounded-weight local observables or reduced states can have polynomial estimator scale when its feature count and coefficients are also controlled. Such a local feature map defines its own mean-embedding discrepancy; it must intertwine the candidate action, or be closed under its orbit, for the symmetry interpretation to apply.

For a non-injective mean embedding, zero discrepancy identifies symmetry of the encoded task mean. Distinct distributions can have the same mean even when individual inputs have distinct encodings. This is the information seen by linear expectation readouts; nonlinear functions of separately estimated features can depend on higher moments. A zero encoded discrepancy therefore supports retention only for the declared representation and predictor interface.

The observable-level and task-level tests answer different questions.  Let $X$ be a quantum input state $\rho_X$ and let a scalar label be generated by an observable $A$ through $Y=\Tr(A\rho_X)+\xi$.  If $A$ is invariant under $g$, the state distribution is $g$-invariant, and $\xi$ is independent noise with a fixed law, then the joint law is also invariant.  If $A$ is odd or otherwise nontrivial under $g$ while the output is treated as a scalar invariant label, then the same state distribution may produce a task-level symmetry mismatch.  This is the typical situation for order-parameter learning: the states may be sampled from a symmetric finite-size ensemble, while the label or diagnostic intentionally measures a symmetry-breaking direction.

An observable-invisible element is a candidate physical constraint. Task validation determines its compatible label action: an odd order parameter can respect sign equivariance while violating invariant-label symmetry. Failure of a full joint-law test does not alone prove excess prediction bias, since the regression function may remain equivariant even when label noise is asymmetric. The risk criterion is the projection residual in \cref{thm:projection_bias}. The relation between task-state asymmetry and a breaking readout is quantified by the following identity.

\begin{proposition}[Task discrepancy and breaking mass]
\label[proposition]{prop:cq_breaking_mass}
Let $Y$ take finitely many values with probabilities $p_y$, let $\rho_y$ be the class-conditional input states, and let $\rho_{\calZ}=\sum_y p_y\,\rho_y\otimes\ket y\!\bra y$ carry the joint action $\mathbb U_g=U_g\otimes V_g$ of a compact group $H$.  Write $\calT^{\mathbb U}_H$ for the Hilbert--Schmidt projection onto the commutant of $\{\mathbb U_g\}_{g\in H}$, and $\calT_H$ for the corresponding projection for $\{U_g\}_{g\in H}$.  Then
\begin{equation}
\int_H D_{\mathrm{CQ}}(g)\,dg=\bigl\|(\mathrm{id}-\calT^{\mathbb U}_H)\rho_{\calZ}\bigr\|_{\mathrm{HS}}^2 .
\label{eq:bridge_general}
\end{equation}
For a binary $\mathbb Z_2=\{e,u\}$ action, the trivial label action uses $V_u=I_2$, while the sign action $y\mapsto-y$ is implemented on the classical register by $V_u\ket{+}=\ket{-}$ and $V_u\ket{-}=\ket{+}$. For the trivial label representation the right-hand side equals $\sum_y p_y^2\|(\mathrm{id}-\calT_H)\rho_y\|_{\mathrm{HS}}^2$.  For $H=\mathbb Z_2=\{e,u\}$, balanced binary labels, $S=\rho_++\rho_-$, $\Delta=\rho_+-\rho_-$ and $\calT_H(X)=\tfrac12(X+U_uXU_u^\dagger)$,
\begin{align}
D_{\mathrm{CQ}}(u;\mathrm{triv})&=\tfrac14\bigl(\|(\mathrm{id}-\calT_H)S\|_{\mathrm{HS}}^2
 +\|(\mathrm{id}-\calT_H)\Delta\|_{\mathrm{HS}}^2\bigr),
\label{eq:bridge_z2_trivial}\\
D_{\mathrm{CQ}}(u;\mathrm{sign})&=\tfrac14\bigl(\|(\mathrm{id}-\calT_H)S\|_{\mathrm{HS}}^2
 +\|\calT_H\Delta\|_{\mathrm{HS}}^2\bigr).
\label{eq:bridge_z2_sign}
\end{align}
Finally, for binary labels, the trivial label representation, and any compact $H$ and positive class probabilities,
\begin{equation}
\|(\mathrm{id}-\calT_H)\Delta\|_{\mathrm{HS}}^2
\le 2\sum_{y}\|(\mathrm{id}-\calT_H)\rho_y\|_{\mathrm{HS}}^2
\le \frac{2}{\min_y p_y^2}\int_H D_{\mathrm{CQ}}(g)\,dg .
\label{eq:bridge_inequality}
\end{equation}
\end{proposition}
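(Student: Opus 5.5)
The plan is to realize $\calT^{\mathbb U}_H$ as the group twirl. The identity \eqref{eq:bridge_general} then reduces to a purity computation, and the three specializations follow by block-diagonal bookkeeping on the label register. Throughout I write $\mathcal U_g(X):=\mathbb U_gX\mathbb U_g^\dagger$, which is a unitary (hence orthogonal) map on $\Herm$ with the Hilbert--Schmidt inner product.

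\emph{Step 1: the twirl is the projection onto the commutant.} Define $\mathcal R(X):=\int_H\mathcal U_g(X)\,dg$. By invariance of normalized Haar measure, $\mathcal R$ has range in the commutant and fixes it pointwise, so $\mathcal R$ is idempotent. The identity $\mathcal U_g^\dagger=\mathcal U_{g^{-1}}$ and inversion invariance of Haar measure make $\mathcal R$ self-adjoint. Hence $\mathcal R=\calT^{\mathbb U}_H$, and likewise $\int_H U_gXU_g^\dagger\,dg=\calT_H(X)$; this is the step I would check most carefully.

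\emph{Step 2: the general identity.} Integrating the purity form \eqref{eq:cq_discrepancy} over $H$ gives
\[
\int_H D_{\mathrm{CQ}}(g)\,dg=\|\rho_{\calZ}\|_{\mathrm{HS}}^2-\ip{\rho_{\calZ}}{\calT^{\mathbb U}_H\rho_{\calZ}}_{\mathrm{HS}}=\|\rho_{\calZ}\|_{\mathrm{HS}}^2-\|\calT^{\mathbb U}_H\rho_{\calZ}\|_{\mathrm{HS}}^2 .
\]
The second equality uses that $\calT^{\mathbb U}_H$ is an orthogonal projection, and Pythagoras then yields \eqref{eq:bridge_general}.

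\emph{Step 3: trivial label action.} With $V_g=I$, the twirl acts blockwise: $\calT^{\mathbb U}_H(\sum_y p_y\rho_y\otimes\ket y\!\bra y)=\sum_y p_y\,\calT_H(\rho_y)\otimes\ket y\!\bra y$. The blocks are orthogonal, which gives $\sum_y p_y^2\|(\mathrm{id}-\calT_H)\rho_y\|_{\mathrm{HS}}^2$.

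\emph{Step 4: the $\mathbb Z_2$ formulas.} For $H=\{e,u\}$, since $D_{\mathrm{CQ}}(e)=0$, the Haar average is $\tfrac12 D_{\mathrm{CQ}}(u)$.
\begin{itemize}
\item \emph{Trivial label.} Step 3 with $p_\pm=\tfrac12$ gives $D_{\mathrm{CQ}}(u)=\tfrac12\bigl(\|(\mathrm{id}-\calT_H)\rho_+\|_{\mathrm{HS}}^2+\|(\mathrm{id}-\calT_H)\rho_-\|_{\mathrm{HS}}^2\bigr)$. The parallelogram law $\|a\|^2+\|b\|^2=\tfrac12(\|a+b\|^2+\|a-b\|^2)$, applied to the images of $\rho_\pm$, gives \eqref{eq:bridge_z2_trivial}.
\item \emph{Sign label.} Here I would compute directly from the definition. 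Conjugation swaps the label blocks, so the difference $\rho_{\calZ}-\mathcal U_u(\rho_{\calZ})$ has diagonal blocks $\tfrac12(\rho_+-U_u\rho_-U_u^\dagger)$ and $\tfrac12(\rho_--U_u\rho_+U_u^\dagger)$.
\item Substituting $\rho_\pm=\tfrac12(S\pm\Delta)$ turns these blocks into $\tfrac12\bigl[(\mathrm{id}-\calT_H)S\pm\calT_H\Delta\bigr]$. The cross terms cancel in the sum of squares, and the prefactor $\tfrac12$ from \eqref{eq:cq_discrepancy} then gives \eqref{eq:bridge_z2_sign}.
\end{itemize}

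\emph{Step 5: the inequality.} From $\Delta=\rho_+-\rho_-$, linearity of $\mathrm{id}-\calT_H$ and $\|a-b\|^2\le2(\|a\|^2+\|b\|^2)$ give the first bound in \eqref{eq:bridge_inequality}. Bounding $p_y^2\ge\min_yp_y^2$ in the trivial-label formula of Step 3 and using \eqref{eq:bridge_general} gives the second.

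The only genuinely structural step is Step 1, the identification of the twirl with the Hilbert--Schmidt projector onto the commutant. Everything else is bookkeeping, where the main risk is the factors of $\tfrac12$ in the sign case.
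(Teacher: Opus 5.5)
Your proposal is correct and follows the paper's proof essentially step for step. You use the purity form together with self-adjointness and idempotence of the twirl for \eqref{eq:bridge_general}, the blockwise action on the label register in the trivial case, the direct block computation with $\rho_\pm=\tfrac12(S\pm\Delta)$ in the sign case, and the bound $\|a-b\|^2\le2(\|a\|^2+\|b\|^2)$ followed by $p_y^2\ge\min_y p_y^2$ for \eqref{eq:bridge_inequality}. The differences are cosmetic: you verify explicitly that the Haar twirl is the Hilbert--Schmidt projector, and you obtain the trivial $\mathbb Z_2$ formula by halving the Haar average and applying the parallelogram law, where the paper computes the label blocks of $\rho_{\calZ}-\mathbb U_u\rho_{\calZ}\mathbb U_u^\dagger$ directly.
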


\begin{proof}
By \eqref{eq:cq_discrepancy}, $\int_H D_{\mathrm{CQ}}(g)\,dg=\Tr(\rho_{\calZ}^2)-\int_H\Tr(\rho_{\calZ}\mathbb U_g\rho_{\calZ}\mathbb U_g^\dagger)\,dg$, and the second term equals $\Tr(\rho_{\calZ}\calT^{\mathbb U}_H(\rho_{\calZ}))=\|\calT^{\mathbb U}_H\rho_{\calZ}\|_{\mathrm{HS}}^2$ because $\calT^{\mathbb U}_H$ is a self-adjoint idempotent; this gives \eqref{eq:bridge_general}.  For the trivial label representation $\calT^{\mathbb U}_H=\calT_H\otimes\mathrm{id}$ acts blockwise on the label register and the blocks are Hilbert--Schmidt orthogonal, which gives the class-weighted sum.  For $\mathbb Z_2$ and the trivial representation the label blocks of $\rho_{\calZ}-\mathbb U_u\rho_{\calZ}\mathbb U_u^\dagger$ are $(\mathrm{id}-\calT_H)\rho_\pm=\tfrac12\bigl((\mathrm{id}-\calT_H)S\pm(\mathrm{id}-\calT_H)\Delta\bigr)$, and the cross terms cancel in the sum of the two squared norms, giving \eqref{eq:bridge_z2_trivial}.  For the sign representation the blocks are $\tfrac12(\rho_+-U_u\rho_-U_u^\dagger)$ and $\tfrac12(\rho_--U_u\rho_+U_u^\dagger)$, with $\rho_+-U_u\rho_-U_u^\dagger=(\mathrm{id}-\calT_H)S+\calT_H\Delta$ and the opposite sign in the second block; the two summands lie in orthogonal ranges, so their squared norms add and \eqref{eq:bridge_z2_sign} follows.  For \eqref{eq:bridge_inequality}, $\mathrm{id}-\calT_H$ is a contraction and $\Delta=\sum_y\varepsilon_y\rho_y$ with $\varepsilon_y\in\{\pm1\}$ in the binary case, so the triangle inequality and $(a+b)^2\le 2a^2+2b^2$ give the first bound; the second follows from $\min_yp_y^2\sum_y\|(\mathrm{id}-\calT_H)\rho_y\|^2\le\sum_yp_y^2\|(\mathrm{id}-\calT_H)\rho_y\|^2$.
\end{proof}

The group average in \eqref{eq:bridge_general} is the asymmetry mass of the joint CQ state. The binary task breaking mass of \cite{CompanionRelease} is instead $\|(\mathrm{id}-\calT_H)\Delta\|_{\mathrm{HS}}^2$. Equations~\eqref{eq:bridge_z2_trivial}--\eqref{eq:bridge_inequality} relate these distinct quantities, including the label representation and class probabilities. For a general label action, the joint-state condition includes all input--label intertwining blocks and need not reduce to a single condition on $\Delta$.

For the spin-flip example of \cite{CompanionRelease}, $U_u=X^{\otimes n}$ exchanges the two class states. Then $S$ is even and $\Delta$ is odd, so $D_{\mathrm{CQ}}(u;\mathrm{sign})=0$ while $D_{\mathrm{CQ}}(u;\mathrm{triv})=\|\Delta\|_{\mathrm{HS}}^2/4$. An equivariant predictor with the sign action and an invariant predictor enlarged by an odd readout can both access this signal; they are different model constraints. Sector masses identify candidate label types, while the joint CQ test checks the full input--label action. For the trivial label action, \eqref{eq:bridge_inequality} implies that exact CQ invariance closes the binary breaking-mass gate, and supplies the conversion factor for approximate invariance. A passed sign-action test can coexist with a positive odd-sector gate. All empirical conclusions use confidence intervals at the declared tolerance.

The same averaging identifies the observable-level statistic.  If $\calO$ is an orthonormal Hermitian basis, so that the probes are informationally complete, then $D_{\calO}(g;\rho)=\tfrac1m\|\rho-U_g\rho U_g^\dagger\|_{\mathrm{HS}}^2$, and averaging over $H$ gives $\int_HD_{\calO}(g;\rho)\,dg=\tfrac2m\|(\mathrm{id}-\calT_H)\rho\|_{\mathrm{HS}}^2$.  The Hilbert--Schmidt asymmetry tests of \cite{Bandyopadhyay2023} measure this state-asymmetry mass through two-copy overlaps, while \cite{LaBordeRethinasamyWilde2024} develops complementary coherent projector circuits; \eqref{eq:Dobs} is its local-probe, per-element refinement, which is what makes element-wise dictionary recovery possible at a cost set by probe locality rather than by $\dim\calH$.

\section{Symmetry mismatch and statistical complexity}
\label{sec:mismatch}

Discovery and release are useful only if the statistical consequences of wrong symmetry choices are explicit.  There are two complementary errors.  Imposing too large a group removes task-relevant functions and creates bias.  Relaxing the group enlarges the commutant and changes the capacity term for a specified norm-constrained readout class.

\subsection{Projection bias from imposing too much symmetry}

Let $G$ act on an input space $\calX$ and an output Hilbert space $\calY$.  Let $P_X$ be $G$-invariant, and define the unitary action on square-integrable predictors by
\begin{equation}
(\mathscr U_gf)(x):=\rho_{\calY}(g)f(g^{-1}\cdot x).
\end{equation}
The $G$-equivariant function class is the fixed-point subspace
\begin{equation}
\calF_G:=\{f:\mathscr U_gf=f\text{ for all }g\in G\}.
\end{equation}
Let $\Pi_G$ be the orthogonal projection onto $\calF_G$, implemented by Haar averaging when $G$ is compact.

\begin{theorem}[Irreducible bias under an overly large symmetry]
\label{thm:projection_bias}
Let $\E\|Y\|^2<\infty$ and $f_\star(x)=\E[Y\mid X=x]$. For the squared risk $\mathcal R(f)=\E\|f(X)-Y\|^2$,
\begin{equation}
\inf_{f\in\calF_G}\mathcal R(f)-\mathcal R(f_\star)
=\|f_\star-\Pi_Gf_\star\|_{L^2(P_X)}^2.
\end{equation}
Moreover, for any $g\in G$,
\begin{equation}
\|f_\star-\Pi_Gf_\star\|_{L^2(P_X)}
\ge \frac12\|f_\star-\mathscr U_gf_\star\|_{L^2(P_X)}.
\end{equation}
\end{theorem}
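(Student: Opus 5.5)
The plan is to work in the Hilbert space $L^2(P_X;\calY)$ and reduce both claims to elementary facts about orthogonal projections. First I check that $\mathscr U_g$ is unitary there. Because $P_X$ is $G$-invariant and $\rho_{\calY}(g)$ is orthogonal or unitary, the substitution $x\mapsto g\cdot x$ gives $\|\mathscr U_gf\|^2=\int\|f(g^{-1}x)\|^2\,dP_X(x)=\|f\|^2$. Joint continuity and compactness then make the Haar average $\Pi_G f=\int_G\mathscr U_gf\,dg$ a well-defined (Bochner) integral. It is the orthogonal projection onto the closed subspace $\calF_G$: it is idempotent by invariance of Haar measure, self-adjoint because $\mathscr U_g^\dagger=\mathscr U_{g^{-1}}$ and Haar measure is inversion invariant, and its range is exactly the fixed-point space. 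This is the only step needing measure-theoretic care, so I would state it explicitly rather than treat it as implicit.

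For the risk identity, I use the standard decomposition for any $f\in L^2(P_X;\calY)$:
\[
\mathcal R(f)=\mathcal R(f_\star)+\|f-f_\star\|_{L^2(P_X)}^2 .
\]
This holds because the cross term $\E\langle f(X)-f_\star(X),\,f_\star(X)-Y\rangle$ vanishes after conditioning on $X$. The hypothesis $\E\|Y\|^2<\infty$ ensures $f_\star\in L^2$ and that every term is finite. Predictors outside $L^2$ have infinite risk, so the infimum over $\calF_G$ may be taken inside $L^2$. It then equals $\mathcal R(f_\star)+\operatorname{dist}(f_\star,\calF_G)^2$, and the distance is attained at $\Pi_Gf_\star$ by the projection theorem, which gives the first display.

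For the lower bound, set $h=f_\star-\Pi_Gf_\star$. Since $\Pi_Gf_\star\in\calF_G$, it is fixed by $\mathscr U_g$, so
\[
f_\star-\mathscr U_gf_\star=h-\mathscr U_gh .
\]
The triangle inequality and unitarity give $\|h-\mathscr U_gh\|\le 2\|h\|$, which is the claimed factor $\tfrac12$. No obstacle is expected beyond the unitarity and projection verification in the first step.
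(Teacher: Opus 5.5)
Your proposal is correct and follows the paper's proof essentially step for step. Both arguments obtain unitarity of $\mathscr U_g$ from invariance of $P_X$, take the Haar average as the orthogonal projection onto $\calF_G$, and combine the conditional-expectation risk decomposition with Pythagoras; both then prove the lower bound via $\|(I-\mathscr U_g)(f_\star-\Pi_Gf_\star)\|\le 2\|f_\star-\Pi_Gf_\star\|$. Your explicit checks of the projection properties (integrability, idempotence, self-adjointness, range) fill in details that the paper asserts without proof.
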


\begin{proof}
The representation $\mathscr U_g$ is unitary on $L^2(P_X;\calY)$ because $P_X$ is invariant and $\rho_{\calY}$ is unitary.  Hence $\Pi_G=\int_G\mathscr U_g\,dg$ is the orthogonal projection onto the fixed-point subspace.  Orthogonal decomposition gives
\begin{equation}
f_\star=\Pi_Gf_\star+(f_\star-\Pi_Gf_\star),
\qquad
f_\star-\Pi_Gf_\star\perp\calF_G,
\end{equation}
For every square-integrable predictor, conditional expectation gives $\mathcal R(f)-\mathcal R(f_\star)=\|f-f_\star\|_{L^2(P_X)}^2$, and Pythagoras proves the risk identity.  For the second claim, use $\mathscr U_g\Pi_Gf_\star=\Pi_Gf_\star$ and $\|I-\mathscr U_g\|\le2$:
\begin{equation}
\|f_\star-\mathscr U_gf_\star\|=
\|(I-\mathscr U_g)(f_\star-\Pi_Gf_\star)\|
\le2\|f_\star-\Pi_Gf_\star\|.
\end{equation}
\end{proof}

\subsection{Commutant dimension and logarithmic generalization scale}

Let a subgroup $H\le G$ be enforced on observables through the commutation relation $[O,U_h]=0$ for all $h\in H$.  The admissible readout space is the Hermitian part of the commutant,
\begin{equation}
\calA_H:=\Comm(H)\cap\Herm(\calH),
\qquad
m_H:=\dim_{\R}\calA_H.
\end{equation}
Capacity depends on the norm used to constrain the readout. For operator-norm-bounded basis functions, an $\ell_1$ coefficient ball has a logarithmic dictionary-size bound by Massart's finite-class lemma \cite{Massart2000}. This sparsity-promoting constraint specifies the model class; it does not assert that every feasible coefficient vector has small support. A fixed Hilbert--Schmidt ball has a different geometry and can admit dimension-free bounds because density matrices have Hilbert--Schmidt norm at most one. The following statement is the standard finite-dictionary bound specialized to a symmetric readout basis \cite{BartlettMendelson2002}.

\begin{theorem}[Logarithmic commutant-dimension generalization scale]
\label{thm:commutant_bound}
Fix $H$, $W$, and a Hermitian operator basis $\{A_i\}_{i=1}^{m_H}$ of $\calA_H$ with $\norm{A_i}_{\mathrm{op}}\le 1$, all chosen independently of the risk-estimation data.  Consider symmetric observables
\begin{equation}
O_w=\sum_{i=1}^{m_H}w_iA_i,
\qquad
\norm{w}_1\le W.
\end{equation}
Let $\mathcal R(O)=\E\ell(\Tr(O\rho),Y)$ and $\widehat{\mathcal R}_N(O)=N^{-1}\sum_{s=1}^N\ell(\Tr(O\rho_s),y_s)$. Assume the loss takes values in $[0,B_\ell]$ and is $L_\ell$-Lipschitz in its scalar prediction. Then, for $N$ independent and identically distributed labeled examples, with probability at least $1-\delta$,
\begin{equation}
\sup_{\norm{w}_1\le W}
\abs{\mathcal R(O_w)-\widehat{\mathcal R}_N(O_w)}
\le
2L_\ell W\sqrt{\frac{2\log(2m_H)}{N}}
+B_\ell\sqrt{\frac{\log(2/\delta)}{2N}}.
\end{equation}
\end{theorem}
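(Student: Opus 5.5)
The proof is the standard symmetrization--contraction--Massart chain, applied to the finite symmetric readout dictionary. First, the loss lies in $[0,B_\ell]$, so changing one example changes $\sup_w|\mathcal R(O_w)-\widehat{\mathcal R}_N(O_w)|$ by at most $B_\ell/N$. McDiarmid's bounded-differences inequality therefore gives, with probability at least $1-\delta$, that the supremum is at most its expectation plus a deviation term. A one-sided bounded-difference argument gives $B_\ell\sqrt{\log(1/\delta)/(2N)}$, and applying it to both deviation directions yields the stated $\log(2/\delta)$ form. The remaining task is to bound the expectation of the supremum by $2L_\ell W\sqrt{2\log(2m_H)/N}$.

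Next, standard symmetrization with independent Rademacher signs bounds the expected supremum by twice the Rademacher complexity of the loss class $\{(\rho,y)\mapsto\ell(\Tr(O_w\rho),y)\}$. The Ledoux--Talagrand contraction inequality then removes the $L_\ell$-Lipschitz loss, pointwise in $y$. This is the one step requiring care about constants: the version for functions that are Lipschitz but not vanishing at the origin gives the factor $L_\ell$ on the class of one-sided suprema. This reduces the problem to the linear class $\{\rho\mapsto\Tr(O_w\rho):\|w\|_1\le W\}$.

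For the linear class,
\[
\sum_s\sigma_s\Tr(O_w\rho_s)=\sum_i w_i\sum_s\sigma_s\Tr(A_i\rho_s).
\]
Since $\|w\|_1\le W$, the supremum over $w$ is $W\max_i|\sum_s\sigma_s a_{is}|$, where $a_{is}=\Tr(A_i\rho_s)$. Each $a_{is}$ satisfies $|a_{is}|\le\|A_i\|_{\mathrm{op}}\|\rho_s\|_1\le1$. The maximum of absolute values is a maximum over the $2m_H$ vectors $\pm a_i$, each of Euclidean norm at most $\sqrt N$. Massart's finite-class lemma then bounds the empirical Rademacher complexity by $W\sqrt{2\log(2m_H)/N}$.

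Combining the factor $2$ from symmetrization, $L_\ell$ from contraction, and the Massart bound gives the first term of the statement. Adding the McDiarmid deviation gives the second, which completes the argument.

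The main obstacle is the constant bookkeeping in the contraction step combined with the two-sided supremum. If the absolute-value contraction would cost an extra factor of $2$, I would instead bound the two one-sided suprema separately, each via contraction without absolute values. I would allot $\delta/2$ to each side, which produces exactly the $\log(2/\delta)$ term. The independence of $H$, $W$ and $\{A_i\}$ from the data is what allows them to be treated as a fixed function class throughout.
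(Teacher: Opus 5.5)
Your proposal, in its fallback form, is the paper's proof: one-sided symmetrization, one-sided Talagrand contraction, Massart's lemma on the $2m_H$ vectors $\pm a_i$ (the symmetry of the $\ell_1$ ball makes the one-sided supremum equal $W\max_i|\cdot|$), and bounded differences applied separately to $\sup_w(\mathcal R-\widehat{\mathcal R}_N)$ and $\sup_w(\widehat{\mathcal R}_N-\mathcal R)$ with budget $\delta/2$ each. This one-sided route is required, not a contingency: contraction for the absolute-value supremum fails for a loss not vanishing at zero (a constant loss has $L_\ell=0$ but a nonzero absolute Rademacher average), so present the one-sided version as the main argument.
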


\begin{proof}
For density operators $\rho$, each coordinate function $a_i(\rho)=\Tr(A_i\rho)$ satisfies $\abs{a_i(\rho)}\le\norm{A_i}_{\mathrm{op}}\norm{\rho}_1\le1$.  The empirical Rademacher complexity of the linear class is
\begin{equation}
\mathfrak R_N
\le
W\,\E_\sigma
\max_{1\le i\le m_H}
\left|
\frac1N\sum_{s=1}^{N}\sigma_s a_i(\rho_s)
\right|.
\end{equation}
Applying Massart's lemma to the finite class $\{\pm a_i:1\le i\le m_H\}$ gives
\begin{equation}
\mathfrak R_N\le W\sqrt{\frac{2\log(2m_H)}{N}}.
\end{equation}
The Lipschitz contraction inequality multiplies this by $L_\ell$, and the standard symmetrization bound contributes the leading factor $2$.  Apply bounded differences separately to $\sup_O(\mathcal R-\widehat{\mathcal R}_N)$ and $\sup_O(\widehat{\mathcal R}_N-\mathcal R)$, each with failure budget $\delta/2$. The deterministic Rademacher bound above controls the expectation in both directions and gives the stated two-sided inequality.
\end{proof}

The bound holds conditionally for a group and basis selected from independent discovery data. Selecting among a fixed family of $J$ such classes on the same risk sample requires a simultaneous bound, obtained here by replacing $\log(2/\delta)$ with $\log(2J/\delta)$. The $\ell_2$ and soft-release bounds of \cite{CompanionRelease} use data-dependent feature covariance instead. These are distinct norm-constrained classes, so their comparison depends on the radii, basis normalization, and task approximation error.

For $n$ qubits, permutation symmetry gives an explicit change in the dictionary-size factor at fixed $W$.

\begin{proposition}[Permutation-symmetric dimension collapse]
\label[proposition]{prop:schur_dim}
On $\calH_n=(\C^2)^{\otimes n}$,
\begin{equation}
m_{\{e\}}=\dim_\R\Herm(\calH_n)=4^n,
\end{equation}
whereas for the full permutation group $S_n$ acting by permuting tensor factors,
\begin{equation}
m_{S_n}=\binom{n+3}{3}.
\end{equation}
\end{proposition}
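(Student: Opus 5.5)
The first claim is immediate. The Hermitian operators on $\calH_n$ form a real vector space whose dimension equals the complex dimension of $\End(\calH_n)$, namely $(2^n)^2=4^n$. An explicit real basis is the set of $4^n$ Pauli strings.

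For the permutation commutant, I would use the Pauli-string basis and let $S_n$ act by conjugation with the permutation unitaries $U_\pi$. Conjugation simply permutes the tensor factors of a Pauli string: $U_\pi(P_1\otimes\cdots\otimes P_n)U_\pi^\dagger=P_{\pi^{-1}(1)}\otimes\cdots\otimes P_{\pi^{-1}(n)}$. The commutant $\Comm(S_n)$ is the fixed-point space of this conjugation action. Because the action permutes a basis, its fixed points are spanned by the orbit sums. The orbit sums are supported on disjoint basis elements, so they are linearly independent. Each orbit sum is a sum of Hermitian Pauli strings and is therefore Hermitian. Hence $\calA_{S_n}=\Comm(S_n)\cap\Herm(\calH_n)$ is the real span of the orbit sums. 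Indeed, a Hermitian fixed point has real Pauli coefficients that are constant on orbits, so $m_{S_n}$ equals the number of orbits.

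The orbits of $S_n$ on strings in $\{I,X,Y,Z\}^n$ are classified by the type vector $(n_I,n_X,n_Y,n_Z)$ of letter counts with $n_I+n_X+n_Y+n_Z=n$. Stars and bars counts these as $\binom{n+3}{3}$. As an independent check, Schur--Weyl duality gives $\Comm(S_n)=\operatorname{Sym}^n(\End(\C^2))$, which has dimension $\binom{4+n-1}{n}$.

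The only point needing care is the reduction from the complex commutant to its Hermitian part. The argument above handles it: the Pauli basis is Hermitian, and the permutation action has real (0/1) matrix entries in that basis, so real and complex fixed-point dimensions agree. No step is a serious obstacle.
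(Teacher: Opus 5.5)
Your proof is correct, but it takes a more elementary route than the paper's. The paper invokes Schur--Weyl duality to identify $\Comm(S_n)$ with the image of the collective $\End(\C^2)$ action. That image is a complex space isomorphic to $\operatorname{Sym}^n(\C^4)$. The paper then passes to the Hermitian part using the general fact that the Hermitian part of a finite-dimensional $*$-algebra has real dimension equal to the algebra's complex dimension.

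You instead compute the conjugation fixed points directly in the Pauli basis. Conjugation permutes a Hermitian basis, so the fixed space is spanned by orbit sums and the Hermitian fixed space is their real span. The count then reduces to stars and bars on letter types. The two arguments are the same computation seen from different sides: your orbit sums are exactly the monomial basis of $\operatorname{Sym}^n(\C^4)$, written in the basis $\{I,X,Y,Z\}$ of $\End(\C^2)$. Your version, however, never uses the double-commutant half of Schur--Weyl duality. It needs only the elementary fact that the permutation-fixed elements of $\End(\C^2)^{\otimes n}$ are the symmetric tensors, and it handles the Hermitian reduction without the $*$-algebra lemma.

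Your route also gives something the paper's dimension count does not: an explicit Hermitian basis of $\calA_{S_n}$. Dividing each orbit sum by its orbit size $\binom{n}{n_I,n_X,n_Y,n_Z}$ gives the $S_n$-average of a single Pauli string, whose operator norm is at most one. This normalized family is therefore directly an admissible basis for \cref{thm:commutant_bound}.

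The paper's route is shorter and ties into the representation theory used elsewhere in the paper. Both extend verbatim to qudits, giving $\binom{n+d^2-1}{d^2-1}$; yours does so by replacing the Pauli basis with any fixed Hermitian basis of $\End(\C^d)$ on every site.
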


\begin{proof}
The first identity is dimension counting.  For the second, the commutant of the permutation action is the image of the collective $\End(\C^2)$ action.  As a complex vector space it is isomorphic to $\operatorname{Sym}^n(\C^4)$, whose dimension is $\binom{n+3}{3}$ \cite{FultonHarris1991}.  Since the Hermitian part of a finite-dimensional $*$-algebra has real dimension equal to the complex dimension of the algebra, the same number is $m_{S_n}$.
\end{proof}

Excessive symmetry excludes part of the regression function and incurs the projection bias in \cref{thm:projection_bias}. Enlarging an $\ell_1$ readout dictionary changes the capacity bound through $\sqrt{\log m_H}$. At fixed coefficient radius and normalized basis observables, the trivial subgroup gives $O(W\sqrt{n/N})$, whereas full permutation symmetry gives $O(W\sqrt{\log n/N})$. These rates quantify the estimation term; whether the enlargement improves prediction depends on the approximation error it removes.

\section{Retaining and releasing symmetry}
\label{sec:release}

Once a subgroup has been discovered, the model should treat it as a retained backbone whose release remains testable.

\subsection{Hard, soft, and released models}

Let $H_{\mathrm{ret}}\le G$ be the closed subgroup retained after discovery and task validation. A hard model restricts its predictor or circuit to the corresponding invariant or equivariant class. A released model activates generators with a nonzero component outside the commutant. For a fixed radius constraint on the branch coordinates $\phi$, a soft gate varies their allowed amplitude:
\begin{equation}
U(\theta,\phi;\alpha)=U_{\mathrm{br}}\bigl(s(\alpha)\phi\bigr)U_{\mathrm{sym}}(\theta),
\qquad s(\alpha)=\frac1{1+\alpha},\qquad \alpha\in[0,\infty],
\label{eq:soft_gate}
\end{equation}
with $s(\infty)=0$. The hard model has a closed gate, and $\alpha=0$ allows the full prescribed branch radius. With unrestricted coordinates, any positive gate is only a reparameterization. A penalty can likewise control release when it depends on the current predictor or model state; a discrepancy of fixed input data is constant with respect to training parameters.

For a linear loss $\calL(\beta)=\Tr(Ce^{-i\beta B_\perp}\rho_0e^{i\beta B_\perp})$, define
\begin{equation}
\kappa_B:=\Tr\bigl([B_\perp,[B_\perp,C]]\rho_0\bigr)=-\calL''(0).
\label{eq:release_curvature}
\end{equation}
The expansion is $\calL(\beta)=\calL(0)+\beta\Tr(i[B_\perp,C]\rho_0)-\beta^2\kappa_B/2+O(\beta^3)$. Thus positive $\kappa_B$ gives second-order descent at a stationary release point. Nonlinear supervised losses require their chain-rule Hessian correction. Simultaneous measurement bounds, release amplitudes, and independent model validation determine whether this local gain warrants activation \cite{CompanionRelease}. The geometric question is how a proposed branch couples to the retained backbone.

\subsection{Geometric decoupling at the symmetry release point}
\label{subsec:release_fisher_geometry}

At an invariant pure-state backbone, the Fisher covariance pairs only compatible representation sectors. A generator with zero group average is consequently orthogonal to every retained symmetric tangent. The resulting block structure separates geometric normalization of the backbone and breaking coordinates at the release point. For normalized pure states, our quantum Fisher information matrix convention is \cite{BraunsteinCaves1994,Stokes2020QNG}
\begin{equation}
F_{ab}=4\operatorname{Re}\left(\braket{\partial_a\psi}{\partial_b\psi}-\braket{\partial_a\psi}{\psi}\braket{\psi}{\partial_b\psi}\right).
\label{eq:pure_qfim}
\end{equation}

\begin{theorem}[Fisher block-orthogonality at symmetry release]
\label{thm:fisher_release_orthogonality}
Let $H_{\mathrm{ret}}$ be a compact retained symmetry group represented by unitaries $W_h$ on the Hilbert space.  Suppose the backbone density operator $\rho_{\mathrm{sym}}(\theta)=\ket{\psi_{\mathrm{sym}}(\theta)}\bra{\psi_{\mathrm{sym}}(\theta)}$ is $H_{\mathrm{ret}}$-invariant and its local tangent generators are $H_{\mathrm{ret}}$-symmetric Hermitian operators:
\begin{equation}
W_h\rho_{\mathrm{sym}}(\theta)W_h^\dagger=\rho_{\mathrm{sym}}(\theta),\qquad
\partial_{\theta_j}\ket{\psi_{\mathrm{sym}}(\theta)}
=-iA_j(\theta)\ket{\psi_{\mathrm{sym}}(\theta)},\qquad
W_hA_j(\theta)W_h^\dagger=A_j(\theta).
\end{equation}
Let the released branch be
\begin{equation}
\ket{\psi(\theta,\beta)}
=e^{-i\beta B_\perp}\ket{\psi_{\mathrm{sym}}(\theta)},
\end{equation}
where the breaking generator has no trivial $H_{\mathrm{ret}}$-component,
\begin{equation}
\int_{H_{\mathrm{ret}}} W_hB_\perp W_h^\dagger\,dh=0 .
\label{eq:release_nontrivial_component}
\end{equation}
Then the pure-state quantum Fisher information matrix (QFIM) at $\beta=0$ is exactly block diagonal:
\begin{equation}
\FIM_{\theta_j,\beta}(\theta,0)=0
\qquad \text{for all }j,
\end{equation}
and
\begin{equation}
\FIM(\theta,0)
=
\FIM_{\theta\theta}(\theta,0)
\oplus
4\Var_{\psi_{\mathrm{sym}}(\theta)}(B_\perp).
\label{eq:release_qfim_block}
\end{equation}
Here $\Delta B=B-\langle B\rangle_\theta I$ and $\{A,B\}=AB+BA$. For a multi-parameter released block generated by $B_{\perp,a}$ satisfying \eqref{eq:release_nontrivial_component} for each $a$, the full QFIM has the block form
\begin{equation}
\FIM(\theta,0)=
\begin{pmatrix}
\FIM_{\theta\theta}(\theta,0) & 0\\
0 & \FIM_{\beta\beta}(\theta,0)
\end{pmatrix},
\qquad
(\FIM_{\beta\beta})_{ab}
=
2\,\mathrm{Re}\,
\bra{\psi_{\mathrm{sym}}(\theta)}
\{\Delta B_{\perp,a},\Delta B_{\perp,b}\}
\ket{\psi_{\mathrm{sym}}(\theta)}.
\end{equation}
If the model is simultaneously released along one-parameter breaking branches generated by \(B_{\perp}^{(1)},\ldots,B_{\perp}^{(r)}\), each generator has no trivial \(H_{\mathrm{ret}}\)-component, and each \(B_{\perp}^{(k)}\) lies in a subspace \(\mathcal V_k\subseteq\Herm(\calH)\) that is irreducible as a \emph{real} representation of \(H_{\mathrm{ret}}\) under conjugation, with \(\mathcal V_1,\ldots,\mathcal V_r\) nontrivial and pairwise non-isomorphic, then at \(\bm\beta=0\)
\begin{equation}
\FIM(\theta,\bm 0)
=
\FIM_{\theta\theta}(\theta,0)
\oplus
\bigoplus_{k=1}^{r}
4\Var_{\psi_{\mathrm{sym}}(\theta)}\!\left(B_{\perp}^{(k)}\right).
\end{equation}
The inverse or pseudoinverse therefore acts separately on the retained block and these pairwise non-isomorphic breaking blocks at the release point.
\end{theorem}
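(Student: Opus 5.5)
The plan is to compute every QFIM entry at $\beta=0$ as a symmetrized covariance of tangent generators in the backbone state, and then show that the mixed covariances vanish by Haar averaging over $H_{\mathrm{ret}}$.

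\emph{Tangent generators at $\beta=0$.} From the ansatz we have $\partial_{\theta_j}\ket{\psi}=-iA_j\ket{\psi_{\mathrm{sym}}}$. Since $e^{-i\beta B_\perp}$ reduces to the identity at $\beta=0$, we also have $\partial_\beta\ket{\psi}=-iB_\perp\ket{\psi_{\mathrm{sym}}}$. Substituting these into \eqref{eq:pure_qfim} gives, for Hermitian generators $X$ and $Y$,
\[
F_{XY}=4\,\mathrm{Re}\bigl(\langle XY\rangle-\langle X\rangle\langle Y\rangle\bigr)=2\langle\{\Delta X,\Delta Y\}\rangle .
\]
This is the stated formula for $\FIM_{\beta\beta}$, and in the single-parameter case it gives $4\Var(B_\perp)$.

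\emph{Cross block.} We need $\langle\{\Delta A_j,\Delta B_\perp\}\rangle=\langle\{A_j,B_\perp\}\rangle-2\langle A_j\rangle\langle B_\perp\rangle$. Use invariance of $\rho_{\mathrm{sym}}$ together with $W_hA_jW_h^\dagger=A_j$. This gives
\[
\Tr(\rho_{\mathrm{sym}}\{A_j,B_\perp\})=\Tr(\rho_{\mathrm{sym}}\{A_j,W_h^\dagger B_\perp W_h\})
\]
for every $h$. Averaging over Haar measure replaces $B_\perp$ by its group average. That average is zero by \eqref{eq:release_nontrivial_component}, because inversion $h\mapsto h^{-1}$ preserves Haar measure. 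The same argument gives $\langle B_\perp\rangle=0$. Hence the cross entries vanish. Applying this to each $B_{\perp,a}$ gives the multi-parameter block form.

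\emph{Several branches.} Here the extra claim is that $\langle\{B^{(k)},B^{(l)}\}\rangle=0$ for $k\ne l$; all means are already zero. Consider the real bilinear form $\beta(X,Y)=\Tr(\rho_{\mathrm{sym}}\{X,Y\})$ on $\mathcal V_k\times\mathcal V_l$. It is $H_{\mathrm{ret}}$-invariant because $\rho_{\mathrm{sym}}$ is. The form therefore induces an equivariant real-linear map $\mathcal V_l\to\mathcal V_k^*$. Conjugation is orthogonal for the Hilbert--Schmidt inner product, so $\mathcal V_k^*\cong\mathcal V_k$ as real representations. By real Schur's lemma, the map is an isomorphism or zero, and pairwise non-isomorphism forces it to be zero. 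The diagonal blocks are then $4\Var(B_\perp^{(k)})$. The inverse or pseudoinverse of a block-diagonal matrix is block diagonal, which gives the last sentence.

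\emph{Main obstacle.} The expected difficulty is bookkeeping rather than depth. One point is working over $\R$: real irreducibility and orthogonality are what give the identification $\mathcal V^*\cong\mathcal V$. The other is checking that the group average of a single element of a nontrivial irreducible $\mathcal V_k$ vanishes. This holds because the fixed subspace of $\mathcal V_k$ is a subrepresentation, hence zero.
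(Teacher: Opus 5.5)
Your proposal is correct and follows the paper's proof essentially step for step. Invariance of $\rho_{\mathrm{sym}}$ lets you replace $B_\perp$ by its Haar average inside $\langle B_\perp\rangle$ and $\langle\{A_j,B_\perp\}\rangle$; the paper says the same thing by noting that $A_jB_\perp$ and $B_\perp A_j$ share the nontrivial type of $B_\perp$. The branch--branch covariance then vanishes because the invariant real pairing on $\mathcal V_k\times\mathcal V_l$ is an intertwiner into $\mathcal V_k^*\cong\mathcal V_k$, which real Schur's lemma forces to zero; your explicit treatment of the inversion $h\mapsto h^{-1}$ and of the tangent vectors at $\beta=0$ only spells out details the paper leaves implicit.
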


\begin{proof}
Write $\langle O\rangle_\theta=\bra{\psi_{\mathrm{sym}}(\theta)}O\ket{\psi_{\mathrm{sym}}(\theta)}$.  Since the state is invariant under $H_{\mathrm{ret}}$,
\begin{equation}
\langle O\rangle_\theta
=
\left\langle
\int_{H_{\mathrm{ret}}} W_hOW_h^\dagger\,dh
\right\rangle_\theta
\end{equation}
for every operator $O$.  With $O=B_\perp$, equation \eqref{eq:release_nontrivial_component} gives $\langle B_\perp\rangle_\theta=0$.  Since $A_j(\theta)$ is $H_{\mathrm{ret}}$-symmetric, the operators $A_j(\theta)B_\perp$ and $B_\perp A_j(\theta)$ have the same nontrivial $H_{\mathrm{ret}}$-type as $B_\perp$, so their Haar averages vanish.  Therefore
\begin{equation}
\langle A_j(\theta)B_\perp\rangle_\theta
=
\langle B_\perp A_j(\theta)\rangle_\theta
=0.
\end{equation}
At $\beta=0$, the pure-state QFIM cross entry is the symmetric connected covariance
\begin{equation}
\FIM_{\theta_j,\beta}(\theta,0)
=
2\,\mathrm{Re}\,
\left\langle
\Delta A_j(\theta)\Delta B_\perp
+\Delta B_\perp\Delta A_j(\theta)
\right\rangle_\theta,
\end{equation}
which is zero.  The diagonal release entry follows by applying the same pure-state covariance formula to $B_\perp$.  The multi-parameter retained--release statement is identical for each cross block entry. For distinct one-parameter branches \(k\ne \ell\), the cross Fisher entry is the symmetric connected covariance of \(B_{\perp}^{(k)}\) and \(B_{\perp}^{(\ell)}\).  Since the backbone state is invariant, this covariance equals its Haar average over \(H_{\mathrm{ret}}\).  Conjugation acts orthogonally on $\Herm(\calH)$ for the Hilbert--Schmidt inner product, so each $\mathcal V_k$ is self-dual as a real representation, and an invariant bilinear pairing between $\mathcal V_k$ and $\mathcal V_\ell$ is an element of $\mathrm{Hom}_{H_{\mathrm{ret}}}(\mathcal V_k,\mathcal V_\ell^*)\cong\mathrm{Hom}_{H_{\mathrm{ret}}}(\mathcal V_k,\mathcal V_\ell)$, which vanishes for non-isomorphic real irreducibles.  The averaged cross covariance is therefore zero.  Reality is what makes non-isomorphism sufficient: counting complex types instead gives the wrong answer, since the charge-$\pm q$ representations of $U(1)$ are non-isomorphic yet mutually dual, and the Hermitian operators they span form a single two-dimensional real irreducible whose directions must be treated within one real block rather than separated by their complex charge labels.  The branch-diagonal entries are the pure-state variances.
\end{proof}

The Moore--Penrose inverse respects the retained--breaking Fisher decomposition at the invariant point, which separates geometric normalization of these blocks. The task Hessian can still couple directions within an isotypic block, and its statistical certificate is supplied independently by \cite{CompanionRelease}. The result complements the sector geometry of \cite{ChenQNGQuotient2026}: here the generators cross the commutant boundary. For a single fixed left-applied branch $e^{-i\beta B_\perp}$ the retained--branch Fisher cross block stays zero for all $\beta$, because both tangents are transported by the same unitary and $B_\perp$ commutes with its own flow. General multibranch parameterizations need only decouple at the origin; their subsequent coupling is handled by the Schur-complement rule of \cite{ChenQNGQuotient2026}.

\section{Optimization and finite-size diagnostics}
\label{sec:diagnostics}

After discovery, the remaining question is whether an observed symmetry-breaking signal is physically meaningful, statistically stable, or an optimization artifact.

State-preserving redundant coordinates carry no physical variation. The quantum natural gradient uses the Fisher metric to distinguish their role from weak but physical directions \cite{Amari1998Natural,BraunsteinCaves1994,Stokes2020QNG}. If a regular quotient has vertical tangent space exactly equal to $\ker\FIM$, the Moore--Penrose update
\begin{equation}
\Delta\theta=-\eta\FIM(\theta)^+\nabla\calL(\theta)
\label{eq:qng_update}
\end{equation}
is the minimum-Euclidean-norm horizontal lift of the quotient natural gradient \cite{ChenQNGQuotient2026}. Projection onto a known exact horizontal space keeps each update free of vertical displacement, including with finite-shot gradient estimates. An estimated projector generally leaves leakage. Recovering a local kernel from noisy spectra requires independently supplied gap or structural information, and does not by itself identify a global redundancy group. Nonzero Fisher eigenvalues also depend on coordinate normalization, so comparisons across models require a fixed generator normalization or reference Gram matrix.

\subsection{Observable, Fisher, and sector diagnostics}

For a control parameter $\mu$ and system size $L$, define a diagnostic triplet
\begin{equation}
\mathcal D_L(\mu):=\bigl(D_{\mathrm{sym},L}(\mu),\lambda_{\min,L}^+(\mu),p_{\mathrm{sec},L}(\mu)\bigr).
\label{eq:triplet}
\end{equation}
The first component is a symmetry-deviation statistic, for example an average of $D_{\calO}(g;\rho)$ over recovered generators or over deliberately released generators.  The second is the smallest nonzero eigenvalue of the parameter-space QFIM used by the optimizer.  The third is a representation-sector profile
\begin{equation}
p_{\lambda,L}(\mu):=\Tr(\Pi_{\lambda,L}\rho_L(\mu)),
\qquad
p_{\mathrm{sec},L}(\mu):=\{p_{\lambda,L}(\mu)\}_{\lambda\in\widehat H},
\label{eq:sector_profile}
\end{equation}
where $H$ is the chosen retained or ambient group, $\widehat H$ indexes its irreducible representations, and $\Pi_{\lambda,L}$ projects onto the corresponding state-space isotypic sector. These projectors act on $\calH$, in contrast to the operator-space representation sectors used for breaking generators.

The three components provide complementary diagnostics. Persistent probe deviations witness observable symmetry breaking; state-sector weights track how a perturbation redistributes representation content; the Fisher spectrum measures local distinguishability in the chosen coordinates. Consistency across repeated runs and system sizes strengthens their physical interpretation, but no universal release criterion requires all three to change. For example, a symmetric mixed state can occupy several sectors, and a sector population can change without creating inter-sector coherence. Raw coordinate motion along a known state-null direction is parameter drift.

At a fixed Fisher matrix, finite-shot gradient noise with covariance $\Sigma$ produces
\begin{equation}
\E\|\Delta\theta-\E\Delta\theta\|_2^2
=\eta^2\Tr(\FIM^+\Sigma\FIM^+)
\le\frac{\eta^2\|\Sigma\|_{\mathrm{op}}\rank\FIM}{(\lambda_{\min}^+)^2}.
\label{eq:qng_noise_bound}
\end{equation}
This upper bound quantifies possible noise amplification, rather than a compulsory fluctuation size \cite{ChenQNGQuotient2026}. Comparisons therefore hold the measurement budget and coordinate normalization fixed.

\subsection{Physical quantum Fisher scaling}

Physical unitary-family Fisher information has a finite-size scaling theory related to quantum geometric tensors \cite{CamposVenutiZanardi2007,Gu2010FidelityReview}.  Let
\begin{equation}
\widehat O=\sum_i \widehat o_i
\end{equation}
be a physical operator associated with a breaking direction.  For a pure state,
\begin{equation}
F^{\mathrm{phys}}_{OO}=4\Var(\widehat O)
=4\sum_{i,j}\langle \widehat o_i\widehat o_j\rangle_c.
\end{equation}
Here $\langle AB\rangle_c=\operatorname{Re}\langle AB\rangle-\langle A\rangle\langle B\rangle$ denotes the symmetrized connected correlation for Hermitian operators.

\begin{proposition}[Scaling bridge for a physical breaking operator]
\label[proposition]{prop:qfi_scaling}
Consider translation-invariant pure states on periodic cubic lattices of spatial dimension $d_{\mathrm{lat}}$ and $L$ sites per axis, with bounded local Hermitian observables and on-site variance bounded above and below by positive constants. Let $z$ and $\eta$ denote the dynamical and anomalous correlation exponents. Assume nonnegative connected correlations, with the following two-sided power-law comparison uniformly from microscopic to the largest finite-size distances:
\begin{equation}
\langle \widehat o_i\widehat o_j\rangle_c\asymp\|i-j\|^{-(d_{\mathrm{lat}}+z-2+\eta)},
\end{equation}
Then at linear size $L$,
\begin{equation}
F^{\mathrm{phys}}_{OO}\asymp
\begin{cases}
L^{d_{\mathrm{lat}}+2-z-\eta}, & z+\eta<2,\\[2mm]
L^{d_{\mathrm{lat}}}\log L, & z+\eta=2,\\[2mm]
L^{d_{\mathrm{lat}}}, & z+\eta>2,
\end{cases}
\label{eq:qfi_scaling_cases}
\end{equation}
up to nonuniversal constants and subleading corrections.
\end{proposition}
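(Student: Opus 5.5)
The plan is to expand the variance into the double sum of connected correlations, use translation invariance to reduce it to $N=L^{d_{\mathrm{lat}}}$ times a single lattice sum, and then estimate that sum by comparison with a radial integral. Write $N=L^{d_{\mathrm{lat}}}$ and set $\alpha:=d_{\mathrm{lat}}+z-2+\eta$. On the periodic lattice, $\langle\widehat o_i\widehat o_j\rangle_c$ depends only on the displacement $r=i-j$ taken modulo $L$ in each coordinate. With $\|\cdot\|$ interpreted as the torus distance, the starting identity is
\begin{equation}
F^{\mathrm{phys}}_{OO}=4N\sum_{r\in(\mathbb Z/L)^{d_{\mathrm{lat}}}}C(r),\qquad C(r):=\langle\widehat o_0\widehat o_r\rangle_c .
\end{equation}
This reduces the whole statement to proving $S_L:=\sum_r C(r)\asymp L^{2-z-\eta}$, $\log L$, or $1$ in the three regimes.

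Next, split $S_L$ into the on-site term and the rest. The $r=0$ term is the on-site variance, which is bounded above and below by positive constants by hypothesis. Every other term is nonnegative by assumption, so $S_L$ is bounded below by that constant. This already gives the lower bound $F\gtrsim L^{d_{\mathrm{lat}}}$ in every case.

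For $r\neq0$, apply the two-sided comparison $C(r)\asymp\|r\|^{-\alpha}$, which holds uniformly for $1\le\|r\|\lesssim L$. Count lattice points in torus shells: the number of $r$ with $\|r\|\in[k,k+1)$ is $\asymp k^{d_{\mathrm{lat}}-1}$ for $1\le k\le cL$. This holds uniformly because the torus fundamental domain contains the ball of radius $L/2$ and is contained in the ball of radius $\sqrt{d_{\mathrm{lat}}}L/2$, and the shell counts in the corner region are still $O(k^{d_{\mathrm{lat}}-1})$. It follows that
\begin{equation}
\sum_{r\ne0}C(r)\asymp\sum_{k=1}^{cL}k^{d_{\mathrm{lat}}-1-\alpha}=\sum_{k=1}^{cL}k^{1-z-\eta}.
\end{equation}
Compare this sum with $\int_1^{L}t^{1-z-\eta}\,dt$, which is $\asymp L^{2-z-\eta}$ when $z+\eta<2$, $\asymp\log L$ when $z+\eta=2$, and $O(1)$ when $z+\eta>2$. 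Adding the $r=0$ term and multiplying by $4N$ gives the three cases of \eqref{eq:qfi_scaling_cases}. In the first two cases the growing sum dominates the constant on-site term. In the third case the upper bound $O(1)$ combines with the on-site lower bound to give $\asymp1$.

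The main obstacle is making the lattice-sum-to-integral comparison uniform in the first regime, particularly its upper bound near the torus boundary where the power-law form is only assumed up to the largest finite-size distance. I would handle this by bounding the number of $r$ with torus norm in $[k,k+1)$ by $C k^{d_{\mathrm{lat}}-1}$ for all $k\le\sqrt{d_{\mathrm{lat}}}L$, which costs only constants. The exponent $1-z-\eta>-1$ in that regime is what makes the shell sum dominated by the largest shells, giving $L^{2-z-\eta}$ up to constants. The constants hidden in $\asymp$ depend on $d_{\mathrm{lat}}$, $z+\eta$, and the comparison constants, consistent with the statement's ``nonuniversal constants.''
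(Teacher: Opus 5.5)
Your proposal is correct and follows essentially the same route as the paper: translation invariance reduces the double sum to $L^{d_{\mathrm{lat}}}$ times a single displacement sum, and comparison with the radial integral $\int_1^L r^{1-z-\eta}\,dr$ gives the three regimes. Your separate handling of the on-site variance, which supplies the lower bound when $z+\eta>2$, and your torus shell counting fill in details that the paper's shorter argument leaves implicit.
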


\begin{proof}
Translation invariance gives
\begin{equation}
\sum_{i,j}\langle \widehat o_i\widehat o_j\rangle_c
\asymp L^{d_{\mathrm{lat}}}\sum_r\langle \widehat o_0\widehat o_r\rangle_c.
\end{equation}
Approximating the remaining sum by an integral yields
\begin{equation}
L^{d_{\mathrm{lat}}}\int_1^L r^{d_{\mathrm{lat}}-1}r^{-(d_{\mathrm{lat}}+z-2+\eta)}\,\mathrm{d}r
=L^{d_{\mathrm{lat}}}\int_1^L r^{1-z-\eta}\,\mathrm{d}r.
\end{equation}
The integral grows as $L^{2-z-\eta}$ when $z+\eta<2$, grows logarithmically when $z+\eta=2$, and is bounded when $z+\eta>2$, giving \eqref{eq:qfi_scaling_cases}.
\end{proof}

The exponent in \cref{prop:qfi_scaling} belongs to the physical unitary family generated by $\widehat O$. Relating it to a circuit Fisher eigenvalue requires a controlled tangent map and a specified coordinate normalization. A single aligned direction supplies a directional Fisher value, not necessarily the smallest eigenvalue of a multi-parameter matrix.

\subsection{Sector drift from symmetry-breaking perturbations}

\begin{theorem}[Local sector drift from a breaking generator]
\label{thm:sector_drift}
Let $\ket{\psi_{\mathrm{sym}}}$ lie in the trivial sector, and let $\widehat O=\widehat O^\dagger$ satisfy
\begin{equation}
\Pi_q\widehat O\ket{\psi_{\mathrm{sym}}}=\widehat O\ket{\psi_{\mathrm{sym}}},
\qquad
\Pi_\lambda\widehat O\ket{\psi_{\mathrm{sym}}}=0\quad(\lambda\ne q)
\end{equation}
for one nontrivial sector $q$.  For
\begin{equation}
\ket{\psi(\phi)}=e^{-i\phi\widehat O}\ket{\psi_{\mathrm{sym}}},
\qquad
p_q(\phi)=\bra{\psi(\phi)}\Pi_q\ket{\psi(\phi)},
\end{equation}
the sector orthogonality implies $\langle\widehat O\rangle_{\psi_{\mathrm{sym}}}=0$, and
\begin{equation}
p_q(\phi)=\frac{\phi^2}{4}F^{\mathrm{phys}}_{OO}(0)+O(\phi^4).
\end{equation}
The fourth-order remainder holds at fixed system size under the stated single-sector support condition.
\end{theorem}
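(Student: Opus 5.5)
The plan is to expand $p_q(\phi)$ as a power series in $\phi$ and read off the coefficients. Two facts drive the computation: sector orthogonality, which kills the odd-order terms, and the single-sector support condition, which fixes the quadratic coefficient.

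\emph{Step 1: the mean vanishes.} Write $\ket{\psi_0}=\ket{\psi_{\mathrm{sym}}}$. Since $\ket{\psi_0}$ lies in the trivial sector, $\Pi_0\ket{\psi_0}=\ket{\psi_0}$. Since $q$ is nontrivial, $\Pi_q\Pi_0=0$. Hence
\begin{equation}
\langle\widehat O\rangle_{\psi_0}=\bra{\psi_0}\Pi_0\Pi_q\widehat O\ket{\psi_0}=0.
\end{equation}
It follows that $F^{\mathrm{phys}}_{OO}(0)=4\Var(\widehat O)=4\|\widehat O\psi_0\|^2$.

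\emph{Step 2: expand $p_q$.} At fixed system size $\widehat O$ is bounded, so $e^{-i\phi\widehat O}$ is entire. Hence $p_q(\phi)=\bra{\psi_0}e^{i\phi\widehat O}\Pi_qe^{-i\phi\widehat O}\ket{\psi_0}$ is a real-analytic function, which I expand as $\sum_k c_k\phi^k$. The coefficients come from the nested commutator series
\begin{equation}
e^{i\phi\widehat O}\Pi_qe^{-i\phi\widehat O}=\sum_k \frac{(i\phi)^k}{k!}\,\mathrm{ad}_{\widehat O}^k(\Pi_q),
\end{equation}
or equivalently by expanding $\Pi_qe^{-i\phi\widehat O}\ket{\psi_0}=\sum_k\frac{(-i\phi)^k}{k!}\Pi_q\widehat O^k\ket{\psi_0}$ and taking the norm squared.

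\emph{Step 3: low-order coefficients.}
\begin{itemize}
\item $c_0=\bra{\psi_0}\Pi_q\ket{\psi_0}=0$.
\item The $k=0$ amplitude $\Pi_q\ket{\psi_0}$ vanishes, so every cross term involving it drops out. In particular $c_1=0$.
\item The surviving second-order term is $\|\Pi_q\widehat O\psi_0\|^2=\|\widehat O\psi_0\|^2=F^{\mathrm{phys}}_{OO}(0)/4$, using the support condition $\Pi_q\widehat O\ket{\psi_0}=\widehat O\ket{\psi_0}$.
\end{itemize}

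\emph{Step 4: the cubic term (main obstacle).} Writing $a_k=\Pi_q\widehat O^k\psi_0$, the cubic coefficient is proportional to $\operatorname{Re}\bigl(\overline{(-i)}(-i)^2\langle a_1,a_2\rangle\bigr)$. Since $\overline{(-i)}(-i)^2=-i$, this equals a constant times $\operatorname{Im}\langle a_1,a_2\rangle$. Now $\langle a_1,a_2\rangle=\bra{\psi_0}\widehat O\Pi_q\widehat O^2\ket{\psi_0}$, and the support condition gives $\widehat O\Pi_q=\bigl(\Pi_q\widehat O\bigr)^\dagger$ on $\psi_0$. Therefore $\langle a_1,a_2\rangle=\bra{\psi_0}\widehat O^3\ket{\psi_0}$, which is the expectation of a Hermitian operator and hence real. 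So the imaginary part vanishes.

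A cleaner route uses time reversal. Taking the adjoint of the series, $p_q(-\phi)=\bra{\psi_0}e^{-i\phi\widehat O}\Pi_qe^{i\phi\widehat O}\ket{\psi_0}$. This equals the complex conjugate of the expression defining $p_q(\phi)$, provided there is an antiunitary symmetry fixing $\psi_0$; that is not given, so it cannot be used directly. Instead I check parity by hand: the odd-$k$ terms $i^k\bra{\psi_0}\mathrm{ad}^k_{\widehat O}(\Pi_q)\ket{\psi_0}$ must each be real. For $k=3$ this reduces, via $\Pi_q\psi_0=0$ and the support condition, to combinations of $\bra{\psi_0}\widehat O^a\Pi_q\widehat O^b\ket{\psi_0}$ with $a+b=3$. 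The support condition turns each of these into the real number $\langle\widehat O^3\rangle$, and the combination cancels. This is the step I expect to need the most care.

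\emph{Step 5: conclude.} With $c_0=c_1=c_3=0$ and $c_2=F^{\mathrm{phys}}_{OO}(0)/4$, analyticity gives a remainder of order $O(\phi^4)$, with a constant depending on $\|\widehat O\|_{\mathrm{op}}$ and hence on the fixed system size. This yields $p_q(\phi)=\frac{\phi^2}{4}F^{\mathrm{phys}}_{OO}(0)+O(\phi^4)$.
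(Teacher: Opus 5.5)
Your proposal is correct and is essentially the paper's proof. You expand $\Pi_q\ket{\psi(\phi)}$ using $\Pi_q\ket{\psi_{\mathrm{sym}}}=0$, obtain the quadratic coefficient $\|\widehat O\ket{\psi_{\mathrm{sym}}}\|^2=\langle\widehat O^2\rangle$ from the support condition, and kill the cubic cross term because $\langle a_1,a_2\rangle=\langle\widehat O^3\rangle$ is real. The time-reversal aside and the nested-commutator recheck in Step 4 are unnecessary: the odd commutator terms are automatically real, and your cancellation shows they vanish, which the amplitude computation had already established.
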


\begin{proof}
Put $v=\widehat O\ket{\psi_{\mathrm{sym}}}$ and $w=\widehat O^2\ket{\psi_{\mathrm{sym}}}$. Since $\Pi_qv=v$ and $\Pi_q\ket{\psi_{\mathrm{sym}}}=0$,
\begin{equation}
\Pi_q\ket{\psi(\phi)}=-i\phi v-\frac{\phi^2}{2}\Pi_qw+O(\phi^3).
\end{equation}
The quadratic contribution is $\phi^2\|v\|^2$. Its cubic cross term vanishes because $\ip{v}{\Pi_qw}=\ip{v}{w}=\langle\widehat O^3\rangle$ is real. Thus $p_q(\phi)=\phi^2\langle\widehat O^2\rangle+O(\phi^4)$. Orthogonality to the initial sector gives $\langle\widehat O\rangle=0$, and the pure-state identity $F^{\mathrm{phys}}_{OO}=4\Var(\widehat O)$ proves the claim.

\end{proof}

\subsection{Shadow estimation of physical Fisher signals}

\begin{proposition}[Shadow estimation of local order-parameter Fisher signals]
\label[proposition]{prop:shadow_fisher}
Let
\begin{equation}
\widehat O_L=\sum_{i=1}^{m_L}P_i,
\qquad m_L\le C_m L^{d_{\mathrm{lat}}},
\end{equation}
where each $P_i$ is a Pauli string of support at most $w$ and operator norm one. On a pure input state, define the physical unitary-family Fisher information
\begin{equation}
F_L:=4\Var(\widehat O_L)
=4\left(\sum_{i,j}\left\langle \frac{P_iP_j+P_jP_i}{2}\right\rangle-\left(\sum_i\langle P_i\rangle\right)^2\right).
\end{equation}
For any $0<\delta<1$ and $0<\varepsilon\le1$, local Pauli classical shadows estimate all expectations $\{\langle P_i\rangle,\langle (P_iP_j+P_jP_i)/2\rangle\}$ to additive error at most $\varepsilon$ using
\begin{equation}
N=O\!\left(3^{2w}\varepsilon^{-2}\log\frac{2m_L^2}{\delta}\right)
\end{equation}
measurements.  The resulting estimator $\widehat F_L$ satisfies, with probability at least $1-\delta$,
\begin{equation}
|\widehat F_L-F_L|\le 16m_L^2\varepsilon.
\end{equation}
If $F_L\ge cL^\gamma$ and $0<\varrho\le1$, choosing $\varepsilon=\min\{1,(c\varrho/16C_m^2)L^{\gamma-2d_{\mathrm{lat}}}\}$ gives relative error at most $\varrho$ with
\begin{equation}
N=O\!\left(3^{2w}L^{(4d_{\mathrm{lat}}-2\gamma)_+}\varrho^{-2}\log\frac{2L}{\delta}\right),
\end{equation}
where the positive part only records that the accuracy requirement stops tightening once $\varepsilon$ is order one.  For the one-dimensional transverse-field Ising critical magnetization, where $d_{\mathrm{lat}}=1$, $w=1$, and $\gamma=7/4$, this becomes $N=O(L^{1/2}\varrho^{-2}\log(L/\delta))$.
\end{proposition}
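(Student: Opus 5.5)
The proof splits into three stages: simultaneous shadow estimation of all one- and two-point Pauli expectations, a deterministic plug-in error bound for the variance, and the calibration of $\varepsilon$ from the assumed Fisher scaling. Throughout, the shadow guarantee is used exactly as in \cref{prop:shadow_constants}.

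\textbf{Stage 1: shadow estimation.} I would list the observables to be estimated as the $m_L$ strings $P_i$ and the symmetrized products $(P_iP_j+P_jP_i)/2$.

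\begin{itemize}
\item If $P_i$ and $P_j$ commute, the product $P_iP_j$ is, up to a sign, a Pauli string. Its support is at most $2w$ and its operator norm is one.
\item If they anticommute, the symmetrized product vanishes and needs no estimation.
\end{itemize}

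There are therefore at most $m_L+m_L^2\le 2m_L^2$ Pauli observables, each of weight at most $2w$. By \cite{Huang2020ClassicalShadows}, each has squared shadow norm at most $3^{2w}$. The median-of-means simultaneous-prediction theorem with a union bound over these observables then gives additive error $\varepsilon$ on all of them with probability at least $1-\delta$. This costs $N=O(3^{2w}\varepsilon^{-2}\log(2m_L^2/\delta))$. As in \cref{prop:shadow_constants}, I would clip every estimate to $[-1,1]$, which keeps all magnitudes bounded by one.

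\textbf{Stage 2: error propagation.} On this event I form $\widehat F_L$ by plugging the estimates into the displayed formula for $F_L$.

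\begin{itemize}
\item The second-moment sum contains $m_L^2$ terms, each with error at most $\varepsilon$. Its contribution to the error is at most $4m_L^2\varepsilon$.
\item For the squared mean, write $a=\sum_i\langle P_i\rangle$ and $\hat a$ for its estimate. Then $|a|,|\hat a|\le m_L$ and $|\hat a-a|\le m_L\varepsilon$. Hence $|\hat a^2-a^2|\le|\hat a-a|\,(|\hat a|+|a|)\le 2m_L^2\varepsilon$, contributing at most $8m_L^2\varepsilon$ after the prefactor $4$.
\end{itemize}

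The total is at most $12m_L^2\varepsilon\le16m_L^2\varepsilon$, which is the claimed deterministic bound.

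\textbf{Stage 3: calibration.} The relative-error claim requires $16m_L^2\varepsilon\le\varrho F_L$. Using $m_L\le C_mL^{d_{\mathrm{lat}}}$ and $F_L\ge cL^\gamma$, the stated choice $\varepsilon=\min\{1,(c\varrho/16C_m^2)L^{\gamma-2d_{\mathrm{lat}}}\}$ suffices.

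This is the step I expect to be the main obstacle: the branch where the minimum equals $1$. There the condition $16m_L^2\le\varrho F_L$ must be checked directly. It holds because $\varepsilon=1$ in the minimum means exactly $16C_m^2L^{2d_{\mathrm{lat}}}\le c\varrho L^\gamma$.

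Substituting $\varepsilon$ into the cost gives $\varepsilon^{-2}=O\bigl(\max\{1,L^{4d_{\mathrm{lat}}-2\gamma}\varrho^{-2}\}\bigr)$. Absorbing $c$ and $C_m$ into the $O$-constant, and using $\log(2m_L^2/\delta)=O(\log(2L/\delta))$ at fixed $d_{\mathrm{lat}}$, yields the stated $N$, with the positive part recording the $\varepsilon=1$ branch.

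For the transverse-field Ising specialization, insert $d_{\mathrm{lat}}=1$, $w=1$ and $\gamma=7/4$. The exponent becomes $4-7/2=1/2$, so $N=O(L^{1/2}\varrho^{-2}\log(L/\delta))$. The value $\gamma=7/4$ itself comes from \cref{prop:qfi_scaling} with $z=1$ and $\eta=1/4$.
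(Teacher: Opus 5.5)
Your proposal is correct and follows the paper's proof. The shared route is simultaneous local-shadow prediction for the Pauli observables of weight at most $2w$, then plug-in error propagation, then substitution of the chosen $\varepsilon$. The differences are cosmetic: you clip the one-point estimates to reach $12m_L^2\varepsilon$, whereas the paper uses $\varepsilon\le1$ to bound the squared-mean error by $3m_L^2\varepsilon$ and obtains exactly $16m_L^2\varepsilon$. Also, the $\varepsilon=1$ branch you flag as the main obstacle needs no separate check, since the minimum never exceeds $(c\varrho/16C_m^2)L^{\gamma-2d_{\mathrm{lat}}}$.
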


\begin{proof}
The simultaneous-prediction theorem for local Pauli shadows estimates each Pauli observable of weight at most $2w$ to additive error $\varepsilon$ with the stated sample complexity \cite{Huang2020ClassicalShadows}.  The symmetrized product $(P_iP_j+P_jP_i)/2$ is either zero or a signed Pauli observable of weight at most $2w$.  If $a_i=\langle P_i\rangle$ and $b_{ij}=\langle (P_iP_j+P_jP_i)/2\rangle$, then $|\widehat a_i-a_i|\le\varepsilon$ and $|\widehat b_{ij}-b_{ij}|\le\varepsilon$ imply
\begin{equation}
\left|\sum_{i,j}(\widehat b_{ij}-b_{ij})\right|\le m_L^2\varepsilon,
\qquad
\left|\sum_i\widehat a_i-\sum_i a_i\right|\le m_L\varepsilon.
\end{equation}
Since $|a_i|\le1$ and $\varepsilon\le1$, the square term contributes at most $3m_L^2\varepsilon$.  Multiplying by the outer factor four gives $16m_L^2\varepsilon$.  Substituting $m_L\le C_mL^{d_{\mathrm{lat}}}$ and the chosen $\varepsilon$ gives the relative-error and measurement-scaling claims.
\end{proof}

\subsection{Transverse-field Ising scaling baseline}
\label{subsec:tfim-baseline}

As a concrete finite-size baseline, consider the one-dimensional transverse-field Ising model with periodic boundary conditions,
\begin{equation}
H(h)=-\sum_{i=1}^{L}Z_iZ_{i+1}-h\sum_{i=1}^{L}X_i,
\end{equation}
which has a $\mathbb Z_2$ spin-flip symmetry and a critical point at $h_c=1$ \cite{Pfeuty1970,Sachdev2011}.  For the symmetry-breaking magnetization $\widehat M=\sum_iZ_i$, the physical Fisher signal is
\begin{equation}
F^{\mathrm{phys}}_{MM}(h,L)=4\Var_{\psi_0(h)}(\widehat M)
\end{equation}
on the symmetric finite-size ground state.  Figure~\ref{fig:tfim_ed} shows an exact-diagonalization baseline: the critical and ordered regimes grow faster with $L$ than the disordered regime.  These exponents characterize the specified magnetization-generated family and provide a physical benchmark for aligned circuit directions.

\begin{figure}[H]
\centering
\begin{tikzpicture}
\begin{loglogaxis}[
    width=0.74\textwidth,
    height=0.48\textwidth,
    xlabel={$L$},
    ylabel={$F^{\mathrm{phys}}_{MM}(h,L)$},
    xmin=7.6, xmax=14.6,
    ymin=70, ymax=900,
    xtick={8,10,12,14},
    xticklabels={8,10,12,14},
    ytick={80,100,200,400,800},
    yticklabels={80,100,200,400,800},
    grid=both,
    legend style={draw=none, fill=none, font=\small, at={(0.03,0.97)}, anchor=north west},
    legend cell align=left]
\addplot[only marks, mark=*, mark size=2.3pt] coordinates {(8,80.707) (10,102.622) (12,123.996) (14,145.070)};
\addlegendentry{$h=1.5$}
\addplot[only marks, mark=square*, mark size=2.2pt] coordinates {(8,158.086) (10,234.124) (12,322.622) (14,423.021)};
\addlegendentry{$h=1.0$}
\addplot[only marks, mark=triangle*, mark size=2.6pt] coordinates {(8,240.524) (10,375.266) (12,539.722) (14,733.927)};
\addlegendentry{$h=0.5$}
\addplot[domain=8:14, samples=80, dashed] {9.155463043361758*x^1.0479644597888862};
\addplot[domain=8:14, samples=80, dashed] {4.078660751341114*x^1.758871767320311};
\addplot[domain=8:14, samples=80, dashed] {3.8095645542821392*x^1.9934649575278782};
\end{loglogaxis}
\end{tikzpicture}
\caption{Exact-diagonalization baseline for the transverse-field Ising chain.  The dots show $F^{\mathrm{phys}}_{MM}=4\Var(\widehat M)$ for $L\in\{8,10,12,14\}$, and the dashed lines are power-law visual fits.  The fitted exponents are approximately $1.048$ for $h=1.5$, $1.759$ for $h=1.0$, and $1.993$ for $h=0.5$, matching the expected disordered, critical, and ordered trends.}
\label{fig:tfim_ed}
\end{figure}

\section{Worked example: the transverse-field Ising chain}
\label{sec:worked_example}

The periodic transverse-field Ising chain permits a joint evaluation of discovery, label validation, and release geometry:
\begin{equation}
H(h)=-\sum_{i=1}^{L}Z_iZ_{i+1}-h\sum_{i=1}^{L}X_i,
\qquad L=8,
\label{eq:tfim_worked}
\end{equation}
whose $\mathbb Z_2$ spin-flip symmetry is generated by $P=\prod_iX_i$.  Two data sources are used.  The unlabeled source is the symmetric ground state $\ket{\psi_0(h)}$.  The labeled source is a pair of pinned states $\ket{\psi_\pm}=\mathrm{GS}\bigl(H(h)\mp\varepsilon M_z\bigr)$ with $M_z=\sum_iZ_i$ and $\varepsilon=0.05$, carrying labels $y=\pm1$ with equal prior.  The candidate dictionary is
\begin{equation}
\calS_0=\{P,\;T,\;R,\;X_0,\;Z_0\},
\end{equation}
where $T$ is the one-site translation, $R$ the site reflection, and $X_0,Z_0$ single-site flips that are visible on the states considered below.  The probe family is $\calO=\{Z_i\}\cup\{Z_iZ_{i+1}\}\cup\{X_i\}$, so $m=3L=24$ Pauli strings of weight at most two, and every dictionary element preserves this probe span as well as Pauli weight. Thus both the subgroup condition of \cref{thm:probe_stabilizer} and the measurement-locality condition of \cref{prop:shadow_constants} hold.

\subsection{Observable level}

On the symmetric ground state at $h=1$ the discrepancy \eqref{eq:Dobs} vanishes on $\{P,T,R\}$ and is bounded away from zero elsewhere,
\begin{equation}
D_{\calO}(X_0)=0.1368,\qquad D_{\calO}(Z_0)=0.0684,\qquad \gamma_{\calO}=0.0684 .
\end{equation}
On the pinned state the spin flip becomes visible, $D_{\calO}(P)=0.7954$, which is exactly $\tfrac43\bar m^2$ with $\bar m=\langle Z_0\rangle=0.7723$, because only the eight $Z_i$ probes out of $24$ change sign under $P$; the invisible set contracts to $\{T,R\}$ and the margin becomes $\gamma_{\calO}=0.0532$.  The symmetric ground-state probe values at $h=1$ satisfy the expected duality relation, $\langle Z_0Z_1\rangle=\langle X_0\rangle=0.6407$.

\Cref{fig:shadow_recovery} is the finite-measurement counterpart.  Pauli shadows are drawn from each state and the invisible set is estimated by thresholding at $\gamma_{\calO}/2$, once with the second-order $U$-statistic of \cref{thm:shadow_recovery} and once with the plug-in square of the empirical mean.  The unbiased estimator reaches high exact-recovery probability at a few hundred snapshots, while the plug-in estimator needs a larger budget in this example because its positive bias $\Var/M$ pushes symmetric elements above the threshold; at $M=300$ on the symmetric ground state the success probabilities are $0.92$ and $0.658$ respectively.  The gap closes as $M$ grows.

\begin{figure}[t]
\centering
\begin{tikzpicture}
\begin{semilogxaxis}[
  width=0.72\textwidth, height=6.2cm,
  xlabel={shadow snapshots $M$}, ylabel={exact recovery probability},
  xmin=40, xmax=1200, ymin=-0.03, ymax=1.05,
  legend pos=south east, legend cell align=left, legend style={font=\footnotesize},
  grid=major, tick label style={font=\footnotesize}, label style={font=\small}]
\addplot[thick, mark=*, mark size=1.6pt] coordinates {(50,0.256)(100,0.544)(150,0.730)(200,0.806)(300,0.920)(500,0.988)(700,1.000)(1000,1.000)};
\addlegendentry{$U$-statistic, symmetric ground state}
\addplot[thick, dashed, mark=square*, mark size=1.5pt] coordinates {(50,0.000)(100,0.008)(150,0.100)(200,0.274)(300,0.658)(500,0.932)(700,0.994)(1000,1.000)};
\addlegendentry{plug-in square, symmetric ground state}
\addplot[thick, mark=triangle*, mark size=2.0pt] coordinates {(50,0.400)(100,0.534)(150,0.690)(200,0.800)(300,0.890)(500,0.976)(700,0.984)(1000,0.998)};
\addlegendentry{$U$-statistic, pinned class state}
\addplot[thick, dashed, mark=diamond*, mark size=2.0pt] coordinates {(50,0.000)(100,0.010)(150,0.058)(200,0.146)(300,0.444)(500,0.856)(700,0.978)(1000,0.996)};
\addlegendentry{plug-in square, pinned class state}
\end{semilogxaxis}
\end{tikzpicture}
\caption{Observable-level dictionary recovery from Pauli classical shadows on the $L=8$ chain at $h=1$, over $500$ independent repetitions per point.  The estimated invisible set is declared by thresholding at $\gamma_{\calO}/2$; a repetition counts as a success only if the recovered set equals the population invisible set exactly.  The population margin is supplied to both threshold rules. The unbiased $U$-statistic performs better on this tested grid; the plug-in square has positive finite-sample bias. These curves compare the estimators at the stated margins and do not establish a different asymptotic recovery rate for the plug-in rule.}
\label{fig:shadow_recovery}
\end{figure}

\subsection{Task level}

The labeled source gives the opposite verdict about the same group element, and the two verdicts differ only through the label representation.  Because $P\rho_+P=\rho_-$ holds exactly, the class average $S=\rho_++\rho_-$ is $\mathbb Z_2$-invariant and the label difference $\Delta=\rho_+-\rho_-$ is odd, so $\|(\mathrm{id}-\calT_H)S\|_{\mathrm{HS}}=0$ and $\calT_H\Delta=0$.  \Cref{prop:cq_breaking_mass} then predicts
\begin{equation}
D_{\mathrm{CQ}}(P;\mathrm{triv})=\tfrac14\|\Delta\|_{\mathrm{HS}}^2
=\tfrac12\bigl(1-\abs{\braket{\psi_+}{\psi_-}}^2\bigr),
\qquad
D_{\mathrm{CQ}}(P;\mathrm{sign})=0 ,
\end{equation}
and exact diagonalization confirms it: $D_{\mathrm{CQ}}(P;\mathrm{triv})=0.5000$, $0.4595$ and $0.0833$ at $h=0.5$, $1.0$ and $1.5$.  Under an invariant scalar label the task therefore rejects the spin flip at every field, most strongly in the ordered phase where the two pinned ground states are nearly orthogonal; under a sign label the task is exactly $\mathbb Z_2$-equivariant.  This is the value of separating the levels: one data set, one group element, and two opposite retention decisions selected entirely by the output representation.  A model that imposes hard invariance here would project away the order parameter it was built to learn, at the irreducible cost of \cref{thm:projection_bias}.

\subsection{Release point}
\label{subsec:worked_release_point}

Take the backbone $\ket{\psi_{\mathrm{sym}}(\theta)}=e^{-i\theta_1M_x}e^{-i\theta_2V}\ket{{+}x}^{\otimes L}$ with $V=\sum_iZ_iZ_{i+1}$, which is $\mathbb Z_2$-invariant with invariant tangent generators, and release along $B_\perp=M_y=\sum_iY_i$, which is odd under $P$ and therefore satisfies \eqref{eq:release_nontrivial_component}.  At $(\theta_1,\theta_2)=(0.37,0.21)$ and $\beta=0$ the quantum Fisher matrix in the coordinates $(\theta_1,\theta_2,\beta)$ is
\begin{equation}
\FIM=\diag(18.628,\;32.000,\;13.108),
\end{equation}
with all cross entries below $10^{-14}$, and the branch block equals $4\Var(M_y)=13.108$ as \eqref{eq:release_qfim_block} requires.  The retained--breaking cross entries vanish at every $\theta$ by parity. The vanishing cross entry within the retained block at this reported point is a property of the chosen backbone, rather than a consequence of the retained--breaking theorem.

\subsection{Diagnosis}

Moving off the release point drives weight into the odd parity sector.  At the product point $\theta=0$, where the backbone is the bare $\ket{{+}x}^{\otimes L}$, the sector profile is closed-form, $p_{\mathrm{odd}}(\beta)=\tfrac12\bigl(1-\cos^L2\beta\bigr)$, and \cref{thm:sector_drift} predicts the initial rate $\beta^2F^{\mathrm{phys}}/4$ with $F^{\mathrm{phys}}=4\Var_{\ket{{+}x}^{\otimes L}}(M_y)=4L=32$.  Exact evaluation gives $p_{\mathrm{odd}}=0.00319$, $0.01964$ and $0.07439$ at $\beta=0.02$, $0.05$ and $0.1$, against the quadratic prediction $0.0032$, $0.0200$ and $0.0800$: the leading behavior is confirmed and the deviation grows with $\beta$ as the quartic term switches on.  Away from $\theta=0$ the same rate holds with the local Fisher value replacing $4L$; at the $(0.37,0.21)$ point of \cref{subsec:worked_release_point} the physical Fisher value is $13.108$, and the drift is smaller in proportion.  The sector component of the diagnostic is therefore quantitative, and the branch Fisher entry is its physical initial-drift coefficient. The magnetization scaling in \cref{fig:tfim_ed} concerns a separate ground-state family and supplies a finite-size physical benchmark; it does not determine the smallest eigenvalue of this variational backbone.

\subsection{Loss curvature and the release decision}

The same symmetry can support different local energy-curvature verdicts on different backbones. Take energy as the loss observable, so that $\rho_0$ in \eqref{eq:release_curvature} is the backbone state.  At the product point $\ket{{+}x}^{\otimes L}$ the release curvature along $M_y$ is positive and closed-form, $\kappa_B=4L(2-h)$, equal to $48.0$, $32.0$ and $16.0$ at $h=0.5$, $1.0$ and $1.5$ and confirmed by exact diagonalization to four digits, whereas on the exact symmetric ground state it is negative at every field, from $-57.4$ at $h=0.5$ to $-10.2$ at $h=1.5$.  Both signs are correct and they answer different questions: a mean-field product state is unstable against the odd direction, while the true ground state is an energy minimum with no descent direction at all.  The central-difference probe of \cite{CompanionRelease}, evaluated here with exact scalar loss expectations, gives $\widehat\kappa_B(\delta)=29.08$, $31.26$, $31.81$ and $31.95$ at $\delta=0.2$, $0.1$, $0.05$ and $0.025$, converging to $32$ at the expected $O(\delta^2)$ rate.

These curvature calculations identify the local task signal. A finite-shot activation decision must additionally control estimation error, the step remainder, and the statistical cost of the enlarged predictor \cite{CompanionRelease}. Discovery supplies the retained group and label action; \cref{thm:fisher_release_orthogonality} supplies their release geometry, and quotient natural gradient addresses independently identified parameter redundancy during training \cite{ChenQNGQuotient2026}.

\section{Permutation and lattice symmetries}
\label{sec:workflow}

For permutation symmetry, adjacent transpositions form a natural candidate dictionary, $\calS_0=\{(i,i+1):1\le i<n\}$. Complete fixed-weight Pauli probe spans are invariant under these swaps. Consequently, exact recovered generators define a projected-state stabilizer subgroup by \cref{thm:probe_stabilizer}. Task validation tests the same swaps together with their label action, and Schur--Weyl duality gives the commutant dimension in \cref{prop:schur_dim}. Local probe invisibility still has to be distinguished from full many-body invariance unless the candidate state variations are separated.

On a finite periodic lattice, the spatial symmetry group often has the form $G_{\mathrm{lat}}=T\rtimes P_{\mathrm{pt}}$, where $T$ is the finite translation group and $P_{\mathrm{pt}}$ is the point group. Primitive translations and point-group generators give a compact dictionary. Site permutations preserve complete fixed-weight probe spans, while more selective geometric probes require a check of their orbit span. Density-wave, nematic, and magnetization observables specify different possible label representations and breaking directions. In quantum convolutional models, equivariant pooling must intertwine the fine- and coarse-lattice actions so that architectural changes have a controlled sector interpretation \cite{CongChoiLukin2019QCNN,Nguyen2024}.

Across these settings, the retained constraint is the closed subgroup generated by transformations supported at both the declared observable and task levels. A joint confidence statement follows by allocating the failure probability across the simultaneous discovery and task events. Approximate tests provide tolerance-dependent statements, with the word bound in \eqref{eq:probe_word_bound} controlling only finite compositions. For continuous candidates, the spectral estimate gives a nearby infinitesimal subspace; noisy basis vectors need not themselves close as an exact Lie algebra. Imposing an exact fitted group therefore requires structural constraints or additional validation of its finite action.

\section{Conclusion}
\label{sec:conclusion}

The symmetry that finite measurements can identify is the stabilizer of the information those measurements retain. An invariant probe span makes this statement exact: projection converts observable invisibility into a closed-subgroup condition and its infinitesimal version into a Lie algebra. The squared-discrepancy structure then gives inverse-gap finite-dictionary recovery, with a matching qubit lower bound for the gap dependence. Together, the structural and statistical results explain when noisy generator tests support an imposed group and how approximate acceptance propagates to finite compositions.

The label action determines what that group means for learning. Characteristic kernels test the full joint law, while the classical--quantum discrepancy tests its encoded mean and connects it exactly to joint-state asymmetry. The binary identities distinguish this quantity from a breaking readout signal and explain the opposite invariant- and sign-label verdicts in the Ising example. Projection bias, norm-constrained readout capacity, and retained--breaking Fisher orthogonality describe the resulting model choice at the function, statistical, and geometric levels. They supply the inputs to statistical release and quotient-space training \cite{CompanionRelease,ChenQNGQuotient2026} without identifying physical symmetry with parameter redundancy.

Further progress requires relating measurement resolution to task loss and fitted group structure. A loss-specific comparison with probe discrepancy would turn statistically valid tolerances into task-calibrated retention thresholds. For non-invariant probe spans, smaller separating families could reduce the cost of validating finite group actions without collecting their full operator orbit. Continuous recovery also leaves a constrained estimation problem: converting a noisy tangent subspace into a group with a certified finite action. These questions concern the information and geometry needed to impose a constraint, beyond the resolution of its individual measurement tests.

\end{document}